\RequirePackage[l2tabu,orthodox]{nag}

\documentclass[11pt,a4paper,onecolumn,oneside,notitlepage]{article}
\usepackage[left=3.25cm,right=3.25cm,top=4cm,bottom=4cm]{geometry}
\usepackage[onehalfspacing]{setspace} 

\usepackage{natbib}
\usepackage[american]{babel}
\usepackage[T1]{fontenc}
\usepackage[utf8]{inputenc}
\usepackage{hyphenat}
\usepackage{microtype}
\usepackage{verbatim}
\usepackage{dsfont}

\usepackage{amsmath}
\usepackage{amsfonts}
\usepackage{amssymb}
\usepackage{amsthm}
\usepackage{nicefrac}
\usepackage{bbm}

\usepackage{float}
\usepackage{makeidx}
\usepackage{tabularx}
\usepackage{graphicx}
\usepackage{epstopdf}
\usepackage{booktabs}
\usepackage{pdflscape}
\usepackage{longtable}
\usepackage{seqsplit}
\usepackage{rotating}
\usepackage{adjustbox}
\usepackage{multirow}
\usepackage{multicol}
\usepackage{dcolumn}
\usepackage[section]{placeins}

\usepackage{soul}
\usepackage{blindtext}
\usepackage{todonotes}
\usepackage{appendix}
\usepackage{titlesec}
\usepackage[shortlabels,inline]{enumitem}
\usepackage{thmtools}
\usepackage{tcolorbox}
\usepackage{colortbl} \usepackage[nopar]{lipsum}
\setlength\parindent{0pt}

\usepackage{enumitem}
\setlist[description]{leftmargin=\parindent,labelindent=\parindent}
\setlist[enumerate]{font=\textnormal}

\usepackage{caption}
\usepackage{subcaption}
\captionsetup{
    labelfont = bf,
    labelsep = colon,
    justification = centerfirst
}

\usepackage{xcolor}
\definecolor{darkRed}{RGB}{144,0,0}
\definecolor{darkBlue}{RGB}{0,0,144}
\definecolor{darkGreen}{RGB}{0,144,0}
\definecolor{darkgray}{rgb}{0.7, 0.7, 0.7}
\definecolor{lightgray}{rgb}{0.9, 0.9, 0.9}

\usepackage{hyperref}
\hypersetup{
    colorlinks=true,
    breaklinks=true,
    bookmarksnumbered=true,
    bookmarksopen=true,
    bookmarksdepth=2,
    citecolor=darkBlue,
    linkcolor=darkBlue,
    urlcolor=darkBlue,
    pdftitle={Tree-based Synthetic Control Methods},
    pdfauthor={Nicolaj Mühlbach},
    pdfproducer={}
}

\hfuzz=20pt

\titleformat{\section}[block]{\large\bfseries\centering}{\thesection.}{0.5em}{}
\titleformat{\subsection}[block]{\normalsize\itshape}{\thesubsection.}{0.5em}{}
\titleformat{\subsubsection}[block]{\normalsize\itshape\centering}{\thesubsubsection.}{0.5em}{}

\renewcommand{\thesection}{\Roman{section}}
\renewcommand{\thesubsection}{\Alph{subsection}}
\renewcommand{\thesubsubsection}{\Alph{subsection}.\arabic{subsubsection}}

\DeclareMathOperator{\Leb}{Leb}

\newcommand{\RomNum}[1]{\uppercase\expandafter{\romannumeral #1\relax}}

\setlength{\tabcolsep}{8pt} 

\theoremstyle{plain}
\newtheorem{theorem}{Theorem}

\newtheorem{lemma}[theorem]{Lemma}
\newtheorem*{remark}{Remark}

\usepackage[linesnumbered,ruled,vlined]{algorithm2e}
\makeatletter
\newcommand{\nosemic}{\renewcommand{\@endalgocfline}{\relax}}\newcommand{\dosemic}{\renewcommand{\@endalgocfline}{\algocf@endline}}\let\oldnl\nl \newcommand{\nonl}{\renewcommand{\nl}{\let\nl\oldnl}}\makeatother

\usepackage[symbol]{footmisc}
\usepackage{amscd}
\usepackage{amstext}
\usepackage{vmargin}

\usepackage{fancyhdr}

\usepackage[mathscr]{euscript}
\usepackage{mathrsfs}
\usepackage{stmaryrd}

\usepackage{xspace}
\usepackage[draft,danish]{fixme}
\usepackage{mathtools}
\usepackage{url}

\newcommand{\dd}{\operatorname{d\mkern-2.5mu}}

\newcommand{\distEq}[1][]{\overset{\smash{\raisebox{-0.5pt}{\ $
				\scriptscriptstyle \mathscr{D}$
			}}}{=}
}

\newcounter{assump}
\setcounter{assump}{0}
\renewcommand{\theassump}{(\Alph{assump})}

\newenvironment{assump}{\refstepcounter{assump}\par\vskip\topsep \noindent\textbf{Assumption~\theassump:}
}{\par\vskip\topskip}

\renewcommand{\thefootnote}{\arabic{footnote}}

\newcommand{\devnull}[1]{}

\setlength{\marginparwidth}{2cm}
\begin{document}

\pagenumbering{Alph}
\newcommand{\mytitle}{\Large{\textbf{Tree-based synthetic control methods: \\Consequences of relocating the US embassy}}}
\renewcommand{\refname}{References}
\long\def\symbolfootnote[#1]#2{\begingroup\def\thefootnote{\fnsymbol{footnote}}\footnote[#1]{#2}\endgroup}

\begin{titlepage}

    \thispagestyle{empty}
    \setlength{\parindent}{0cm}
    \renewcommand{\thefootnote}{\fnsymbol{footnote}}
    
    \begin{center}

        \vspace*{-1cm} 
        \mytitle\symbolfootnote[1]
        {
        \noindent
The authors are grateful to Alberto Abadie, Guido Imbens, Stefan Wager, Christian B. Hansen, Christian Bjørnskov, and Bent Jesper Christensen for helpful comments and suggestions, and to the Center for Research in Econometric Analysis of Time Series (CREATES), the Dale T. Mortensen Center, Aarhus University, and the Danish Council for Independent Research (Grant 9056-00011B and Grant 0166-00020B) for research support.

         }
        
\vspace{0.3333in}
{\normalsize
        \hfill
{Nicolaj Søndergaard Mühlbach}\symbolfootnote[2]{
             		Department of Economics, Massachusetts Institute of Technology, and CREATES.
                Email: \href{mailto:muhlbach@mit.edu}{muhlbach@mit.edu}.}
		\hfill
{Mikkel Slot Nielsen}\symbolfootnote[3]{
            	Department of Statistics, Columbia University, and CREATES. 
            	Email: \href{mailto:m.nielsen@columbia.edu}{m.nielsen@columbia.edu}.}            
		 \hfill}      
		
\vspace*{0.75cm}      

        {
            \normalsize
            This version: \today
        }

        \thispagestyle{empty}
        \vspace*{0cm}
        \begin{singlespace}
            \begin{abstract} 
				 \noindent
We recast the synthetic controls for evaluating policies as a counterfactual prediction problem and replace its linear regression with a nonparametric model inspired by machine learning. The proposed method enables us to achieve accurate counterfactual predictions and we provide theoretical guarantees. We apply our method to a highly debated policy: the relocation of the US embassy to Jerusalem. In Israel and Palestine, we find that the average number of weekly conflicts has increased by roughly 103\% over 48 weeks since the relocation was announced on December 6, 2017. By using conformal inference and placebo tests, we justify our model and find the increase to be statistically significant.             \end{abstract}
        \end{singlespace}
    \end{center}
    \vspace{0.5cm}
    \noindent\textbf{Keywords:} Treatment effects; Program evaluation; Synthetic control; Machine learning; US embassy relocation \\
    \noindent\textbf{JEL Classification:} C14; C21; C54; D02; D74; F51\end{titlepage}

\clearpage
\setcounter{page}{1}
\pagenumbering{arabic}

\section{Introduction}\label{sec:Introduction_tbsc}
In social science, we are often interested in the effects of policy
interventions on aggregate entities to evaluate previous, understand
current, or counsel future policies. The aggregate units may be firms,
organizations, geographic areas, etc. Data often stem from observational
studies. The task of estimating such effects has been heavily studied, and various
methods apply to different data available (for reviews, see, e.g., \citet{Imbens2009}
and \citet{Abadie2018}).
One approach is to compare the treated unit
to a control unit not exposed to the event. One of the first examples
is \citet{Card1990}, who uses Southern US cities as a comparison
group to estimate the effect of an unanticipated Cuban migratory influx
in Miami. However, the design of a comparative case study faces certain
challenges.

\vspace{0.166667in}
First, it is not always transparent how specific control
units are chosen, and the appropriate control may be chosen ex-post.
Running several regressions may lead to publication bias due to multiple comparisons \citep{franco2014publication}.
Second, many of the current methods to evaluate policies are based
on regressions that try to maximize the pre-treatment fit, which may
not generalize well out-of-sample. The situation illustrates the classical
bias-variance trade-off, where methods are often chosen to minimize
bias rather than balancing bias for variance to minimize the mean-squared prediction error. We argue that the problem underlying synthetic controls is in fact a prediction problem, and approaches should be designed specifically to accurately predict the outcome of the treated unit post-treatment in a counterfactual state absent from the treatment. This is especially
useful if pre-treatment inference is not a goal in itself (for a discussion
on recasting economic problems as prediction problems, see, e.g.,  \citet{Kleinberg2015}).
Third, the standard approach to comparative case studies is to specify
a linear functional form to capture the relationship between the treated
unit and the control units. This may be restrictive if we are trying
to answer questions for which no economic model exists. In addition,
the standard approach does not take nonlinearities, especially interactions,
into account except those explicitly modeled by the researcher. If
the process that generates the outcomes for the treated unit in the
pre-treatment period is nonlinear in the control outcomes, the resulting
bias may be severe.

\vspace{0.166667in}
\citet{abadie2010synthetic} solve the first challenge by relying on the ideas of \citet{Abadie2003}. In the presence of a single treated unit
and several control candidates, synthetic controls form a set of weights
such that the weighted average of the control units approximately
matches the treated unit in the pre-treatment period. The same weights
are then channeled to the post-treatment period to estimate a synthetic
control group that constitutes the counterfactual state of the world,
in which the treated unit was not exposed to the treatment.  By restricting weights to be non-negative and sum to one, the method already reduces the risk of overfitting, but ultimately, however, the method is focused on pre-treatment fit (and inference), which may not generalize out of sample. 
\cite{Doudchenko2016} modify the synthetic controls by allowing for a more transparent and flexible regularization, where weights may be negative and are not required to sum to one. In particular,  weights are estimated using the elastic net estimator which, compared to ordinary least squares, tends to shrink the weights towards zero and set some
of them exactly to zero. Especially in moderately-high dimensions,
this approach has shown promise in forecasting studies. Also, the
selection property by zeroing out some weights has attractive interpretations
as it allows researchers to pinpoint which control units that have no explanatory
power when forming the counterfactual outcome.

\vspace{0.166667in}
Both of the above methods, however, specify a linear model that is not capable
of automatically detecting nonlinearities between treated and control units.
In particular, we expect many low-order interactions of the control
outcomes to be informative in explaining the outcomes of the treated
unit. For instance, consider the empirical application in \cite{abadie2010synthetic}
regarding cigarette sales in the US. While the sales in California
may be modeled as a weighted average of the sales in New York and
Florida given a common cigarette consumption pattern along the coasts,
a decrease in sales in New York may be associated with an even bigger
decrease in California given a low period of sales in Florida. This
could happen if the people of California see themselves as trendsetters
in regards to health; when people in both New York and Florida are
reducing their cigarette consumption, people of California want
to reduce their consumption even further.
Note that, in the context of forecasting, it is becoming
natural to include interactions and higher-order
terms when relying on regularization-based estimators---in contrast to synthetic controls. However, important interactions
and higher-order terms can be difficult to anticipate ex-ante. The
kitchen sink approach would be to include all higher-order terms up
to a pre-specified order, e.g., to the third order. This approach quickly faces its own problems, since even with 10 control units, all terms up to third-order would count $\binom{10+3}{3}-1=285$, which is infeasible to handle for most parametric estimators given
a reasonable amount of observations. Thus, if nonlinearities are deemed important
or the researcher does not have the domain knowledge
required to specify an economic (parametric) model, it might be more appropriate to apply a flexible nonparametric prediction method. 

\vspace{0.166667in}
Nonparametric approaches to estimating treatment effects do exist
in the econometric toolbox. Indeed, \cite{Athey2016},
\cite{Wager2018}, and \cite{Athey2019} also rely on ideas from
machine learning to study heterogeneous treatment effects using nonparametric
models. They propose various modifications of the random forests algorithm
of \cite{Breiman2001}. Moreover, their
methods are most suitable when a large set of both observations and
covariates is available as they focus on heterogeneous treatment effects,
whereas we focus on average treatment effects. As another example,
\cite{Hartfort2017} use deep neural nets for counterfactual prediction.
We find, however, that many applications in social science and ours
included do not enjoy the luxury of having sufficiently large datasets
available to apply (deep) neural nets.

\vspace{0.166667in}
We recast the problem of \emph{estimating} a synthetic control as
the problem of \emph{predicting} one, similarly to both \cite{Doudchenko2016}
and \cite{Athey2019} who also advocate for powerful prediction methods. This way, we do not have to rely on linear, parametric models that
potentially misspecify the true underlying model. We choose a popular method
from the machine learning literature, namely the random forests algorithm, which handles interactions and
other nonlinearities automatically. In this way we obtain tree-based synthetic control method as an alternative,
particularly suited for applications where the researcher prefers
accurate post-treatment predictions over the ability to do pre-treatment
inference, and when the empirical question is not guided by any economic
model that can justify specific assumptions on the statistical model. This method also allows us to consider
all potential controls in the donor pool transparently; in particular, if some
control units do not contribute to explaining the treated unit, the
method is flexible enough to leave them out. Furthermore, our method naturally captures nonlinearities between treated and control units ---features which alternatively should have been captured by including a (potentially large) number of interactions and higher-order terms. We provide theoretical guarantees for our method as we establish asymptotic unbiasedness and consistency of the random forests predictions as well as consistency of a corresponding estimator of the average treatment effect. 

\vspace{0.166667in}
We showcase the tree-based synthetic control method by estimating
the effect of relocating the US embassy from Tel Aviv to Jerusalem on
the number of weekly conflicts in Israel and Palestine. It is beyond
our interest to judge the particular political decision, rather we
propose a method to estimate its impact. We use conflict data from
December 28, 2015, to November 3, 2018, for Israel and Palestine as
well as for 11 of the remaining countries in the Middle East as controls.
The data are provided by the Armed Conflict Location \& Event Data
Project \citep{raleigh2010introducing}. Our results indicate that
the weekly number of conflicts has increased by 26 incidents on average
after the relocation was announced on December 6, 2017, until November 3,
2018. This corresponds to more than doubling the number of conflicts. This application highlights the need for nonlinear methods. For instance, when we seek to understand
which periods are similar in terms of the level of conflict, it is
difficult to consider conflict levels in Iraq and Saudi Arabia separately
without an interaction between them. Imagine some violent and frequent
conflicts in the South of Iraq in a given period. The regime of Saudi
Arabia may react by increasing the appearance of police forces in
major cities, and as a result, the number of conflicts falls. If such
interactions matter for the conflict level in Israel and Palestine,
we would incur an omitted variable bias by leaving them out.
We use the recently proposed conformal inference test by \cite{Chernozhukov2017b}
to formally justify our results. The increase is statistically significant
at a 1\% level.

\vspace{0.166667in}
The proposed method uses the pre-treatment periods to estimate the
relationship between the treated and control units and it imposes
this relationship in the post-treatment period, similarly to \cite{abadie2010synthetic}
and \cite{Doudchenko2016}. Our model for the conditional expectation, the canonical random forests regression model, have proved successful in many applications (see, e.g., \cite{Montgomery2018}
for a recent paper in political science, or \cite{Ng2019} in IO). Further, variants of random forests have already been employed in
the treatment effects literature either directly \citep{Athey2016,Wager2018,Athey2019}
or indirectly \citep{Chernozhukov2017a,Chernozhukov2018}. Common
to these papers is that they rely on the unconfoundedness assumption
and assume there is a relationship between outcomes for a given unit
over time (estimated by regressing control unit outcomes in treated
periods on lagged outcomes) that is stable across units. In contrast,
the synthetic control literature assumes there is a relationship between
different units (estimated by regressing treated unit outcomes on
control outcomes) that is stable over time. Our approach falls into
the latter. 
Intuitively, for each period where the treated unit is treated, our
model locates a few corresponding pre-treatment periods based on the
control units and uses the average of the pre-treatment outcomes of
the treated unit as a counterfactual prediction in the post-treatment
period. Stated differently, our model aggregates the pre-treatment
periods into similar subgroups based on the control units. Then, it
computes the average of the outcomes of the treated unit in each of
the subgroups. In the post-treatment period, the model remembers how
to group the periods and assigns the corresponding pre-treatment average
to each of the periods. This gives an estimate of the potential outcome
for the treated unit in the absence of the treatment. Having an estimate
for all periods after the intervention, we compute the average of
the differences between the estimate and the actual outcome, similarly
to \cite{Chernozhukov2018ate}.  

\vspace{0.166667in}
Using the application, we compare our proposed method to state-of-the-art methods on several metrics. We show that our method performs at least on par with competing approaches,  while enjoying the benefits of being more off-the-shelf. In particular, we impose fewer assumptions on the relationship between the treated unit of interest and the units in the donor pool, which may become beneficial when no economic model exists to guide the researcher.  When this relationship is indeed linear, our method is still able to recover it, although the standard methods may be more efficient. All methods considered agree on the magnitude of the treatment effect.

\vspace{0.166667in}
The rest of the paper is organized as follows. Section \ref{sec:Tree-based-Control-Methods} introduces the framework underlying synthetic controls and the tree-based synthetic method is formally presented. In addition,  this section presents theoretical guarantees.  Section \ref{sec:Estimating-the-Consequences} considers
the context of Israel and Palestine and presents the results alongside
several robustness checks. Section \ref{sec:Comparing-Methods} compares
our method to state-of-the-art econometric methods. Section \ref{sec:Conclusion_tbsc}
concludes. All proofs are deferred to the Appendix. 

\section{Synthetic Control Methods}\label{sec:Tree-based-Control-Methods}
\subsection{Framework}
We consider $N+1$ cross-sectional units observed in $T$ periods
and assume without loss of generality that only the first unit is
exposed to the treatment, leaving $N$ units as controls\footnote{We will use \emph{treatment} and \emph{intervention} interchangeably.}.  At any given time $t = 1,\dots, T$, we group the observations as $(X_t,Y_t)$ where $X_t\in \mathcal{X}\coloneqq \mathcal{X}_1\times \cdots \times \mathcal{X}_N$ consists of the $N$ control units and $Y_t\in \mathbb{R}$ is the treated unit. Here $\mathcal{X}_i \subseteq \mathbb{R}$ reflects the support of the $i$th control unit.  We assume that the treatment occurs at time $T_{0}<T$,  leaving the first $T_{0}$ periods as the pre-treatment period.  In many applications, ours included, the treatment may have an effect before implementation
via announcement or anticipation, and $T_{0}$ should be redefined
accordingly. We assume implicitly that the treatment does not affect
the outcome for the control units (cf.\ \ref{a1} and~\ref{aa1} of Appendix~\ref{Sec:Appendix_proof}). For a thorough discussion on this assumption, see e.g. \citet{Rosenbaum2007}. 
Let $Y_{t}^{0}$
denote the potential outcome that would be observed for the treated unit at time $t$ in absence of treatment,  and similarly,  let $Y_{t}^{1}$
denote the potential outcome that would be observed if exposed to
the intervention.  In particular,  we have that
\begin{align}
Y_{t}=\begin{cases}
	Y_{t}^{0} & \text{for }t=1,\ldots,T_{0}\\
	Y_{t}^{1} & \text{for }t=T_{0}+1,\ldots,T.
\end{cases}
\end{align}

We define $\tau_{t}=Y_{t}^{1}-Y_{t}^{0}$ as the effect of the intervention at time $t$ for the treated unit. Assuming that both $(X_t,Y^0_t)$ and $(Y^1_t)$ are stationary processes, the average treatment effect (ATE)  can thus be characterized as
\begin{equation}\label{eq:population ATE}
\tau = \mathbb{E}[\tau_t].
\end{equation}
We remark that, under the mild assumption of ergodicity of the sequences $(Y^0_t)$ and $(Y^1_t)$, a simple consistent estimator of $\tau$ is given by the difference in averages before and after treatment, that is, 
\begin{equation}\label{tauTilde}
\tilde{\tau} = \frac{1}{T-T_0}\sum_{t=T_0 + 1}^T Y^1_t - \frac{1}{T_0}\sum_{t=1}^{T_0}Y^0_t.
\end{equation}
This estimator is based solely on $Y_1,\dots, Y_T$ and does not make use of information carried in the control units $X_t$ for $t=1,\ldots,T$. Alternatively, to make use of the information carried in the control units, we rely on a prediction-based estimator of $\mathbb{E}[Y^0_1]$ where the unobserved (no intervention) outcomes $Y^0_t$, $t=T_0+1,\dots, T$ are imputed by means of regression. The simple unbiased estimate often turns out to be of limited practical value as it fails to deliver a precise estimate of the average treatment effect (for a related discussion in the context of randomized control trials, see, e.g., \cite{Deaton2018}). Another benefit of our approach relative to the simple average is that it unlocks estimates of $\tau_t$ for all $t=T_0+1,\ldots,T$, which are intrinsically interesting to study as the effects may increase or decrease over time.

\vspace{0.166667in}
Regression imputation is a well-known strategy in the context of missing data (see, e.g., \cite{musil2002comparison,shao2002sample}). The basic idea behind this method is that, with $f(x) = \mathbb{E}[Y^0_1\mid X_1=x]$ being the regression function, $f(X_t)$ delivers a ``good'' proxy of $Y^0_t$ for $t=T_0+1,\dots, T$.  Note that $X_{t}$ could include covariates other than the control units as long as they are not affected
by the intervention. For instance, we would not be able to include
stock market indicators for Israel and Palestine. For simplicity,
however, we follow \citet{abadie2010synthetic} and focus on using
the control units. Also note that $f(X_t)$ has the same distribution as $\mathbb{E}[Y_1^0\mid X_1]$ and not $Y^0_1$. However, since their means coincide, this is not an issue in the estimation of the ATE. The resulting estimator $\hat{\tau}$ of $\tau$ that we will employ in the empirical application is simply obtained by replacing the second average in \eqref{tauTilde} by the average over predictions of $Y^0_{T_0+1},\dots, Y^0_T$, i.e.,
\begin{equation}\label{tauHat}
\hat{\tau} = \frac{1}{T-T_0} \sum_{t=T_0+1}^{T} \hat{\tau}_t,
\end{equation}
where $\hat{\tau}_t = Y^1_t - \hat{f}(X_t)$, where $\hat{f}$ is a suitable estimator of $f$. 

\subsection{The tree-based synthetic control method}\label{subsec: Tree-based Control Methods}
The framework outlined above coincides with the idea of \citet{abadie2010synthetic}
to the extent that we also use regressions to estimate the
relationship between treated and control units in the
pre-treatment period and assume that the estimated relationship continues
into the post-treatment period.  However, to estimate the ATE using in \eqref{tauHat},  we choose a popular and flexible estimator, $\hat{f}$, for the regression function called random forests regression \citep{Breiman2001}.  Using random forests as an estimator for $f$ and under suitable regularity assumptions, we prove in the following that $\hat{\tau}$ is a consistent estimator of $\tau$ in the sense that 
\begin{equation}
\hat{\tau}\to \tau\qquad \text{in probability as $T_0,T-T_0 \to \infty$.}
\end{equation}
However, we begin by providing some intuition behind how random forests work in the context of the tree-based synthetic control method.

\vspace{0.166667in}
The cornerstone of random forests is a single regression tree. Regression trees are obtained by recursively partitioning the input space (i.e., the possible values of the control units) into cleverly chosen subsets (nodes), and then they output a constant value for all inputs within the same
terminal node (also called a leaf). Specifically, any (post-treatment) outcome of the control units belongs exactly to one particular leaf, and to form the counterfactual prediction for the treated unit, the model uses the average pre-treatment outcome of the treated unit based on the corresponding outcomes of the control units falling into the same leaf. The recursive partitioning is formed in such a way that the leaf associated to a given outcome can be identified by asking a sequence of questions such as ``Is the outcome of the $i$th control unit above $10$?'', ``Is it below $20$?'', ``Is the outcome of the $j$th control above $15$?'', etc. Indeed, starting from the entire input space, the procedure works by selecting a node to split, a split direction (which of the $N$ control units to ask a question about), and a split position (the level to exceed or stay below). The node to split is called the parent node of the resulting two subnodes, which are also called child nodes. The random forests estimate is simply obtained by averaging over $B$ regression trees, which differ due to exogenous randomness injected in the recursive partitioning procedure.

\vspace{0.166667in}
Figure \ref{fig: decision tree example}
shows an example related to our application. In the example, we divide
the weekly level of conflicts in Israel and Palestine at each period
$t\leq T_{0}$ into bins based on the weekly level of conflicts in
Saudi Arabia and Iraq. Given an observation of the weekly level
of conflicts in Saudi Arabia and Iraq at a new point in time,
say $t'>T_{0}$, we decide which of the four leaves that $t'$ belongs
to. As an example, suppose this observation ends up in the first leaf, Subgroup 1. Our prediction of
the weekly level of conflicts in Israel and Palestine is then the
average of all observations that fall into Subgroup 1 in the pre-treatment
period. Hence, the outcomes for Saudi Arabia and Iraq enter only
in the stratification and, thus, the approach also allows the inclusion
of other covariates.
\begin{figure}[!t]
		\begin{adjustbox}{max totalsize = {\textwidth}{0.3\textheight}, center}
			\includegraphics[width = \textwidth]{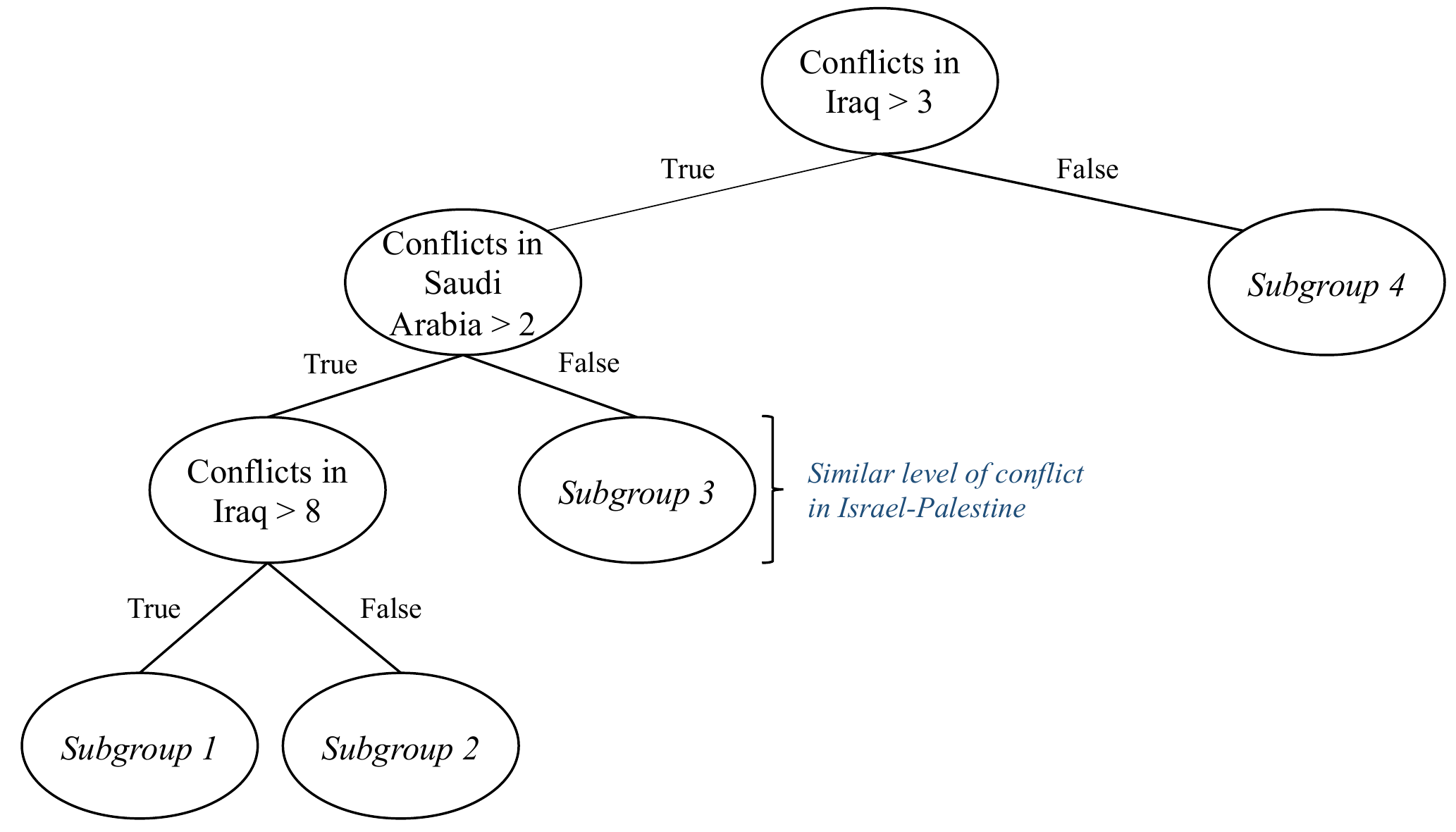}
		\end{adjustbox}
\caption{An Example of a Decision Tree}
\label{fig: decision tree example}
\vspace{0.166667in}
{\footnotesize
\textit{Notes:} As input variables we
consider the level of weekly conflicts in Saudi Arabia and Iraq.
First, we stratify observations depending on whether or not the level
of weekly conflicts in Iraq is above three (thus, ``Iraq'' is the split direction and ``Three'' is the split position). This will place any
observation in one of the resulting two child nodes. Next, we partition one of these nodes by asking whether or not the weekly level of conflicts in Saudi Arabia is above two,
etc. The recursive partitioning procedure results in four distinct subgroups/leaves.\par}
\end{figure}

To continue the example above, a possible
data-generating process (DGP) that fits nicely to this tree-based framework would be
\begin{equation}
Y_{IP,t}^{0}=\beta_{1}Y_{SA,t}+\beta_{2}Y_{IR,t}+\beta_{3}Y_{SA,t}Y_{IR,t}+\varepsilon_{0,t},\label{eq:ex_nonlinear_dgp}
\end{equation}
where $Y_{\cdot,t}$ denotes the conflict level in period $t<T_{0}$,
$IP$ abbreviates Israel-Palestine, $SA$ Saudi Arabia, and $IR$
Iraq. An additive model that does not explicitly take the interaction into account would
suffer from omitted variable bias. On the other hand, the random forests model requires no such knowledge, since it will
automatically detect the (unknown) functional form of the regression function as long as $(Y_{SA,t},Y_{IR,t})$ are included as control units---this will be shown in the theoretical results below.

\subsubsection{Theoretical results}\label{consSection}
Let $\mathcal{D}_{T_0} = \{(X_1,Y^0_1),\dots, (X_{T_0},Y^0_{T_0})\}$ denote the data from the pre-treatment period used to build the estimator $\hat{f}$. In the following  we argue that, under suitable assumptions, various types of random forests are consistent estimators of $f$.  Then, we establish consistency of the tree-based synthetic control method in the context ATE estimation. Details of the presented results, as well as their proofs, can be found in Appendix~\ref{Sec:Appendix_proof}. 

\vspace{0.166667in}
In line with \cite{davis2020rf} and \cite{wager2015adaptive} we consider a subclass of random forests, which we call $(\alpha,k,m)$-forests, parameterized through $\alpha \in (1/2)$, $k\geq 1$, and $m \geq 2k$. The triple $(\alpha,k,m)$ indicates that the trees of the forest obey the following rules:
\begin{enumerate}[(i)]
	\item\label{as1} All leaves contain strictly less than $m$ pre-treatment observations.
	\item\label{as2} No leaf contains less than $k$ pre-treatment observations.
	\item\label{as3} The split position is chosen such that the resulting child nodes contain at least a fraction $\alpha \in (0,1/2)$ of the data points in the parent node.
	\item\label{as4} The probability that a given node is split along the $i$th direction is bounded from below by a strictly positive constant across $i \in \{1,\dots, N\}$.
\end{enumerate}
We will now attach some intuition to \ref{as1}--\ref{as4}. The first rule~\ref{as1} ensures that leaves are not too large. Since we also impose the (rather technical) rule \ref{as3} that splits cannot be too ``unbalanced'', it effectively means that we require a large number splits before getting to a leaf (i.e., many questions should be asked about the outcome of the control units before the associated leaf can be identified). Concerning \ref{as2}, this is imposed to ensure that sample averages within leaves stabilize such that they are not too far from their theoretical counterparts. Finally, \ref{as4} implies that many splits will be placed across any of the $N$ directions of the input space; in other words, the partition associated to the given tree is fine in all directions. 

\paragraph*{Consistency of random forests regression}
Consistency of similar forests was established in \cite{wager2015adaptive} in the setting where $(X_1,Y^0_1),\dots, (X_{T_0},Y^0_{T_0})$ are assumed to be i.i.d., and \cite{davis2020rf} considered an autoregressive setup, $X_t = (Y^0_{t-1},\dots, Y^0_{t-p})$, when $Y^0_t$ is a $p$-th order Markov chain.  None of these settings, however, are suited for our application. Therefore,  in Appendix \ref{Sec:Appendix_proof}, we formally introduce the assumptions needed on the data-generating process to establish Theorem~\ref{consistency} (see Assumption~\ref{dataAssump}).  Most of the assumptions are not restrictive from a practical point of view and are mainly imposed for the sake of simplicity and to avoid other, less transparent, assumptions.  Note that we assume that the sequence $(X_t,Y^0_t)_{t\in \mathbb{Z}}$ is stationary and has exponentially decaying strong mixing coefficients. This is classical when proving asymptotic results, particularly when the results contain information on convergence rates as is the case in Lemma~\ref{treesConcentration} (in such situation, ergodicity is not sufficient). It is satisfied for a wide range of stationary processes; e.g., ARMA processes, Markov chains, and other short-memory time series are included in this setting. We are now ready to formulate our consistency result for $(\alpha, k ,m)$-forests.

\begin{theorem}\label{consistency}
	Let $\hat{f} = \hat{f}(\: \cdot \: ; \mathcal{D}_{T_0})$ be an $(\alpha, k,m)$-forest and suppose that Assumption~\ref{dataAssump} is satisfied. Suppose also that $k/(\log T_0)^4\to \infty$ and $\log (T_0/m)/\log (\alpha^{-1})\to \infty$ as $T_0 \to \infty$. Then
	\begin{equation*}
	\vert \hat{f}(x) - f(x)\vert \leq \delta_1 + \delta_2 (x),
	\end{equation*}
	where
	\begin{enumerate}[(1)]
		\item $\delta_1$ and $\delta_2(x)$, $x\in \mathcal{X}$, are (uniformly) bounded by a constant,
		\item $\delta_1$ does not depend on $x$ and $\delta_1 \to 0$ in probability as $T_0\to 0$, and
		\item $\delta_2(x)\to 0$ as $T_0 \to \infty$ almost surely for each $x \in\mathcal{X}$.
	\end{enumerate}
	In particular, $\hat{f}$ is a pointwise consistent estimator of $f$ in the sense that
	\begin{equation}\label{classicalCons}
		\hat{f}(x) \longrightarrow f(x)\qquad \text{in probability as $T_0\to \infty$}
	\end{equation}
	for any $x\in \mathcal{X}$.
\end{theorem}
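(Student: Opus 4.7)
The plan is to establish the bound $|\hat{f}(x) - f(x)| \leq \delta_1 + \delta_2(x)$ via a bias--variance decomposition applied to each tree of the forest, with $\delta_1$ capturing a \emph{uniform} concentration of sample leaf averages around their conditional expectations and $\delta_2(x)$ capturing a \emph{pointwise} bias that vanishes as the leaf containing $x$ shrinks in all $N$ directions. Averaging over trees preserves this decomposition, so pointwise consistency in~\eqref{classicalCons} then follows immediately from $\delta_1\to 0$ in probability and $\delta_2(x)\to 0$ almost surely, since almost-sure convergence implies convergence in probability.

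First, for a single regression tree with leaf $L(x)$ containing $x$, I would write
\begin{equation*}
T(x) - f(x) \;=\; \underbrace{\frac{1}{|L(x)|}\sum_{t\,:\,X_t \in L(x)} \bigl(Y^0_t - f(X_t)\bigr)}_{V(x)} \;+\; \underbrace{\frac{1}{|L(x)|}\sum_{t\,:\,X_t \in L(x)} \bigl(f(X_t) - f(x)\bigr)}_{B(x)}.
\end{equation*}
The forest average $\hat{f}(x)-f(x)$ inherits the same decomposition, so it is enough to bound $|V(x)|$ and $|B(x)|$ uniformly across the trees of the forest; the first term will produce $\delta_1$, the second $\delta_2(x)$.

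For the variance piece, the residuals $Y^0_t - f(X_t)$ have conditional mean zero and by rule~\ref{as2} every leaf pools at least $k$ observations. Invoking Lemma~\ref{treesConcentration} under the stationary, exponentially strong-mixing data-generating process of Assumption~\ref{dataAssump}, one obtains a concentration inequality of order $\sqrt{(\log T_0)^c / k}$ for some finite $c$, uniformly across the (at most polynomially many) leaves arising in an $(\alpha,k,m)$-forest---the polynomial bound on the number of candidate leaves follows from rule~\ref{as1}. The hypothesis $k/(\log T_0)^4 \to \infty$ then ensures that $\delta_1 := \sup_x |V(x)| \to 0$ in probability, and this quantity is clearly uniformly bounded because $f$ and $Y^0_t$ are bounded under Assumption~\ref{dataAssump}, so properties~(1) and~(2) hold.

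For the bias piece, I would show that $\operatorname{diam}(L(x)) \to 0$ coordinate-wise almost surely. Rule~\ref{as3} guarantees that each split removes at most a fraction $1-\alpha$ of the parent-node data, so the depth of $L(x)$ is at least $\log(T_0/m)/\log(\alpha^{-1})$, which diverges by hypothesis. Rule~\ref{as4} spreads splits across all $N$ coordinates with a uniformly positive lower bound on each direction's split probability, so a Borel--Cantelli argument forces the number of splits in each coordinate to tend to infinity almost surely; combined with the geometric contraction from rule~\ref{as3}, the leaf collapses in every direction. Continuity of $f$ under Assumption~\ref{dataAssump} then yields $B(x)\to 0$ almost surely, and taking $\delta_2(x) := |B(x)|$ establishes~(1) and~(3).

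The main obstacle is precisely this last step in the time-series/mixing context: whereas the i.i.d.\ analyses of \cite{wager2015adaptive} and the autoregressive analysis of \cite{davis2020rf} can exploit a standard uniform convergence of empirical cumulative distribution functions to turn ``many balanced splits in each direction'' into ``small leaf diameters'', here one must establish an analogous uniform ergodic theorem under exponentially strong mixing to ensure that, along the (data-dependent) nested sequence of nodes collapsing toward $x$, the empirical quantiles used to pick split positions concentrate around the true marginal quantiles of $X_{1,i}$. Once such a uniform ergodic statement is in place, combining it with the variance bound yields the claimed decomposition and completes the proof.
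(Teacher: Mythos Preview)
Your approach is essentially correct and mirrors the paper's proof. The only difference is the centering point in the decomposition: you split $T_\Lambda(x)-f(x)$ around the empirical leaf average of $f(X_t)$, whereas the paper centers at the partition-optimal tree $T^\ast_\Lambda(x)=\mathbb{E}_\Lambda[Y^0_1\mid X_1\in A_\Lambda(x)]$, setting $\delta_1=\sup_{(x,\Lambda)\in\mathcal{X}\times\mathcal{V}_k}\vert T_\Lambda(x)-T^\ast_\Lambda(x)\vert$ and $\delta_2(x)=\max_{b}\vert T^\ast_{\Lambda_b}(x)-f(x)\vert$. The paper's choice has the advantage that Lemma~\ref{treesConcentration} applies verbatim to bound $\delta_1$; your $V(x)$ requires the (easy) extra observation that the lemma holds equally for the bounded, mixing residual process $(X_t,Y^0_t-f(X_t))$, whose conditional leaf mean is identically zero. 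For $\delta_2(x)$ both arguments reduce, via Lipschitz continuity of $f$, to $\mathrm{diam}(A_\Lambda(x))\to 0$ almost surely; the paper simply invokes \citet[Lemma~6]{davis2020rf} for this under the present assumptions rather than reproving it, so the ``obstacle'' you raise in your final paragraph is already handled there. One small correction: the polynomial bound on the approximating class of rectangles in Lemma~\ref{treesConcentration} comes from the Wager--Walther grid construction, not from rule~\ref{as1}.
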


\begin{remark}
Under Assumption~\ref{dataAssump} (part \ref{a2}), both $\vert \hat{f}(x)\vert$ and $\vert f(x)\vert$ are bounded by $M$, so the convergence in probability \eqref{classicalCons} is equivalent to convergence in $\gamma$th order mean, i.e., 
\begin{equation*}
\mathbb{E}[\vert \hat{f}(x) - f(x)\vert^\gamma]\longrightarrow 0,\qquad T_0\to \infty,
\end{equation*}
for an arbitrary $\gamma \in (0,\infty)$. In particular, the estimator $\hat{f}(x)$ is asymptotically unbiased; $\mathbb{E}[\hat{f}(x)]\to f(x)$ as $T_0 \to \infty$.
\end{remark}

While it is always of particular interest to know about convergence rates as well, we note that this depends heavily on the rate of the random forests estimator $\hat{f}$. Such results are very difficult to prove and exist only in idealized settings---in particular, results are only available in the case of independent observations and the restrictions on the random forests algorithm are often rather strict and unrealistic in practice.

\paragraph*{Consistency of the tree-based synthetic control method}
We can now apply Theorem~\ref{consistency} to prove consistency of $\hat{\tau}$ as given in \eqref{tauHat}. We will, however, need a slightly stronger assumption than Assumption~\ref{dataAssump}. In particular,  we assume that the sequence $(X_t,Y_t^0)_{t\in \mathbb{Z}}$ has exponentially decaying $\beta$-mixing coefficients,  which is imposed to be able to estimate certain expectations.  Many stationary processes satisfy the $\beta$-mixing condition as well; e.g., ARMA and Markov processes. In addition, we assume that the sequence $(Y^1_t)_{t\in \mathbb{Z}}$ is ergodic, and $\mathbb{E}[\vert Y^1_t\vert ]<\infty$, which is needed for the sample average $\sum_{t=T_0+1}^TY^1_t$ to converge to $\mathbb{E}[Y^1_1]$ (and for the latter to be well-defined and finite). The full set of assumptions needed to establish Theorem \ref{tauConsistency} is provided in Appendix \ref{Sec:Appendix_proof} (see Assumption~\ref{dataAssump2}). We can now formulate our consistency result for tree-based synthetic control methods.

\begin{theorem}\label{tauConsistency}
	Let $\hat{f}=\hat{f}(\: \cdot \: ; \mathcal{D}_{T_0})$ be an $(\alpha , k,m)$-forest, and let $\hat{\tau}$ be given by \eqref{tauHat}. Suppose that Assumption~\ref{dataAssump2} is satisfied and that $k/(\log T_0)^4\to \infty$, $\log (T_0/m)/\log (\alpha^{-1})\to \infty$, and $T-T_0\to \infty$ as $T \to \infty$. Then, $\hat{\tau}$ is a consistent estimator of the ATE $\tau$ in the sense that
	\begin{equation*}
	\hat{\tau} \longrightarrow \tau\qquad \text{in probability as $T \to \infty$.}
	\end{equation*}
\end{theorem}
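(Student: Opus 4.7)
The plan is to decompose $\hat{\tau}$ into the observed post-treatment average and the counterfactual prediction average, to handle each piece separately, and to combine via Slutsky. Writing
\begin{equation*}
\hat{\tau} \;=\; \underbrace{\frac{1}{T-T_0}\sum_{t=T_0+1}^T Y^1_t}_{=:A_T} \;-\; \underbrace{\frac{1}{T-T_0}\sum_{t=T_0+1}^T \hat{f}(X_t)}_{=:B_T},
\end{equation*}
and noting that $\tau = \mathbb{E}[Y^1_1] - \mathbb{E}[Y^0_1] = \mathbb{E}[Y^1_1] - \mathbb{E}[f(X_1)]$ by stationarity and the tower property, it suffices to show $A_T \to \mathbb{E}[Y^1_1]$ and $B_T \to \mathbb{E}[f(X_1)]$ in probability. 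The first limit is immediate: under Assumption~\ref{dataAssump2} the sequence $(Y^1_t)$ is stationary, ergodic, and integrable, so the ergodic theorem applied to the post-treatment block of growing length $T-T_0$ yields $A_T \to \mathbb{E}[Y^1_1]$.

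For $B_T$ I would split off the ``population'' part of the counterfactual,
\begin{equation*}
B_T \;=\; \underbrace{\frac{1}{T-T_0}\sum_{t=T_0+1}^T f(X_t)}_{=:C_T} \;+\; \underbrace{\frac{1}{T-T_0}\sum_{t=T_0+1}^T \bigl[\hat{f}(X_t)-f(X_t)\bigr]}_{=:D_T}.
\end{equation*}
Since $(X_t)$ inherits stationarity and exponentially decaying $\beta$-mixing (hence ergodicity) from $(X_t,Y^0_t)$, and $|f|\leq M$ by the boundedness part of Assumption~\ref{dataAssump}, the ergodic theorem again delivers $C_T \to \mathbb{E}[f(X_1)]$. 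The whole difficulty therefore lies in showing $D_T \to 0$ in probability.

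To attack $D_T$ I would invoke the decomposition $|\hat{f}(x)-f(x)| \leq \delta_1 + \delta_2(x)$ from Theorem~\ref{consistency}, giving
\begin{equation*}
|D_T| \;\leq\; \delta_1 \;+\; \frac{1}{T-T_0}\sum_{t=T_0+1}^T \delta_2(X_t).
\end{equation*}
The first summand vanishes in probability by Theorem~\ref{consistency}. The second is the true obstacle, since $\delta_2$ is a functional of the pre-treatment data $\mathcal{D}_{T_0}$ and is therefore \emph{not} independent of the post-treatment inputs $X_t$. I would resolve this by a $\beta$-mixing decoupling: for each $t>T_0$ introduce $X'_t$, a random variable with the marginal law of $X_1$ that is independent of $\mathcal{D}_{T_0}$. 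Since $\delta_2$ is uniformly bounded, the standard total-variation inequality associated with the $\beta$-mixing coefficient yields
\begin{equation*}
\bigl|\mathbb{E}[\delta_2(X_t)] - \mathbb{E}[\delta_2(X'_t)]\bigr| \;\leq\; C\,\beta(t-T_0)
\end{equation*}
for a constant $C$. By independence and Fubini, $\mathbb{E}[\delta_2(X'_t)] = \int \mathbb{E}[\delta_2(x)]\,dP_{X_1}(x)$, and part~(3) of Theorem~\ref{consistency} together with bounded convergence gives $\mathbb{E}[\delta_2(x)]\to 0$ pointwise; a second application of dominated convergence (using the uniform bound on $\delta_2$) yields $\mathbb{E}[\delta_2(X'_t)]\to 0$ as $T_0\to\infty$. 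Averaging the displayed inequality over $t$ and using the exponential decay $\sum_{h\geq 1}\beta(h)<\infty$, the coupling error contributes $O(1/(T-T_0))\to 0$, so the $L^1$-norm of $(T-T_0)^{-1}\sum_t \delta_2(X_t)$ vanishes and consequently $D_T\to 0$ in probability.

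Combining everything, $B_T \to \mathbb{E}[f(X_1)]$ and $A_T \to \mathbb{E}[Y^1_1]$ in probability, so Slutsky gives $\hat{\tau}\to \tau$ in probability, as claimed. The step I expect to require the most care is the coupling estimate in the third paragraph, as it is the only place where the strengthening from $\alpha$-mixing (Assumption~\ref{dataAssump}) to $\beta$-mixing (Assumption~\ref{dataAssump2}) is genuinely used; everything else is the ergodic theorem plus routine bookkeeping.
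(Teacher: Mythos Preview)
Your proposal is correct and follows essentially the same route as the paper: ergodic theorem for the two ``easy'' averages, then an $L^1$ bound on the estimation-error average via a $\beta$-mixing decoupling (the paper cites \cite{yu1994rates}, Lemma~2.6) combined with the $\delta_1+\delta_2(x)$ decomposition of Theorem~\ref{consistency} and dominated convergence. The only cosmetic difference is that the paper applies the coupling to the full quantity $\vert f(X_t)-\hat{f}(X_t)\vert$ and splits into $\delta_1,\delta_2$ afterward, whereas you split first and couple only the $\delta_2$ part; since $\delta_1$ does not depend on $x$, the two orderings are equivalent.
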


\section{Estimating the Effects of relocating the Embassy}\label{sec:Estimating-the-Consequences}
\subsection{Background}
Monday afternoon December 6, 2017, the US President fulfilled a major
campaign promise by announcing the relocation of the embassy from Tel Aviv
to Jerusalem, which took place on May 14, 2018. Many international media
reported intensively on the move that broke with decades of US policy
by recognizing Jerusalem as the capital of Israel, although former
US presidents have also been commenting on the relocation. For instance,
Bill Clinton supported recognizing Jerusalem as the capital and the
principle of moving the embassy there. George W. Bush said before
taking office that he intended to move the embassy, and Barack Obama
spoke of Jerusalem as the capital of Israel that ought to remain undivided.
However, the former presidents all consistently signed waivers to
postpone the move.

\vspace{0.166667in}
The relocation should be viewed as the most recent event in the ongoing
Israeli-Palestinian conflict, dating back to the mid-20th century
in which the Jewish immigration and the sectarian conflict in Mandatory
Palestine between Jews and Arabs took place. In 1948, the establishment
of the State of Israel alongside the State of Palestine was proclaimed,
and US President at the time Harry S. Truman recognized the new nation.
Since 1967, Israel has held all of the pre-war cities of West and
East Jerusalem, and in addition, the Gaza Strip has been under Israel's
control. Ever since, several wars have been fought between the Arab
countries and Israel, and a permanent solution is still to be found.
For a complete review and analysis of the Israeli-Palestinian conflict,
see \cite{Frisch2004} and \cite{Eriksson2018}.

\subsection{Data and sample}
\begin{table}[!t]
\caption{Summary Statistics of Weekly Conflicts in the Middle East (excl. Iran and Syria)}
\begin{adjustbox}{max width = \textwidth, center}
\begin{tabular}{lrrrrrrr}
\toprule
\textbf{Country} & \textbf{Mean} & \textbf{Sd.} & \textbf{Min} & \textbf{Q1} & \textbf{Median} & \textbf{Q3} & \textbf{Max} \\
\midrule 
Israel-Palestine &  32.9 & 18.7 &  8.0 &  20.0 &  29.0 &  41.0 & 106.0 \\    
Bahrain &   6.8 &  6.9 &  0.0 &   1.0 &   5.0 &  11.0 &  31.0 \\    
Iraq &  96.8 & 33.8 & 32.0 &  65.0 &  97.0 & 120.0 & 186.0 \\    
Jordan &   1.4 &  2.6 &  0.0 &   0.0 &   1.0 &   2.0 &  21.0 \\    
Kuwait &   0.1 &  0.4 &  0.0 &   0.0 &   0.0 &   0.0 &   2.0 \\    
Lebanon &   6.2 &  4.8 &  0.0 &   3.0 &   5.0 &   9.0 &  25.0 \\    
Oman &   0.0 &  0.2 &  0.0 &   0.0 &   0.0 &   0.0 &   2.0 \\    
Qatar &   0.0 &  0.1 &  0.0 &   0.0 &   0.0 &   0.0 &   1.0 \\    
Saudi Arabia &  27.8 & 15.8 &  0.0 &  17.0 &  27.0 &  39.0 &  75.0 \\    
Turkey &  46.0 & 75.4 &  6.0 &  22.0 &  34.0 &  51.0 & 777.0 \\    
United Arab Emirates &   0.0 &  0.1 &  0.0 &   0.0 &   0.0 &   0.0 &   1.0 \\    
Yemen & 168.7 & 39.5 & 72.0 & 137.0 & 173.0 & 197.0 & 313.0 \\    
Average (excl. Israel-Palestine) &  32.2 &  8.4 & 17.6 &  28.8 &  31.1 &  34.3 & 100.1 \\ 
\bottomrule
\end{tabular} 
\end{adjustbox}\vspace{0.166667in}
\label{tab:summary_stat}
{\footnotesize 
\textit{Notes:} Summary statistics of
the weekly conflicts in the Middle East, excl. Iran and Syria. Measures
in order of appearance include mean, standard deviation, minimum,
first quartile, median, third quartile, and maximum. The countries
other than Israel-Palestine are grouped as \emph{Average (excl. Israel-Palestine)}.\par}
\end{table}
We use daily country-level panel data in the period December 28, 2015,
to November 3, 2018, on conflicts reported by the Armed Conflict Location
\& Event Data Project \citep{raleigh2010introducing}. The conflicts
cover riots, protests, strategic development, remote violence, violence
against civilians, various types of battles, and headquarter or base
establishments. We consider the aggregate of all conflicts and leave
the disaggregating for further research. The data consist of multiple
daily observations which we aggregate into weekly observations to smooth the daily variations.
We have no other data on a daily or weekly frequency. The treated countries
considered are Israel and Palestine, which we aggregate into one treated
unit to take into account the interdependency of the two countries
\citep{Arnon2001}.\footnote{We sometimes refer to Israel and Palestine
as Israel-Palestine.} Aggregating them into one treated unit rather
than having one of them, say Israel, as a potential control is necessary
to meet the assumption of no interference between units. One may be
interested in the effects on Israel and Palestine separately, leaving
out the other country completely to avoid interference.
Another reason to aggregate Israel and Palestine into one is because several of the reported conflicts happen at the border between the two countries, which favors the aggregation.

\vspace{0.166667in}
An interesting hypothesis is whether the conflicts in Palestine accelerate earlier
than the conflicts in Israel. However, this is hard to measure, as
the conflicts in both countries may be initiated by people from either
place, making it difficult to disentangle the effect in Israel from
the effect in Palestine. As we are interested in the overall effect
in the area, we aggregate the countries for now and leave the other
hypothesis for future research.

\vspace{0.166667in}
The control countries we consider are all the
remaining countries in the Middle East but Syria and Iran, which include
Bahrain, Iraq, Jordan, Kuwait, Lebanon, Oman, Qatar, Saudi Arabia,
Turkey, United Arab Emirates, and Yemen, giving us a total of 11 control
countries. The data coverage for Syria starts from January 2017, and
instead of restricting our sample to begin here, we choose to exclude
Syria. We also exclude Iran because of its involvement in the Israeli-Palestinian
conflict and its relation to the US, which make it too difficult to
justify the assumption of no inference between units (see \cite{Buonomo2018}
for an analysis of the Iran-US relation).

\vspace{0.166667in}
In fact, if we compare the trends in the weekly level of conflicts in Iran and Israel-Palestine
before and after the move of the embassy, the co-movement is clear.
We document the trends in the weekly number of conflicts for all countries
in the Middle East except Syria in Appendix \ref{Sec:Appendix_tbsc}.
The pre-intervention period covers 101 weeks, starting December 28,
2015, and ending December 3, 2017, just before the announcement. The
post-intervention period begins on December 4, 2017, and ends on November
3, 2018, leaving 48 weeks for estimating the average level of conflicts
in Israel and Palestine in the counterfactual situation where the
US embassy is not relocated. Summary statistics for the weekly number
of conflicts across the Middle East countries are provided in Table
\ref{tab:summary_stat}.

\vspace{0.166667in} 
Further, we show the distribution of the
weekly number of conflicts in Israel-Palestine in both the pre-treatment
and post-treatment period in Figure \ref{fig: hist_israel_palestine}.
It follows from Figure \ref{fig: hist_israel_palestine} that the
distribution is shifted to the right in the post-treatment period,
which tentatively suggests that violent weeks tend to occur more often
in the post-treatment period. Last, Figure \ref{fig: weekly aggregates of conflicts in Palestine and Israel}
shows the level of conflicts over time in Israel-Palestine as well
as the average of the remaining countries.  As noted in Section \ref{sec:Tree-based-Control-Methods},  under mild assumption of ergodicity of the sequences $(Y^0_t)$ and $(Y^1_t)$, a simple before-after comparison is sufficient to identify
the average treatment effect, which in this case would be 23.5 weeks.
This simple yet unbiased estimate is roughly in line with the results
we show next. 
\begin{figure}[!t]
		\begin{adjustbox}{max totalsize = {0.7\textwidth}{0.9\textheight}, center}
			\includegraphics[width = \textwidth]{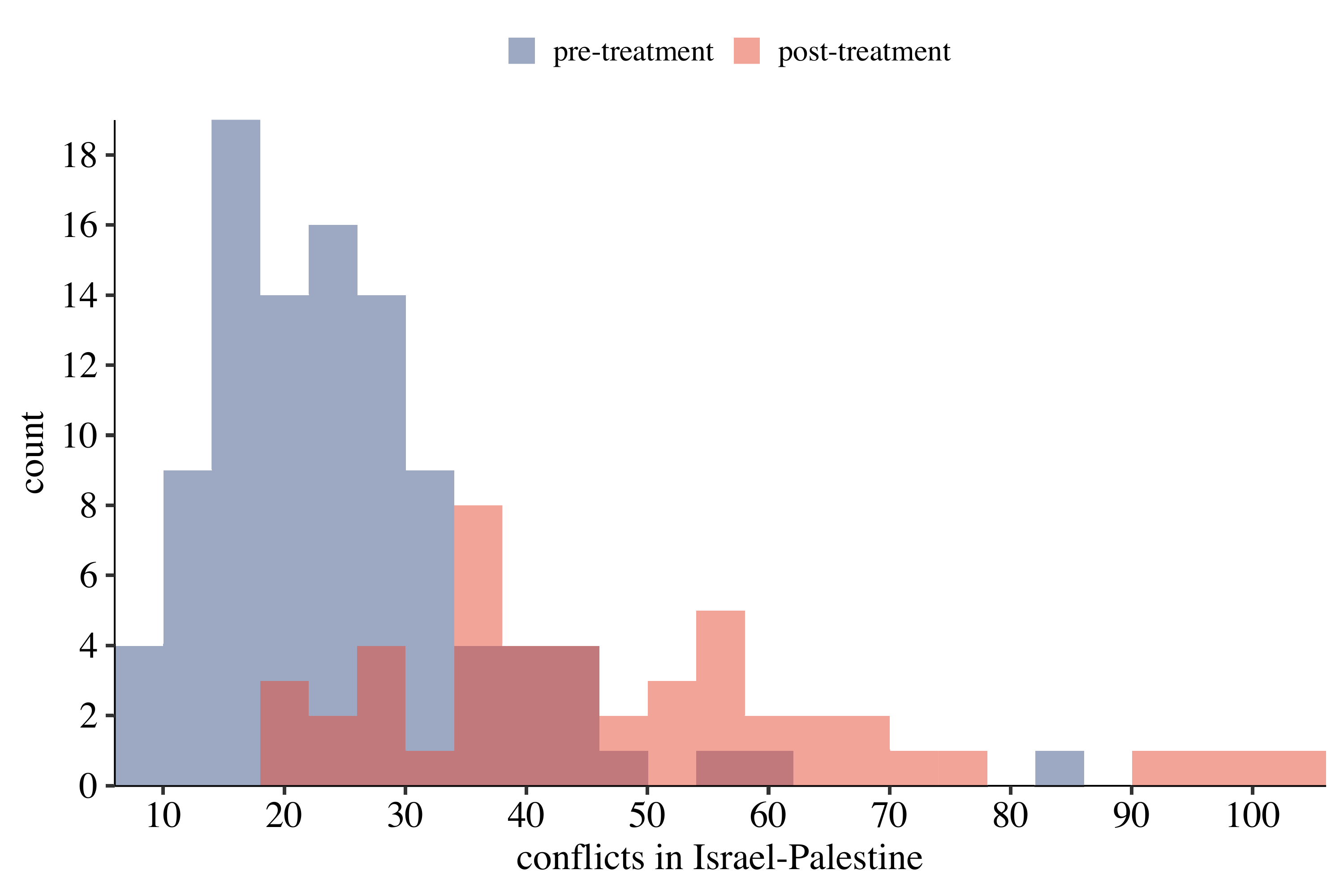}
		\end{adjustbox}
\caption{Distribution of Weekly Conflicts in Israel-Palestine}
\label{fig: hist_israel_palestine}
\vspace{0.166667in}
{\footnotesize
\textit{Notes:} Distribution of weekly
conflicts in Israel and Palestine pre-treatment (blue) and post-treatment
(red). The conflicts cover riots/protests, strategic development,
remote violence, violence against civilians, various types of battles,
and headquarters or base established.\par}
\end{figure}

\subsection{Results}
Our application is motivated by Figure \ref{fig: weekly aggregates of conflicts in Palestine and Israel},
showing the weekly number of conflicts in Israel-Palestine over the
entire sample period. The two vertical lines indicate the date when
the relocation of the US embassy was announced and the date of the actual
move, respectively, and also, we plot the average of the remaining
countries. A couple of observations are worth noting. First, visual
inspection suggests that the average weekly number of conflicts in
Israel-Palestine has in fact increased subsequent to the announcement.
In contrast, the average number of weekly conflicts over the remaining
countries in the Middle East does not appear to follow the same upward
shift after the announcement. We formalize this shortly.

\vspace{0.166667in}
Second, the volatility of the weekly number of conflicts in Israel-Palestine seems
much higher after the announcement, supporting the histogram in Figure
\ref{fig: hist_israel_palestine}. This has important economic implications,
as it indicates that conflicts tend to cluster and that misfortunes never
come singly. Considering the conflicts more closely, for instance
analyzing the degree of violence in the clusters, is interesting, but
we postpone this for future research.

\vspace{0.166667in}
Finally, note the large spike
in the average number of conflicts across the remaining countries
in the Middle East around July 2016. Specifically, the week with the
highest average number of conflicts runs from July 18 to July 24,
which is just after the military coup was attempted in Turkey on July
15 against state institutions, including the government and President
Erdo\u{g}an. During the coup, more than 2,100 people were injured
and over 300 were killed. This rare event shows up in the estimation
for some methods that are exposed to outliers.
\begin{figure}[!t]
		\begin{adjustbox}{max totalsize = {0.8\textwidth}{0.9\textheight}, center}
			\includegraphics[scale=1,width = \textwidth]{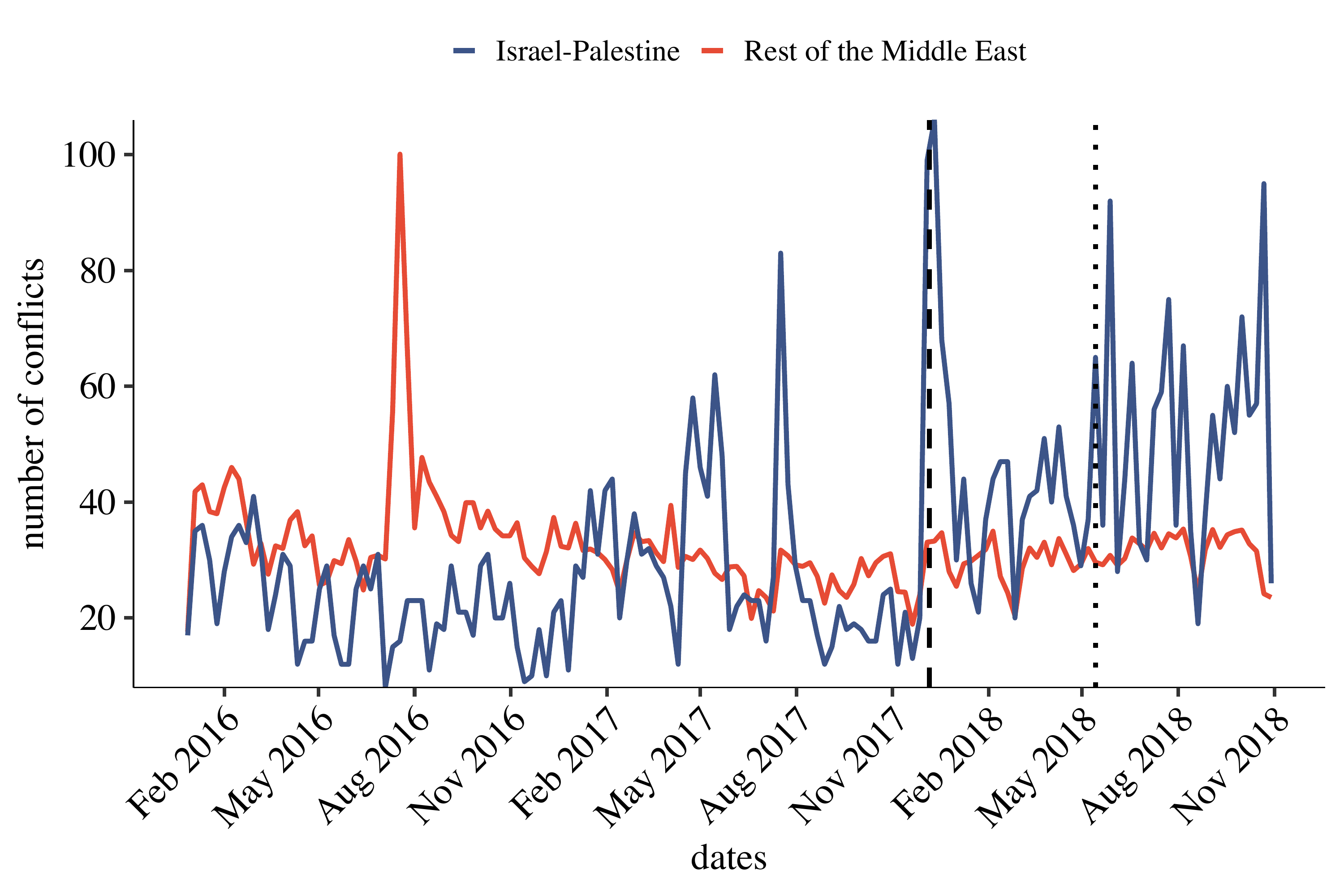}
		\end{adjustbox}
\caption{Weekly Number of Conflicts in Israel-Palestine and the Middle East}
\label{fig: weekly aggregates of conflicts in Palestine and Israel}
\vspace{0.166667in}
{\footnotesize
\textit{Notes:} Weekly number of conflicts
in Israel and Palestine (blue line) in addition to the average of
the remaining countries in the Middle East (red line). The vertical
dashed and dotted lines represent the date when the move of the US
embassy was announced and the date of the actual move, respectively.\par}
\end{figure}

\vspace{0.166667in}
Figure \ref{fig: weekly level of conflict Palestine-Israel and synthetic counterpart}
displays the weekly number of conflicts for Israel-Palestine and its
estimated counterpart during the period December 28, 2015, to November
3, 2018. The observed level of conflicts in Israel-Palestine is closely
followed by the estimated counterpart in the entire pre-intervention
period until the move was announced on December 4, 2017. This suggests
that the time periods before the announcement can be grouped together
into homogeneous subgroups based on the level of conflicts in the
neighboring countries, and for these subgroups of time periods, the
level of conflicts in Israel and Palestine is relatively constant.
In fact, the average of the observed weekly number of conflicts in
the pre-intervention period is 25.32, whereas the estimated counterpart
is 25.41, indicating an accurate fit on average. Note that the estimated
counterpart to Israel-Palestine is always closer to the average level
of weekly conflicts instead of capturing the spikes to the fullest
extent. The is an attractive feature of the averaging that happens
in our model as the model implicitly becomes conservative.

\vspace{0.166667in}
Altogether, we take this as evidence that the tree-based synthetic
control method can be used to predict a counterfactual Israel-Palestine,
which provides a sensible approximation to the true level of conflicts
that would have occurred in that region in absence of the move. Thus,
we next use the tree-based synthetic control method to estimate the
average treatment effect of moving the embassy.
\begin{figure}[!t]
		\begin{adjustbox}{max totalsize = {0.8\textwidth}{0.9\textheight}, center}
			\includegraphics[scale=1,width = \textwidth]{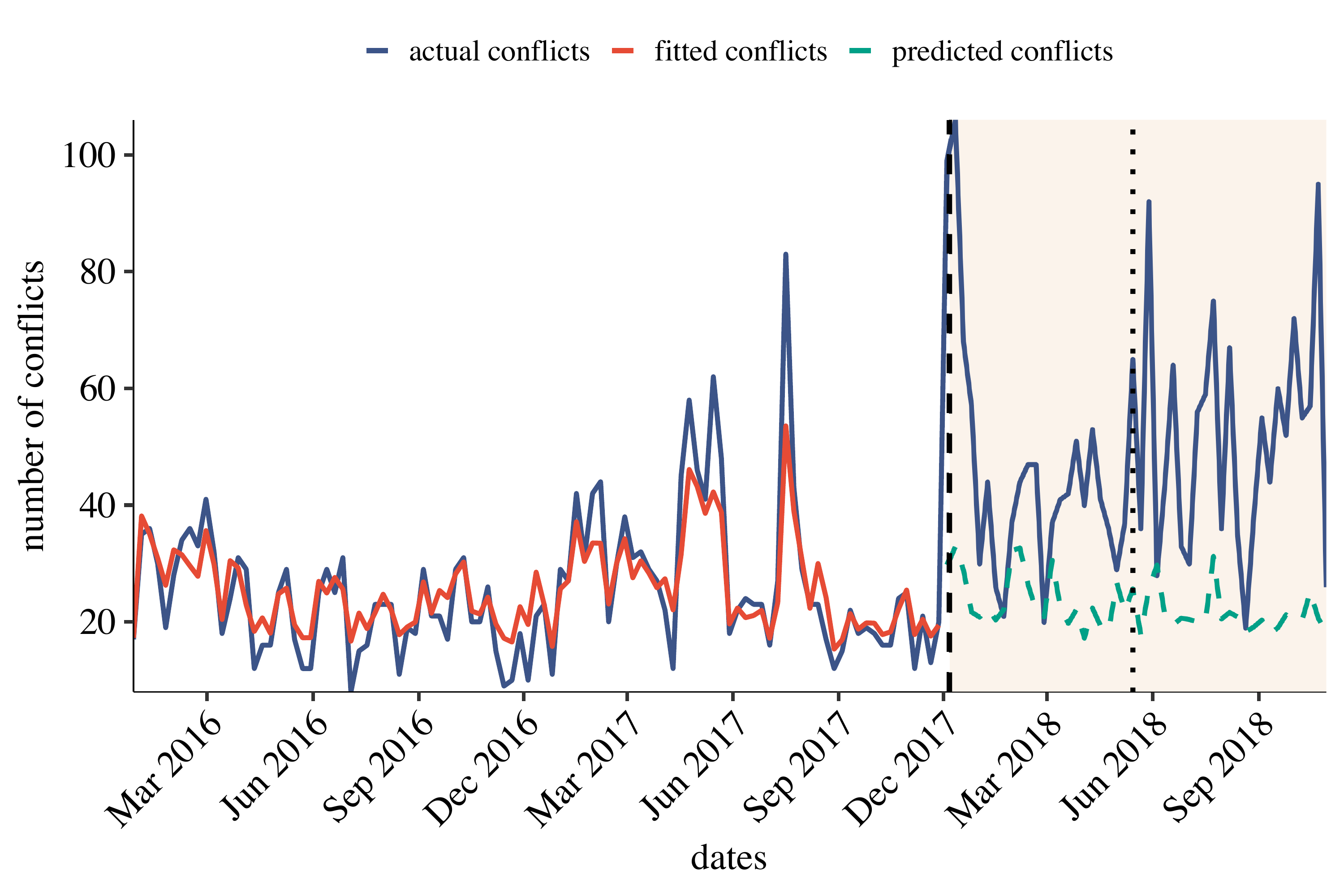}
		\end{adjustbox}
\caption{Weekly Number of Conflicts in Israel-Palestine and its Estimated Counterpart}
\label{fig: weekly level of conflict Palestine-Israel and synthetic counterpart}
\vspace{0.166667in}
{\footnotesize
\textit{Notes:} Weekly number of conflicts
in Israel and Palestine (blue line) and its estimated counterpart
in the pre-intervention period (red line) and post-treatment period
(green dashed line). The vertical dashed and dotted lines represent
the date when the move of the US embassy was announced and the date
of the actual move, respectively.\par}
\end{figure}

\vspace{0.166667in}
We estimate the effect of the relocation of the US embassy for each of the
48 weeks after the announcement as the difference between the observed
level of conflicts in Israel-Palestine and its counterfactual analog.
The differences follow as the discrepancies between the two lines
in the shaded area of Figure \ref{fig: weekly level of conflict Palestine-Israel and synthetic counterpart}.
Immediately after the move is announced, both the observed and counterfactual
level of conflicts increase, but to very different degrees, and in
fact the observed level of weekly conflicts in Israel and Palestine
reaches its maximum level across the entire sample within the first
week of the announcement.

\vspace{0.166667in}
For the rest of the post-announcement period,
the observed level of conflicts sees a higher base level with
distinctly conflict-ridden weeks, whereas the counterfactual Israel-Palestine
maintains the lower base level from the pre-announcement period. Specifically,
the average of the observed number of weekly conflicts in the post-intervention
period is 48.88, whereas the estimated counterpart is 22.78, indicating
a significant difference. This suggests that the relocation of the embassy
has a numerically positive effect on the level of conflicts in Israel
and Palestine, meaning that the level generally increases in the entire
post-announcement period.
\begin{figure}[!t]
		\begin{adjustbox}{max totalsize = {0.8\textwidth}{0.9\textheight}, center}
			\includegraphics[scale=1,width = \textwidth]{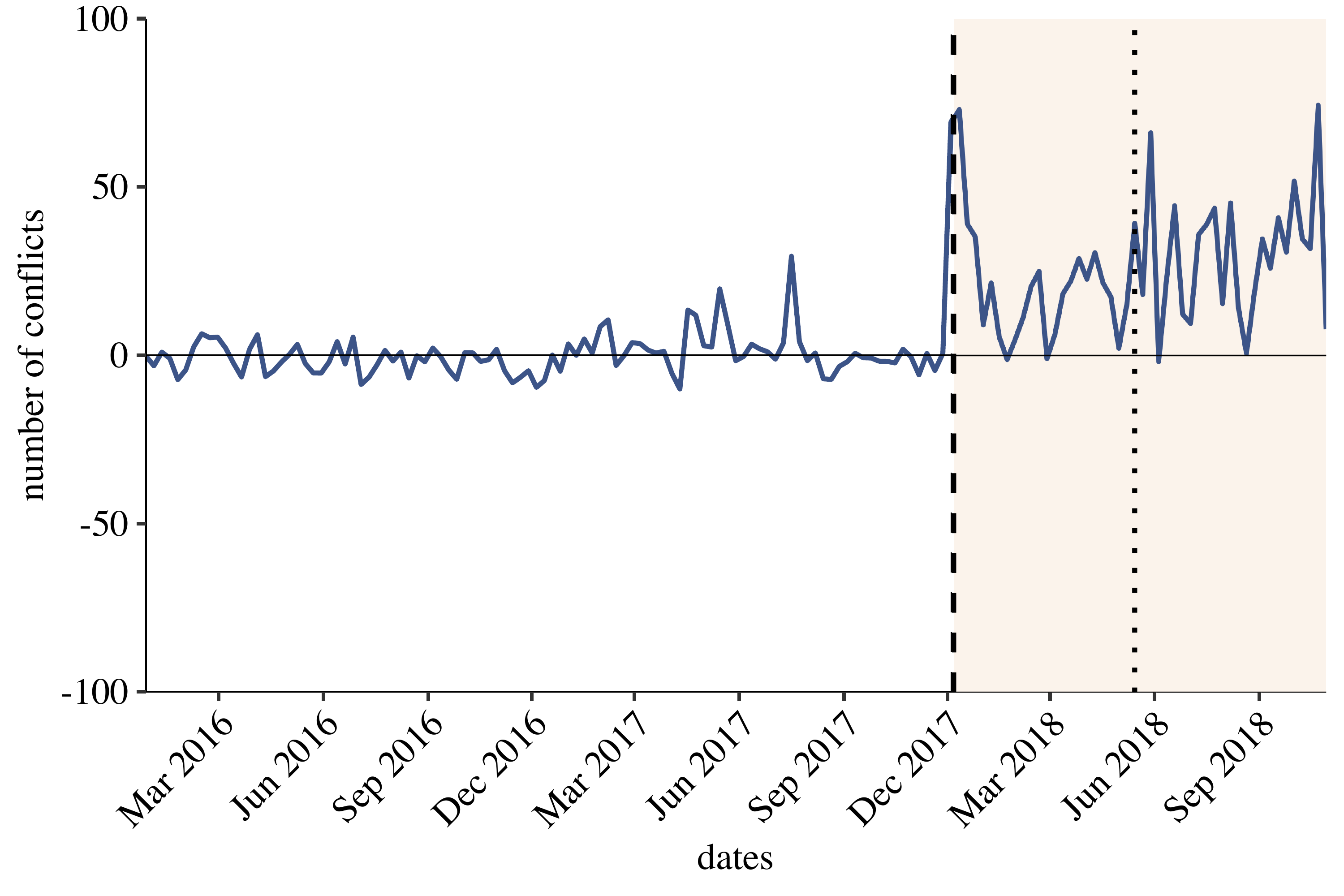}
		\end{adjustbox}
\caption{Discrepancies between the Observed and Estimated Conflicts in Israel-Palestine}
\label{fig: weekly gap tree}
\vspace{0.166667in}
{\footnotesize
\textit{Notes:} Weekly gaps between the
number of observed and estimated conflicts in Israel and Palestine.
The vertical dashed and dotted lines represent the date when the move
of the US embassy was announced and the date of the actual move, respectively.\par}
\end{figure}

\vspace{0.166667in}
We assess the weekly estimates of the impact directly in Figure \ref{fig: weekly gap tree},
where we plot the differences between the observed and estimated number
of weekly conflicts in Israel and Palestine. Figure \ref{fig: weekly gap tree}
unveils the same story as Figure \ref{fig: weekly level of conflict Palestine-Israel and synthetic counterpart}.
The gap of approximately zero on average in the pre-intervention period
indicates that the tree-based synthetic control method is able to
approximate well the true level of conflicts, albeit very fluctuating.
To be precise, the average difference between the observed and estimated
weekly number of conflicts in the pre-intervention period is only
-0.09. This is obviously a heuristic as average discrepancy does not per se capture accurate fit, and this number should considered jointly with Figure \ref{fig: weekly gap tree}.

\vspace{0.166667in}
Using all 48 weeks after the announcement, our results show
that the level of conflicts in Israel and Palestine is increased by
an average of more than 26 incidents per week, which corresponds to
an increase of approximately 103\%. The estimated average effect is
associated with a bootstrapped standard error of 2.67 using 10,000
block bootstrap samples with block length equal to 3. That is, the
95\% bootstrap confidence interval of the weekly increase is between
20.88 and 31.36. This translates into a percentage point change between
roughly 82-124\%. We acknowledge that the confidence interval is rather
wide, which is not surprising due to the volatility in the number
of conflicts across weeks. The results are insensitive to the choice
of block length.

\vspace{0.166667in}
Naturally, the assumption of no interference between the treated and
control units can be violated in several ways in the context of analyzing
the effect of moving the US embassy. The Israeli-Palestinian conflict
is an issue in all of the region, and the ties between the countries
are complex to understand. For instance, we choose to exclude Iran
in the sample, because the Iranian government has played an active
role in the conflict. The results with and without Iran are, however,
not significantly different, because the tree-based synthetic control
method averages over the number of conflicts in Israel-Palestine and
uses only the neighboring countries, i.e., the controls, to stratify the time periods.

\vspace{0.166667in}
This feature of the method makes it more robust to the potential violations
compared to methods that base the estimates on the outcomes for the
control units. Further, the average weekly number of conflicts across
all control countries does not differ between the pre- and post-intervention
period. In particular, the average over the control countries in the
pre-intervention period is 32.80, whereas the same figure is 30.82
in the post-intervention period. The small difference is likely to
be driven by the coup attempt in Turkey.

\vspace{0.166667in}
The placebo tests we review shortly reveal that no other relevant country experienced the same
effect of the relocation of the US embassy. Last, the conformal inference
test in Section \ref{subsec:Inference} provides evidence that our
model is correctly specified and that the increase is statistically
significant. Taken altogether, it is our judgment that the potential
violations do not appear to be severe in this context. 

\subsection{Inference}\label{subsec:Inference}
We want to assess how much our results are driven by mere chance.
If we are able to obtain estimated effects of the same magnitude for
the control countries as for Israel-Palestine by relabeling treatment
and control unit, we would not be able to interpret our analysis as
providing any significant effects. To make inference about the effect
of the embassy relocation, we follow the strategy outlined in \cite{abadie2010synthetic},
\cite{Bertrand2004}, and \cite{Abadie2003} and run placebo tests.

\vspace{0.166667in}
Placebo tests re-do the original analysis, but switch the roles between
the treated unit and a randomly chosen control unit, the rationale
being that using the control unit not exposed to treatment should
lead to an estimated effect of approximately zero. By applying the
tree-based synthetic control method individually to all the countries
in the donor pool, we can therefore evaluate the significance of our
analysis. We expect one of two outcomes. If the placebo tests deliver
estimates of the average effect of similar magnitude as for Israel-Palestine,
we cannot rightfully interpret our results as evidence for a significant
effect. If, on the other hand, that none of the placebo tests for
the countries in which the US embassy was not moved lead to similar
estimated effects, then we take this as evidence that our tree-based
analysis documents a significant effect of moving the US embassy in
terms of an increased level of conflicts. One important condition, however,
is that the pre-intervention fit to the weekly number of conflicts
is precise for the country in question when we run the placebo test.

\vspace{0.166667in}
To assess the significance of our estimates, we perform a series of
placebo test for which we create a counterfactual state of the world.
That is, we iteratively treat each control country in the remaining
Middle East as if it had experienced a move of the US embassy at exactly
the same time as the move in Israel, while we also reassign both Israel
and Palestine to the control group. In each iteration, we apply tree-based
controls to the respective country to estimate the impact of the fictive
embassy move on the weekly number of countries. The series of placebo
tests gives us a distribution of differences between the observed
and estimated number of conflicts over the countries.
\begin{figure}[!t]
		\begin{adjustbox}{max totalsize = {0.8\textwidth}{0.9\textheight}, center}
			\includegraphics[scale=1,width = \textwidth]{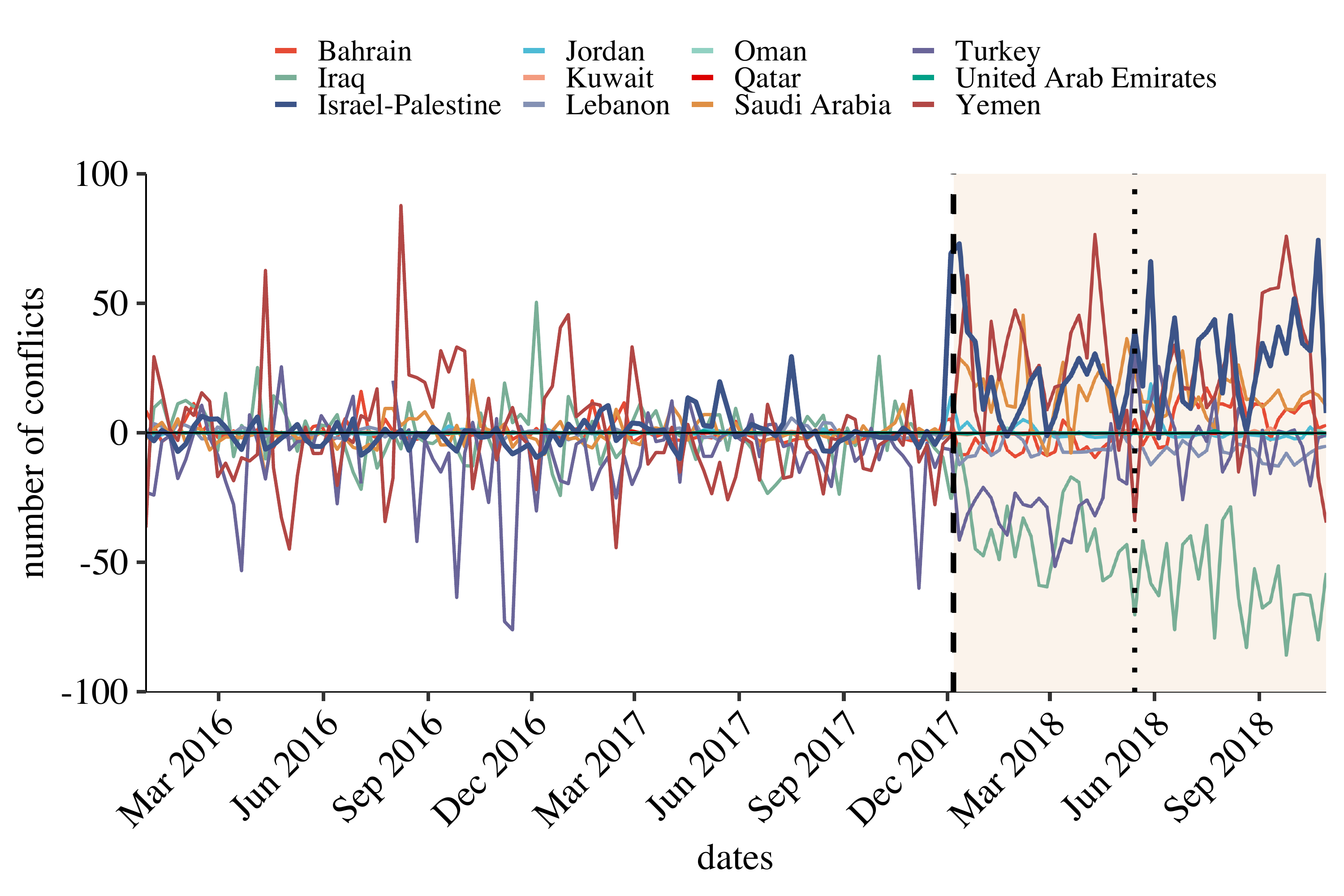}
		\end{adjustbox}
\caption{Discrepancies between the Observed and Estimated Conflicts in the
Middle East}
\label{fig: placebo all}
\vspace{0.166667in}
{\footnotesize
\textit{Notes:} Weekly gaps between the
number of observed and estimated conflicts for all countries considered
in the placebo tests. The blue line represents the differences for
Israel and Palestine, whereas the other lines represent the differences
for the control units defined temporarily as treated units. The vertical
dashed and dotted lines represent the date when the move of the US
embassy was announced and the date of the actual move, respectively.\par}
\end{figure}

\vspace{0.166667in}
Figure \ref{fig: placebo all} plots the differences in the observed
and estimated number of conflicts for all the placebo analyses and
the original analysis. The blue line shows the case for Israel-Palestine,
reproducing Figure \ref{fig: weekly gap tree}. The other lines show
the same differences estimated by the tree-based synthetic control
method, but for each of the 11 control countries in the donor pool.
Figure \ref{fig: placebo all} indicates that the tree-based synthetic
control method provides an accurate fit in the pre-intervention period
for Israel and Palestine as well as for most of the control countries.

\vspace{0.166667in}
In particular, the pre-intervention root mean squared prediction error
(RMSPE) for Israel-Palestine is 5.77, where RMSPE is computed as the
root average of the squared differences between the observed and estimated
weekly number of conflicts. The pre-intervention median RMSPE for
the control countries is 1.71. This should not be taken as evidence
that the ability to fit the pre-intervention is higher for the control
countries than for Israel-Palestine. In fact, mean RMSPE over the
control countries is 9.51, indicating that a few control countries
stand out in terms of high RMSPE, while for most control countries,
we achieve a very low RMSPE. This is supported by Figure \ref{fig: placebo all}
from which it is apparent that the pre-intervention fit is very imprecise
for some countries.

\vspace{0.166667in}
The country with the worst fit is Turkey with
an RMSPE of 61.88. This result, however, is not surprising due to
the attempted military coup in 2016 that led to an extreme spike in
the number of conflicts. As this coup attempt was, of course, unanticipated,
the conflict situation in the other countries was normal, and therefore,
no statistical method would be able to capture this outlier. Similar
problems arise for Iraq and Yemen, which are the countries with the
overall highest variation in the weekly number of conflicts. This
high variation makes it difficult for the tree-based synthetic control
method, and likely any other method, to produce a valid fit in the
pre-intervention period without imposing too much flexibility. As
a result, the RMSPE for Turkey, Iraq, and Yemen are all more than
double that of Israel-Palestine and any other control country.
\begin{figure}[!t]
		\begin{adjustbox}{max totalsize = {0.8\textwidth}{0.9\textheight}, center}
			\includegraphics[scale=1,width = \textwidth]{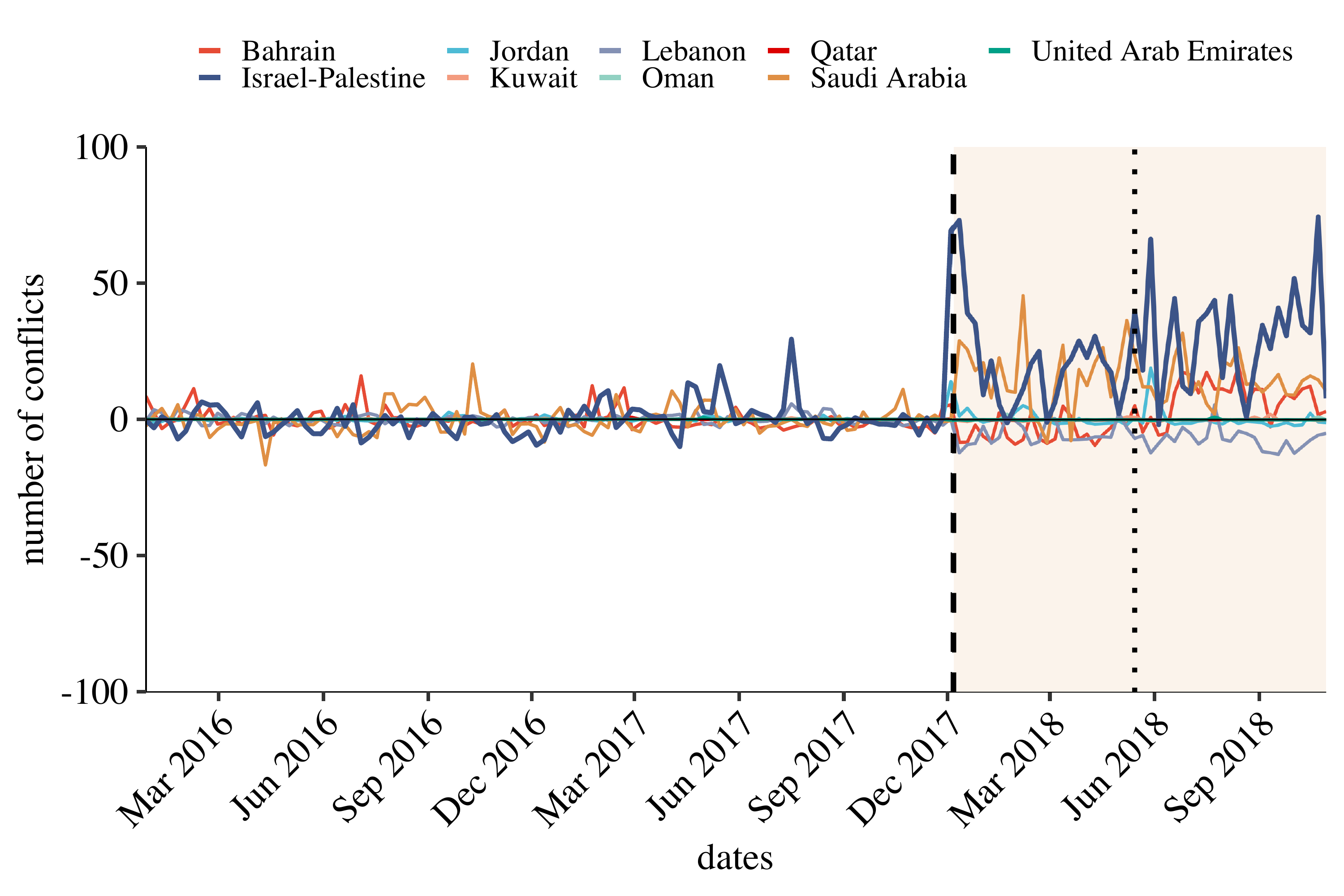}
		\end{adjustbox}
\caption{Discrepancies between the Observed and Estimated Conflicts in the
Middle East (excl. Turkey, Iraq, and Yemen)}
\label{fig: placebo restrictive}
\vspace{0.166667in}
{\footnotesize
\textit{Notes:} Weekly gaps between the
number of observed and estimated conflicts for all countries considered
in the placebo tests except Turkey, Iraq, and Yemen. The blue line
represents the differences for Israel and Palestine, whereas the other
lines represent the differences for the control units defined temporarily
as treated units. The vertical dashed and dotted lines represent the
date when the move of the US embassy was announced and the date of
the actual move, respectively.\par}
\end{figure}

\vspace{0.166667in}
To handle the countries for which the tree-based synthetic control
method gives a poor fit, we follow an argument provided in \cite{abadie2010synthetic}
as they encounter the same issue for some of the states. If the tree-based
synthetic control method had failed to deliver a reasonable fit to
the observed weekly level of conflicts in the pre-intervention period
for Israel-Palestine, we would treat the lack of fit as evidence that
the estimated increase in the weekly number of conflicts in the post-intervention
period was arbitrary and not caused by the move of the US embassy.
Analogously, we cannot take into account the estimated effects in
the post-intervention period for Turkey, Iraq, and Yemen when assessing
the degree of chance in our results for Israel-Palestine.

\vspace{0.166667in}
Consequently, we provide another version of Figure \ref{fig: placebo all} in which
we have excluded the placebo tests for Turkey, Iraq, and Yemen. This
effectively corresponds to removing countries for which the RMSPE
is more than double the one for Israel-Palestine. Figure \ref{fig: placebo restrictive}
provides the restricted version of Figure \ref{fig: placebo all}
from which we have excluded Turkey, Iraq, and Yemen. The median RMSPE
over the remaining countries in the Middle East drops to 0.35, and
the corresponding mean drops to 1.37.

\vspace{0.166667in}
Removing the countries for which
the tree-based synthetic control method would be ill-advised tells
a clear message. The largest estimated effect on the weekly number
of conflicts in the post-intervention period is to be found for Israel-Palestine.
More precisely, while the average estimated effect for Israel-Palestine
is 26.12 in the post-intervention period, the corresponding figure
over the placebo tests is 1.38. For the pre-intervention period, the
estimated gaps are -0.09 and -0.02, respectively.

\vspace{0.166667in}
The use of placebo tests as a mode of inference for synthetic controls is heavily debated (see, e.g., \cite{Hahn2017}). We emphasize that placebo tests evaluate significance
\emph{relative} to a benchmark distribution (here, a uniform distribution) for the given assignment
mechanism in the data. This permutation-based test is conditional on the sample and exploits the randomness induced by the (placebo) assignment mechanism.
In contrast, sample-based tests are conditional on the assignment mechanism and exploit the randomness in the DGP. But because the sample mechanism is not well-defined and the sample is in fact the population in the cross section (all countries in the Middle East are included), sample-based tests are often complicated in this settings \citep{abadie2019using}.\footnote{We thank Alberto Abadie for pointing this out.}
\begin{table}[!t]
\caption{Summary of Performance Measures across Countries Pre-treatment and
Post-treatment}
\begin{adjustbox}{max width = \textwidth, center}
\begin{tabular}{llcccccccc}
\toprule 
 &  & \multicolumn{2}{c}{Ratio} &  & \multicolumn{2}{c}{Pre-intervention} &  & \multicolumn{2}{c}{Post-intervention}\tabularnewline
\cmidrule{3-4} \cmidrule{4-4} \cmidrule{6-7} \cmidrule{7-7} \cmidrule{9-10} \cmidrule{10-10} 
 &  & MAE & RMSPE &  & MAE & RMSPE &  & MAE & RMSPE\tabularnewline
\midrule
Israel \& Palestine &  & 6.59 & 5.61 &  & 3.99 & 5.77 &  & 26.28 & 32.38\tabularnewline
Bahrain &  & 3.03 & 2.40 &  & 7.40 & 3.58 &  & 2.44 & 8.61\tabularnewline
Iraq &  & 5.84 & 4.57 &  & 48.64 & 11.36 &  & 8.33 & 51.89\tabularnewline
Jordan &  & 5.12 & 6.78 &  & 2.09 & 0.57 &  & 0.41 & 3.89\tabularnewline
Kuwait &  & 3.60 & 3.47 &  & 0.22 & 0.13 &  & 0.06 & 0.43\tabularnewline
Lebanon &  & 4.92 & 4.32 &  & 6.60 & 1.71 &  & 1.34 & 7.37\tabularnewline
Oman &  & 8.40 & 7.38 &  & 0.07 & 0.04 &  & 0.00 & 0.32\tabularnewline
Qatar &  & 3.15 & 0.97 &  & 0.05 & 0.07 &  & 0.02 & 0.07\tabularnewline
Saudi Arabia &  & 4.40 & 3.76 &  & 15.46 & 4.80 &  & 3.52 & 18.06\tabularnewline
Turkey &  & 0.86 & 0.37 &  & 18.44 & 61.88 &  & 21.41 & 22.94\tabularnewline
UAE &  & 5.31 & 2.23 &  & 0.08 & 0.08 &  & 0.01 & 0.17\tabularnewline
Yemen &  & 1.91 & 1.68 &  & 28.48 & 20.35 &  & 14.95 & 34.20\tabularnewline
\bottomrule
\end{tabular}
\end{adjustbox}\vspace{0.166667in}
\label{tab: ratios}
{\footnotesize 
\textit{Notes:} Summary of measures used
to assess the significance of the results obtained for Israel and
Palestine. Measures include mean absolute error and root mean squared
prediction error between the observed and estimated weekly number
of conflicts for both the pre- and post-intervention period. We also
include the ratios of post-/pre-intervention measures. All measures
are reported for Israel and Palestine, and for each of the placebo
runs.\par}
\end{table}

\vspace{0.166667in}
We consider another approach to assessing the significance of our
results, namely computing ratios of post-/pre-intervention measures
both for Israel-Palestine and the control countries. As \cite{abadie2010synthetic},
we compute the ratios in terms of RMSPE. Arguably, the advantage of
comparing ratios relative to post-intervention gaps is that we do
not necessarily have to exclude ill-fitting placebo runs in an iterative
way as demonstrated by figures \ref{fig: placebo all} and \ref{fig: placebo restrictive}.
For instance, although the RMSPE for Turkey is the highest across
all in the pre-intervention period, it is similarly high in the post-intervention
period, and the ratio will be more robust to this. 

\vspace{0.166667in}
The only countries with a higher ratio of post-/pre-intervention RMSPE than Israel-Palestine
are Jordan and Oman. This observation, however, does not cause much
concern when we take into account the gaps in both periods. For Jordan,
the pre-intervention gap between the observed and estimated weekly
number of conflicts is -0.02, whereas the same figure is 0.46 in the
post-intervention period. Likewise, the figures for Oman are -0.00
and 0.06, respectively. Thus, the high ratios of post/pre-intervention
RMSPE for the two countries are likely driven by a few very conflict-ridden
weeks after the intervention.

\vspace{0.166667in}
In addition to the ratios of post-/pre-intervention
RMSPE used in \cite{abadie2010synthetic}, we also compute the ratios
of post-/pre-intervention mean absolute error (MAE) between the observed
and estimated weekly number of conflicts. Using either the ratio of
post-/pre-intervention RMSPE or MAE has different advantages. RMSPE
penalizes large errors more than MAE, but MAE is more interpretable.
We provide both ratios for each country in Table \ref{tab: ratios},
in which we also provide the respective pre- and post-intervention
measures. Note from Table \ref{tab: ratios} than Oman is the only
country with a higher ratio of post-/pre-intervention MAE than Israel-Palestine.
In absolute terms, again, the result for Oman is not too disturbing
for our analysis.

\subsubsection{Exact and robust conformal inference}
We consider one last approach to draw inference about our results.
Recall that our proposed method as well as the other methods considered
relies on cross-sectional regressions. Whenever the joint distribution
of the data is not well-approximated by cross-sectional regressions,
the model will provide a poor global fit in the sense that not all
$N$ controls will fit the model, which is exactly the case in our
application as well as in \cite{abadie2010synthetic}. In this situation,
\cite{Chernozhukov2017b} propose an exact and robust conformal inference
method along with an associated validity test.  In the following, we rely on the validity test of the needed assumptions rather than going into the theoretical aspects.

\vspace{0.166667in}
The method requires only a good \emph{local} instead of a
good \emph{global} fit, as it relies solely on a suitable model for
the treated unit and it focuses on the time-series dimension. Essentially,
the procedure postulates a null trajectory and tests the sharp null hypothesis $\mathcal{H}_{0}:\tau_t=\tau_t^{o}$ for $t=T_0+1,\ldots,T$. For the test to be valid, the estimator of the counterfactual outcome
for the treated unit needs to be consistent and stable and be able to provide residuals that are exchangeable.

\vspace{0.166667in}
To assess the plausibility of the key assumptions, \cite{Chernozhukov2017b} provide placebo
specification tests. The conditions result in non-asymptotic validity
of the test, meaning that the $p$-value is approximately unbiased
in size \citep[Theorem 1, p. 23,][]{Chernozhukov2017b}. The proposed
inference method is valid for stationary and weekly dependent data.
\begin{table}[!t]
\caption{Placebo Specification Test}
\begin{adjustbox}{max width = \textwidth, center}
\begin{tabular}{llllllllllllllll}
\toprule
\textbf{Placebo Specification} &  & & & & & & & & & \\
\midrule
$\kappa$ & 1 & 2 & 3 & 4 & 5 & 6 & 7 & 8 & 9 & 10 \\
i.i.d. Perm. & 0.902 & 0.664 & 0.850 & 0.678 & 0.832 & 0.883 & 0.902 & 0.933 & 0.952 & 0.974 \\ 
Moving Block Perm. & 0.901 & 0.594 & 0.782 & 0.614 & 0.762 & 0.812 & 0.851 & 0.891 & 0.901 & 0.941 \\
\bottomrule
\end{tabular}
\end{adjustbox}\vspace{0.166667in}
\label{tab:placebo_specification_test}
{\footnotesize 
\textit{Notes:} Placebo specification
test $p$-values over varying $\kappa$ from 1 to 10 based on both
the i.i.d. and the moving block permutations. We fail to reject the
null hypothesis at any significance level above 60\textbackslash\%.
Failure to reject the null hypothesis provides evidence for correct
specification. In the  i.i.d. case, we randomly sample 10,000 elements
from the set of all permutations with replacement.\par}
\end{table}

\vspace{0.166667in}
We are interested in testing the hypothesis that the trajectory of
the policy effects in the post-treatment is zero. Hence, our main
hypothesis is
\begin{equation}
\mathcal{H}_{0}:\tau_t=\tau_t^{o},\quad\text{for }t=T_0+1,\ldots,T\label{eq:main_null}
\end{equation}
The test statistic $S$ is based on the $\left(\left(T-T_{0}\right)\times1\right)$
vector of residuals of our model $\hat{u}_t$ for $t=T_0+1,\ldots,T$.
The test statistic is then defined by
\begin{equation}
S\left(\hat{u}_{T_0+1},\ldots,\hat{u}_{T}\right)=S_{q}\left(\hat{u}_{T_0+1},\ldots,\hat{u}_{T}\right)=\left(\frac{1}{\sqrt{T-T_{0}}}\sum_{t=T_{0}+1}^{T}\left|\hat{u}_{t}\right|^{q}\right)^{\nicefrac{1}{q}},\label{eq:test_statistic}
\end{equation}
where we set $q=1$. To compute $p$-values, the test relies on two
different sets of permutations, the i.i.d permutations denoted $\Pi_{\text{i.i.d}}$
and the moving block permutations denoted $\Pi_{\rightarrow}$. The
moving block permutations are necessary if the sequence of residuals
exhibits serial dependence. The $p$-value is estimated as $\hat{p}=1-\hat{F}\left(S\left(\hat{u}_{T_0+1},\ldots,\hat{u}_{T}\right)\right)$,
where 
\begin{equation}
\hat{F}\left(x\right)=\frac{1}{\left|\Pi\right|}\sum_{\pi\in\Pi}\mathds{1}\left\{S\left(\hat{u}_{\pi,T_0+1},\ldots,\hat{u}_{\pi,T}\right) <x)\right\} .\label{eq:cdf_test}
\end{equation}
To assess the validity of the assumptions underlying the test, the
first step is to perform a placebo specification test. Based on the
outlined procedure, the idea is to test the null hypothesis that
\begin{equation}
\mathcal{H}_{0}:\tau_{T_{0}-\kappa+1}=\cdots=\tau_{T_{0}}=0\label{eq:placebo_null}
\end{equation}
for a given $\kappa\geq1$ based on pre-treatment data. The null hypothesis
\eqref{eq:placebo_null} is true if the underlying assumptions are
correct. Thus, rejecting the null provides evidence against a correct
specification. For proofs and additional details, we refer to \cite{Chernozhukov2017b}.\footnote{Note that \cite{Chernozhukov2017b} also provide a test for the average effect over time. However, this requires the total number of periods
to be much larger than the post-treatment periods, which is not the case in our application.}

\vspace{0.166667in}
We begin the analysis by testing the underlying assumptions of our
proposed method, i.e., consistency, stability, and exchangeability
of the residuals. We apply both i.i.d. permutations and the moving
block permutations. We use $\kappa=10$ and randomly sample 10,000
elements from the set of all permutations with replacement for the
i.i.d. permutations. The resulting $p$-values follow from Table \ref{tab:placebo_specification_test}.
All $p$-values from both permutation schemes are above 60\% and most
of them are above 80\%, and thus, we fail to reject the null hypothesis.
This serves as evidence for a correct model specification. We further
see that the $p$-values differ slightly between the i.i.d. permutations
and the moving block permutations, where the $p$-values tend to be
lower using moving block permutations. This provides evidence for
some serial dependence in the residuals.

\vspace{0.166667in}
Next, we turn to our main hypothesis in \eqref{eq:main_null}. We
consider again both the i.i.d. permutations with 10,000 random samples
as well as the moving block permutations. The $p$-value based on
the i.i.d. permutations is 0.000, whereas the $p$-value based on
the moving block permutations is 0.007. We reject the null hypothesis
in both cases given both $p$-values are below 1\%, providing evidence
that the trajectory of the policy effects from the embassy relocation is
different from zero. The formal test results thus appear to be in
agreement with the other inference results provided in this section. 

\section{Comparing Methods}\label{sec:Comparing-Methods}
In Section \ref{sec:Estimating-the-Consequences}, we provide evidence
that the decision to move the US embassy from Tel Aviv to Jerusalem
has resulted in a significant increase in the weekly number of conflicts
in Israel and Palestine. We assess the robustness of our results in
several ways, including performing formal inference tests, conducting
a series of placebo runs, and evaluating the fit on different measures
such as ratios of post-/pre-intervention RMSPE and MAE. In this section,
we compare the tree-based synthetic control method to three state-of-the-art
methods in the econometric literature.  We begin by introducing the methods.

\subsection{Competing methods}
\citet{abadie2010synthetic} also consider a version of \eqref{tauHat},  but assume linearity of $f$ in $X_{t}$.  In particular, \citet{abadie2010synthetic} assume that there exists a set of perfect weights
$\omega^{*}=\left(\omega_{1}^{*},\ldots,\omega_{N}^{*}\right)'$
such that $\left\langle \omega^{*},X_{t}\right\rangle =Y_{t}$ for $t=1,\ldots,T_0\le T_{0}$.
Considering $Y_{t}-\left\langle \omega^{*},X_{t}\right\rangle $, \citet{abadie2010synthetic}
prove that its mean is approximately zero under standard conditions,
which suggests using $\hat{\tau}_{t}=Y_{t}-\left\langle \omega^{*},X_{t}\right\rangle $
as an estimator for $\tau_{t}$ in periods $t>T_{0}$. The weights
are then estimated by 
\begin{equation}
\hat{\omega}=\arg\min_{\omega\in\mathbb{R}^{N}}\left\{ \sum_{t=1}^{T_{0}}\left(Y_{t}-\left\langle \omega,X_{t}\right\rangle \right)^{2}\right\} \quad\text{st. }\sum_{i=1}^{N}\omega_{i}=1,\;\omega_{i}\ge0\quad \text{for } i=1,\ldots,N,\label{eq: Synthetic control weights}
\end{equation}
which in practice can estimated by constrained least squares. 
The synthetic control method is mainly tailored for empirical settings
with relatively more time periods than control units, i.e., $T_0 \gg N$.

\vspace{0.166667in}
\citet{Doudchenko2016} propose a regularized extension to synthetic
controls, namely the elastic net estimator. The optimization problem
is similar to \eqref{eq: Synthetic control weights} but adds a regularization
term to the objective function with inspiration from shrinkage estimation.
Let $\left(\lambda,\alpha\right)\in\mathbb{R}\times\mathbb{R}$ be
a given pair of hyperparameters to be tuned, and let $\mu\in\mathbb{R}$
be an intercept, capturing the possibility that the outcomes for the
treated unit are systematically different from the other units. Then,
\citet{Doudchenko2016} propose to estimate the weights by 
\begin{equation}
\left(\mu,\hat{\omega}\right)=\arg\min_{\mu,\omega}\left\{ \sum_{t=1}^{T_{0}}\left(Y_{t}-\mu-\left\langle \omega,X_{t}\right\rangle \right)^{2}+\lambda\left(\frac{1-\alpha}{2}\sum_{i=1}^{N}\omega_{i}^{2}+\alpha\sum_{i=1}^{N}\left|\omega_{i}\right|\right)\right\} .\label{eq: Regularized Synthetic Control Method weights}
\end{equation}
Note that \eqref{eq: Regularized Synthetic Control Method weights}
neither requires zero intercept, weights summing to one, nor non-negative
weights. The elastic net estimator enjoys the selection property known
from Lasso by the $\ell_{1}$-penalty term \citep{Tibshirani1996,Zou2005}.
Essentially, some weights are likely to be zeroed out, meaning that
some control units are not predictive of the treated unit.

\vspace{0.166667in}
Both the synthetic control and the elastic net estimator may be viewed
as cross-sectional regressions in which the outcome of the treated
unit is regressed on the outcomes of the control units in the pre-treatment
period. Assuming stability over time, the cross-sectional pattern
is then carried over into the post-treatment period, based on which
the counterfactual outcome for the treated unit is predicted using
the control units. This form of regression in causal panel data models
is known as vertical regressions, a term coined by \citet{Athey2018matrix}.
The (almost) symmetric formulation is known as horizontal regressions,
where the post-treatment outcomes are regressed on the pre-treatment
outcomes using only the control units. This time-series approach estimates
a relationship which is then applied to the treatment unit assuming
stability across units and requires $N\gg T$. It is not a symmetric
problem because the order of $T$ matters in contrast to the order
of $N$.

\vspace{0.166667in}
However, both methods have a disadvantage in cases with $T\approx N$
as they do not fully exploit the panel structure by running either
cross-sectional or time-series regressions. A recent approach to causal
panel data models that takes both sources of variation into account
is the matrix completion method by \citet{Athey2018matrix}, treating
$Y_{t}^{0}$ for $t>T_{0}$ as missing.  We are now ready to compare the methods introduced.

\subsection{Comparison}
First, we apply the synthetic control method, serving as a baseline model. Then, we apply the regularized
counterpart, i.e., the elastic net estimator.  The matrix completion method combines elements from vertical
and horizontal regressions, and it is the last method we include.
\begin{figure}[!t]
\begin{subfigure}{.5\textwidth}
		\begin{adjustbox}{max totalsize = {\textwidth}{0.9\textheight}, center}
			\includegraphics[width = \textwidth]{Figures/counterfactual_tree}
		\end{adjustbox}
		\caption{Estimation by tree-based controls}
		\label{Fig:counterfactual_tree}
	\end{subfigure}\hfill
\begin{subfigure}{.5\textwidth}
		\begin{adjustbox}{max totalsize = {\textwidth}{0.9\textheight}, center}
			\includegraphics[width = \textwidth]{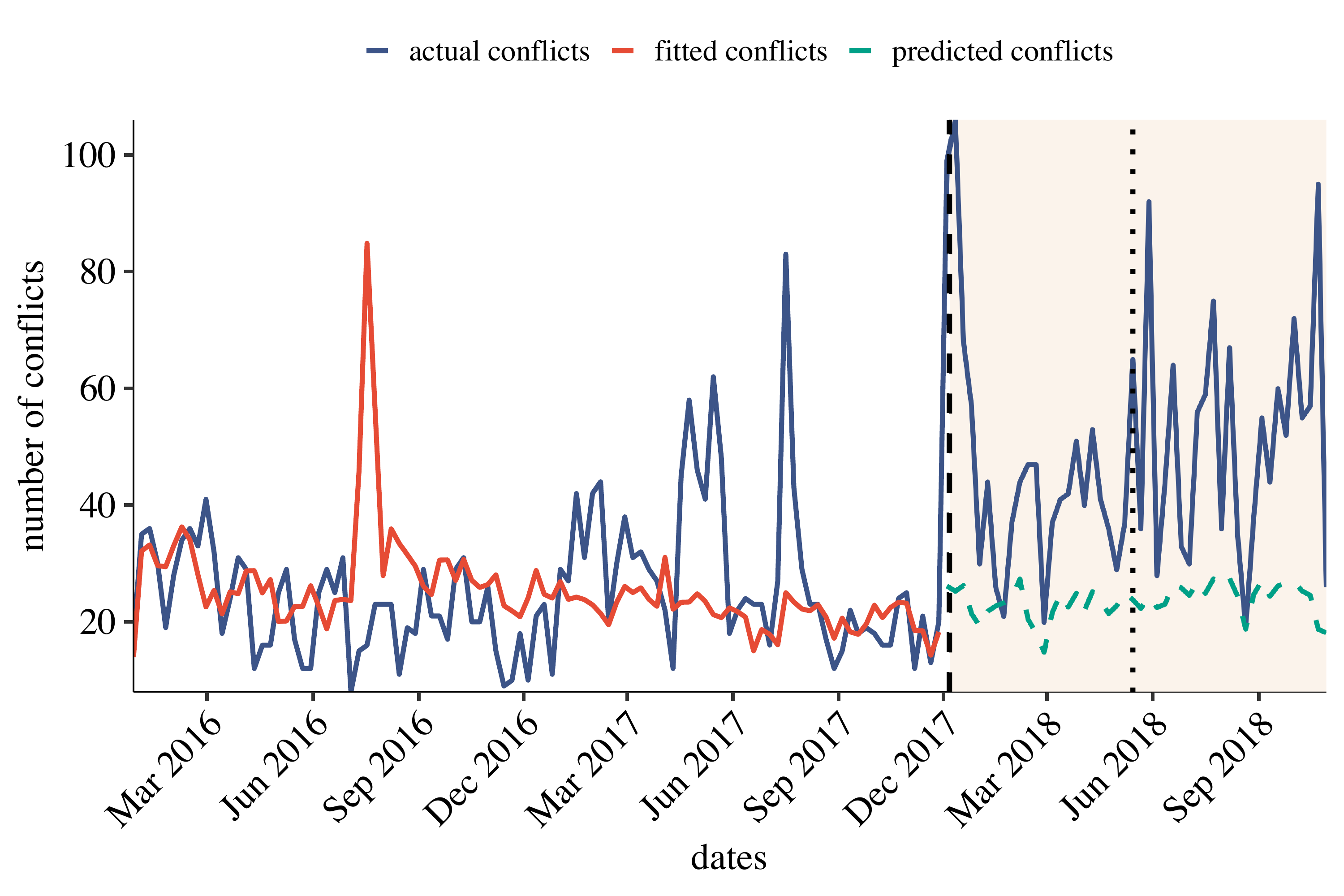}
		\end{adjustbox}	
		\caption{Estimation by synthetic controls}
	\label{Fig:counterfactual_synthetic}
	\end{subfigure}
\begin{subfigure}{.5\textwidth}
		\begin{adjustbox}{max totalsize = {\textwidth}{0.9\textheight}, center}
			\includegraphics[width = \textwidth]{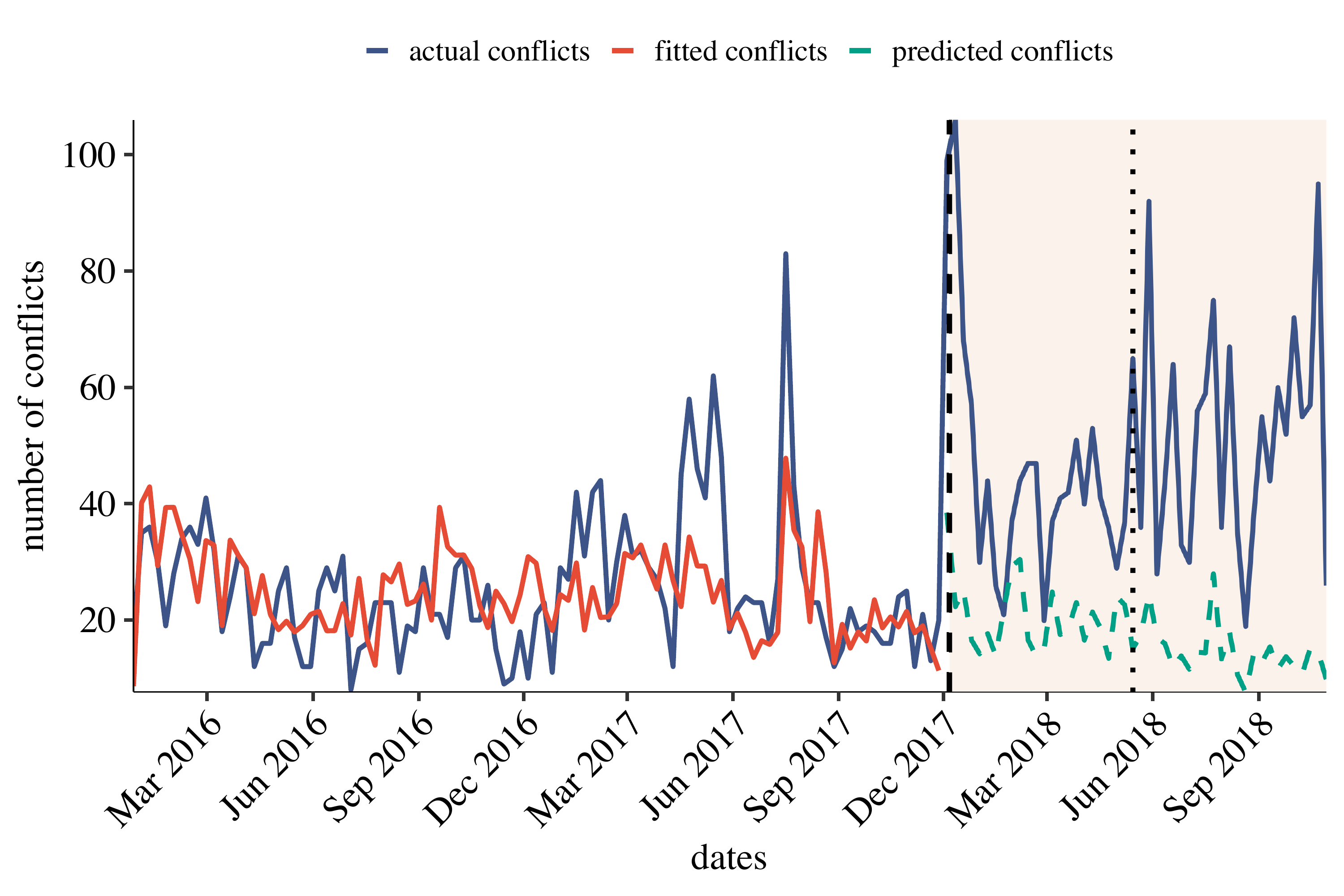}
		\end{adjustbox}	
		\caption{Estimation by elastic net}
		\label{Fig:counterfactual_elastic}
	\end{subfigure}\hfill
\begin{subfigure}{.5\textwidth}
		\begin{adjustbox}{max totalsize = {\textwidth}{0.9\textheight}, center}
			\includegraphics[width = \textwidth]{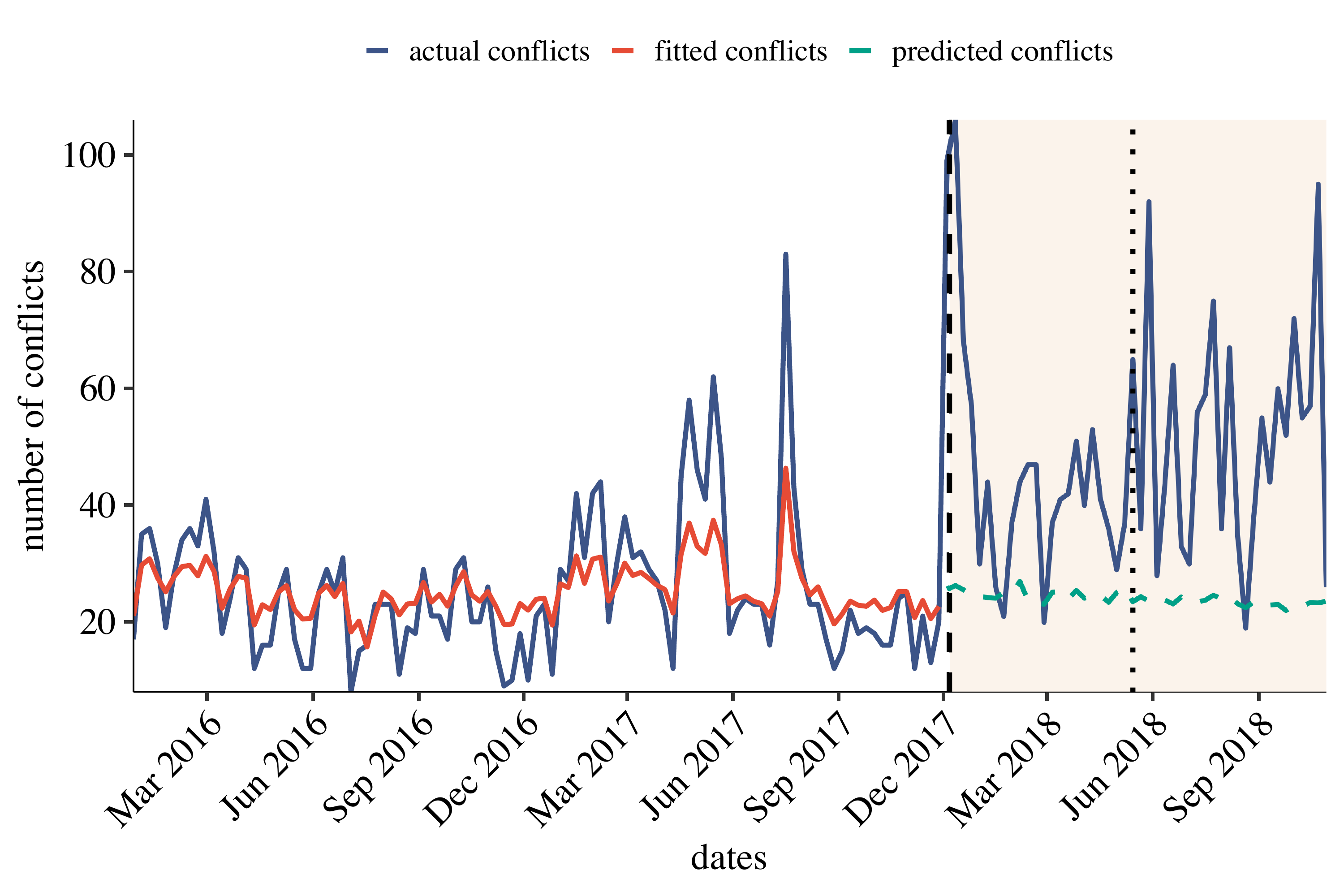}
		\end{adjustbox}
		\caption{Estimation by matrix completion}
		\label{Fig:counterfactual_matrix}
	\end{subfigure}
\caption{Comparison of the Four Methods based on the Observed and Estimated
Conflicts in Israel-Palestine}
\label{fig: comparing treated and control on actuals}
\vspace{0.166667in}
{\footnotesize
\textit{Notes:} Comparison of the four
methods showing the weekly number of conflicts in Israel and Palestine
(blue line) and its estimated counterpart in the pre-intervention
period (red line) and post-intervention period (green dashed line).
The vertical dashed and dotted lines represent the date when the move
of the US embassy was announced and the date of the actual move, respectively.
(a) shows the result of the tree-based controls, (b) for the synthetic
controls, (c) for the elastic net, and (d) for the matrix completion.\par}
\end{figure}

\vspace{0.166667in}
Figure \ref{fig: comparing treated and control on actuals} shows
the observed and estimated number of weekly conflicts in Israel-Palestine
for all four methods, and two features of the methods are noticeable.
First, the fit in the pre-intervention period gives an idea of the
ability to approximate the weekly level of conflicts in Israel-Palestine,
which is highly fluctuating. The synthetic control method, the elastic
net estimator, and the matrix completion method are comparable in
terms of pre-intervention fit, the matrix completion method being
marginally in the lead. The reason the elastic net estimator performs
slightly better compared to the synthetic control method is likely
because the elastic net is less restrictive when estimating weights.
None of the comparison methods, however, are able to approximate the
weekly level of conflicts in the pre-intervention period as well as
the tree-based control method.

\vspace{0.166667in}
Second, the variation in the estimated counterfactuals in the post-intervention
period hints at the degree of overfitting, particularly, if there is no or limited variation.
If a given method fits only to noise in the pre-treatment period, the post-treatment predictions will be roughly constant because the associated noise do not match the fitted noise.
Both the elastic net estimator and the tree-based synthetic control method appear to deliver
reasonable variation in the estimates. They are able to fit the shape
and pattern, but not the level of the observed conflicts. The ability
to fit shape but not level is exactly what leads us to estimate a significant
effect of the embassy move.

\vspace{0.166667in}
In contrast, the estimates by the synthetic
control method and the matrix completion method have little variation
and are closely centered around the average weekly number of conflicts
in the pre-intervention period. This is a sign of overfitting.\footnote{We thank Stefan Wager for pointing this out.} However, given the data available and in particular the number of control units,
this is not surprising. Recall that the matrix completion method combines
elements from vertical and horizontal regressions. For the horizontal
part, it tries to fit the post-intervention outcomes to the pre-intervention
outcomes using only 11 control countries. As the number of weeks is
much greater than the number of control countries, it is not surprising
that vertical regressions do perform better.
\begin{figure}[!t]
\begin{subfigure}{.5\textwidth}
		\begin{adjustbox}{max totalsize = {\textwidth}{0.9\textheight}, center}
			\includegraphics[width = \textwidth]{Figures/gap_tree}
		\end{adjustbox}
		\caption{Estimation by tree-based controls}
		\label{Fig:gap_tree}
	\end{subfigure}\hfill
\begin{subfigure}{.5\textwidth}
		\begin{adjustbox}{max totalsize = {\textwidth}{0.9\textheight}, center}
			\includegraphics[width = \textwidth]{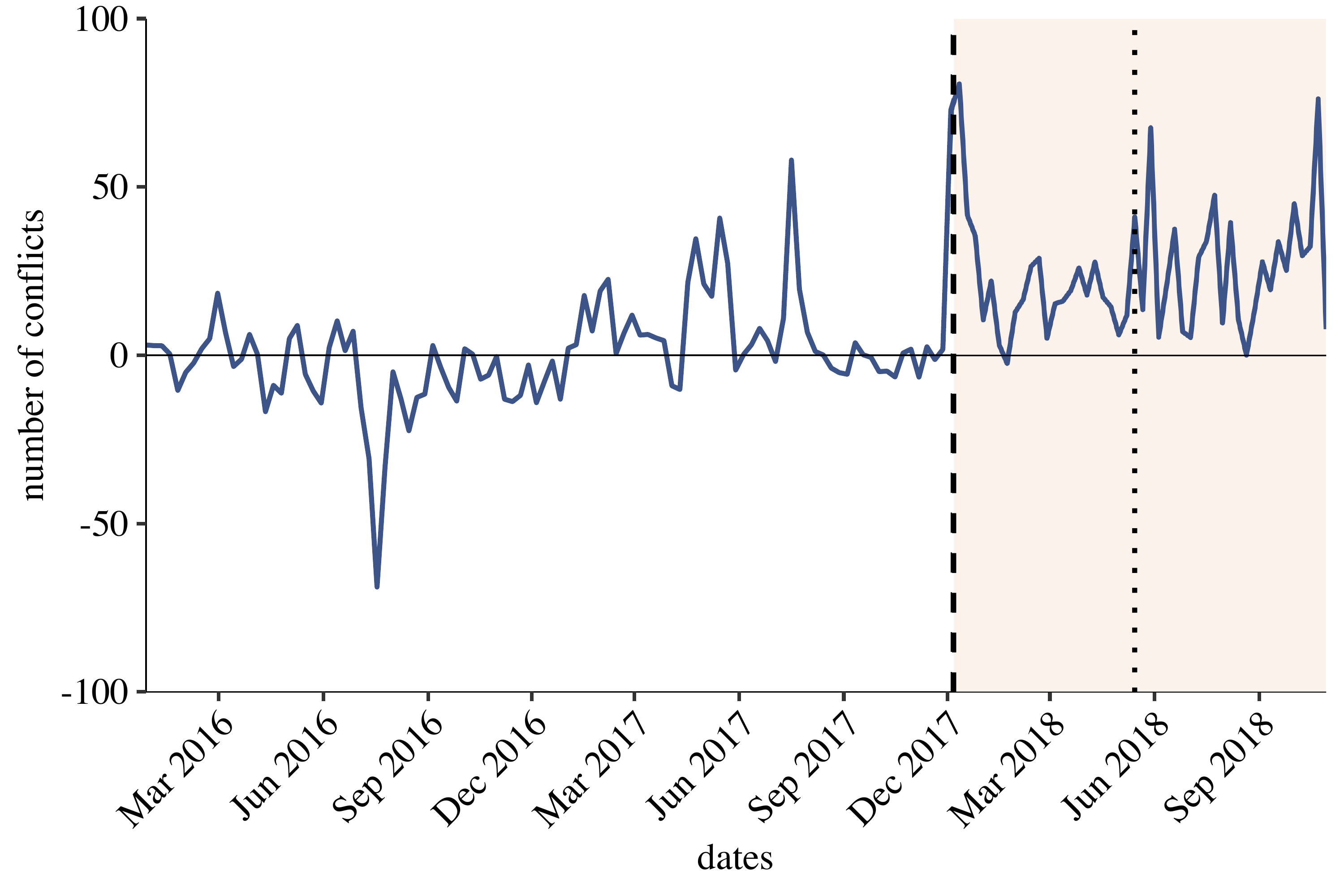}
		\end{adjustbox}	
		\caption{Estimation by synthetic controls}
	\label{Fig:gap_synthetic}
	\end{subfigure}
\begin{subfigure}{.5\textwidth}
		\begin{adjustbox}{max totalsize = {\textwidth}{0.9\textheight}, center}
			\includegraphics[width = \textwidth]{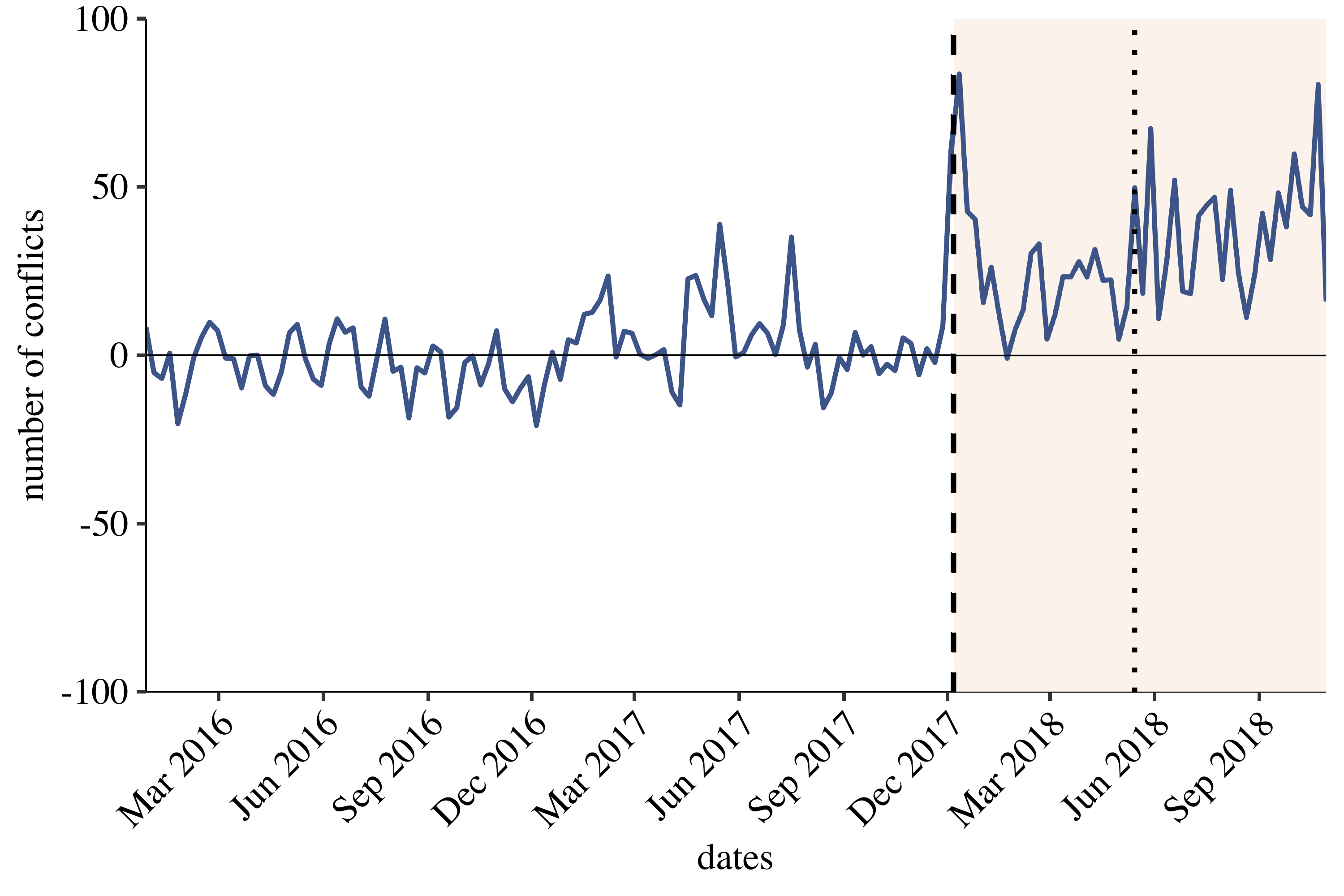}
		\end{adjustbox}	
		\caption{Estimation by elastic net}
		\label{Fig:gap_elastic}
	\end{subfigure}\hfill
\begin{subfigure}{.5\textwidth}
		\begin{adjustbox}{max totalsize = {\textwidth}{0.9\textheight}, center}
			\includegraphics[width = \textwidth]{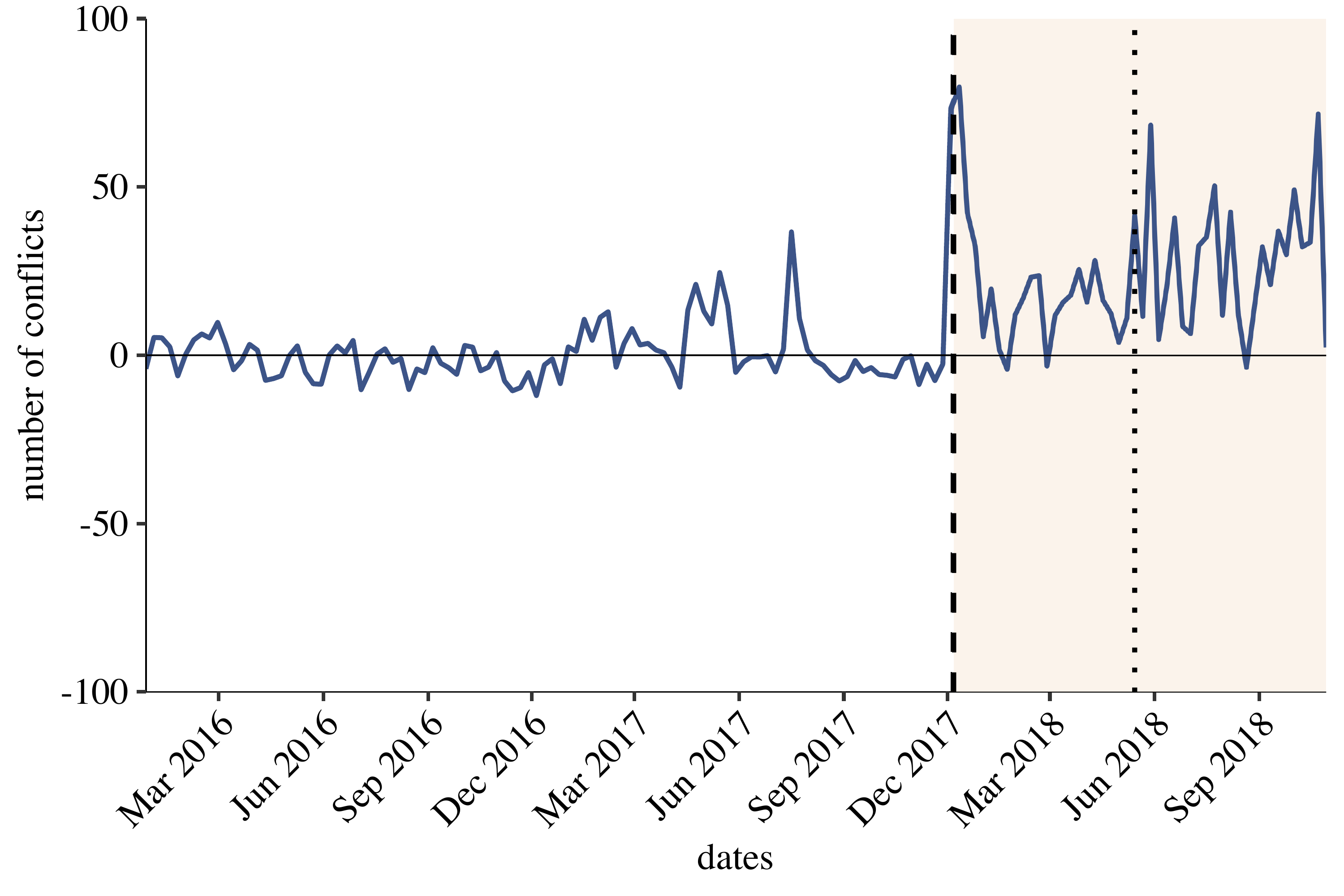}
		\end{adjustbox}
		\caption{Estimation by matrix completion}
		\label{Fig:gap_matrix}
	\end{subfigure}
\caption{Comparison of the Four Methods based on Discrepancies between the
Observed and Estimated Conflicts in Israel-Palestine.}
\label{fig: comparing treated and control on gaps}
\vspace{0.166667in}
{\footnotesize
\textit{Notes:} Comparison of the four
methods showing gaps between the observed and estimated weekly number
of conflicts in Israel and Palestine (blue line). The vertical dashed
and dotted lines represent the date when the move of the US embassy
was announced and the date of the actual move, respectively. (a) shows
the result of the tree-based controls, (b) for the synthetic controls,
(c) for the elastic net, and (d) for the matrix completion.\par}
\end{figure}

\vspace{0.166667in}
Figure \ref{fig: comparing treated and control on gaps} conveys the
same insights as Figure \ref{fig: comparing treated and control on actuals},
but instead of showing the observed and estimated number of weekly
conflicts separately, it displays the differences between the two.
Considering the differences instead of actuals provides an easier
approach to evaluating pre-intervention fit. Again, a good ability
to approximate the pre-intervention level of conflicts corresponds
to differences closely around zero. As apparent in Figure \ref{fig: comparing treated and control on gaps},
the tree-based synthetic control method delivers the best pre-intervention
fit, followed by the matrix completion method, the elastic net estimator,
and the synthetic control method. It is, however, impossible to assess
the overfitting indicated by little post-intervention variation from
Figure \ref{fig: comparing treated and control on gaps}.

\vspace{0.166667in}
From Figure \ref{fig: comparing treated and control on actuals}
and \ref{fig: comparing treated and control on gaps}, we have argued
that the tree-based synthetic control method performs at least as
well as state-of-the-art methods. Supporting this, Table \ref{tab: comparing methods}
provides the various measures that follow from the figures. In particular,
we compute the RMSPE and MAE in the pre-intervention period for all
the methods considered. Both measures capture the ability to approximate
the observed weekly level of conflicts in Israel-Palestine. The tree-based
synthetic control method outperforms all other methods on these metrics.
We also report the standard deviation of the estimated number of weekly
conflicts in the counterfactual Israel-Palestine absent of the embassy
move. The elastic net estimator is the only comparison method that
delivers higher variation than the tree-based synthetic control method.
The matrix completion method delivers almost no variation in the estimates.
\begin{table}[!t]
\caption{Summary of Performance Measures across Models Pre-treatment and Post-treatment}
\begin{adjustbox}{max width = \textwidth, center}
\begin{tabular}{lcccccc}
\toprule 
 &  & \multicolumn{2}{c}{Pre-intervention} &  & \multicolumn{2}{c}{Post-intervention}\tabularnewline
\cmidrule{3-4} \cmidrule{4-4} \cmidrule{6-7} \cmidrule{7-7} 
 &  & MAE & RMSPE &  & Std & Ave. gap\tabularnewline
\midrule
tree-based controls &  & 3.99 & 5.77 &  & 4.34 & 26.12\tabularnewline
synthetic controls &  & 9.62 & 14.73 &  & 2.86 & 25.14\tabularnewline
elastic net &  & 7.88 & 10.67 &  & 6.08 & 31.32\tabularnewline
matrix completion &  & 5.53 & 7.65 &  & 1.05 & 24.80\tabularnewline
\bottomrule
\end{tabular}
\end{adjustbox}\vspace{0.166667in}
\label{tab: comparing methods}
{\footnotesize 
\textit{Notes:} Summary of measures used
to assess the performance of the results obtained for Israel and Palestine.
Measures include mean absolute error and root mean squared prediction
error between the observed and estimated weekly number of conflicts
for both the pre- and post-intervention period. We also include the
estimated standard deviation of the estimates and the average gap
in the post-intervention period. We include the measures for the tree-based
control method and the comparison methods.\par}
\end{table}

\vspace{0.166667in}
Evaluating the degree of overfitting by computing standard errors
is clearly insufficient. One final approach to simultaneously assessing 
the ability of the methods to approximate the weekly number of conflicts
in Israel-Palestine and the degree of overfitting is to repeat the
analysis, but hold out a subsample of the pre-intervention period and
compute the RMSPE and MAE on this subsample. The hold-out sample serves
as a test sample, but in contrast to the post-intervention period,
we observe $Y_{t}^{0}$ as if the intervention has not yet occurred.
This allows us to evaluate the predictive ability. Specifically, we
hold out the last 10\% of the observations in the pre-intervention
period, resulting in an estimation sample and a validation sample.
Then, we re-run all methods on the estimation sample.

\vspace{0.166667in}
For the methods that require tuning of hyperparameters, namely the tree-based synthetic
control method, the elastic net estimator, and the matrix completion
method, we further split the estimation sample using an 80/20\% split
as in the original analysis. We use the 20\% to select the hyperparameters
rather than selecting hyperparameters on the full estimation sample.
For the synthetic control method, we use the whole estimation sample
to estimate the weights for each country as it does not require any
hyperparameters. Having estimated all parameters, we apply all the
methods to the validation sample for which we know the true outcome
and compute RMSPE and MAE.

\vspace{0.166667in}
Table \ref{tab: comparing methods validation} shows the results of
the hold-out sample approach. The elastic net estimator performs best
in terms of both metrics, followed by the tree-based synthetic control
method, the synthetic control method, and lastly the matrix completion
method. Our suspicion that the matrix completion method overfits as
seen in Figure \ref{fig: comparing treated and control on actuals}
appears to be confirmed. We emphasize that this is not an objection
to the method, but rather a result of the structure of the data, namely
$T\gg N$. The elastic net estimator performs very well on the validation
sample, and in fact better than evaluated on the entire pre-intervention
period.

\vspace{0.166667in}
Normally, we would take this as a sign of underfitting, but
as we run more than 20 different specifications of the elastic net
estimator in the pre-intervention period, it is more likely caused
by the validation sample being too small. The tree-based control performs
comparably in the validation sample as in using the entire pre-intervention
period, which indicates that neither overfitting nor underfitting
takes place. Being a nonparametric method, however, it requires more
data and, the fact that we only estimate the hyperparameters using
roughly 70\% of the pre-treatment data seems critical in this assessment
of the fit. Ideally, we would use a larger validation sample to compare
the methods on validation RSMPE and MAE.
\begin{table}[!t]
\caption{Summary of Performance Measures in Validation Sample}
\begin{adjustbox}{max width = \textwidth, center}
\begin{tabular}{lccc}
\toprule 
 &  & RMSPE & MAE\tabularnewline
\midrule
tree-based controls &  & 6.60 & 5.02\tabularnewline
synthetic controls &  & 7.72 & 6.96\tabularnewline
elastic net &  & 4.81 & 4.33\tabularnewline
matrix completion &  & 8.95 & 7.99\tabularnewline
\bottomrule
\end{tabular}
\end{adjustbox}\vspace{0.166667in}
\label{tab: comparing methods validation}
{\footnotesize 
\textit{Notes:} Summary of measures used
to assess the performance of the results obtained for Israel and Palestine.
Measures include mean absolute error and root mean squared prediction
error between the observed and estimated weekly number of conflicts
on a validation sample from the pre-intervention period. We include
the measures for the tree-based control method and the comparison
methods.\par}
\end{table} 

\section{Conclusion}\label{sec:Conclusion_tbsc}
The synthetic control method is an effective method in comparative
case studies in which relatively more time periods than potential
control units are available. The main advantage is the data-driven
approach to control unit selection. Since the estimation of the synthetic
controls is performed to maximize the pre-treatment fit to the treated
unit, however, the fit may not carry over into the post-treatment
period. One can argue that synthetic controls are not designed to
balance bias for variance, which may lead to overfitting to the pre-treatment
period despite the importance of high predictive performance in the
post-treatment period.

\vspace{0.166667in}
The elastic net estimator is an extension that
regularizes the weights on the control units to improve the post-treatment
fit. Both methods, however, impose a linear model that may not be
guided theoretically. In addition, if interactions and higher-order
terms of the control units are important to approximate the treated
unit but difficult to anticipate, the estimators may suffer from bias.
We recast the problem of estimating a counterfactual state as a prediction
problem. Specifically, we provide a data-driven method that balances
bias and variance to achieve post-treatment accuracy and is able to
capture nonlinearities without the need for a researcher specifying
them.

\vspace{0.166667in}
Our method can be applied in domains without theoretical guidelines
and is also able to recover linear models. We achieve predictive accuracy
because we replace the linear component of the synthetic controls
with a powerful model inspired by machine learning, namely the random
forests model. The ability to capture nonlinearities in a data-driven
way is a special feature of this model. This makes the tree-based
synthetic control method powerful, yet simple. We provide that the random forests regression model is asymptotically unbiased as well as consistent, which we use to establish consistency of the tree-based synthetic control method.

\vspace{0.166667in}
To demonstrate the applicability of the tree-based synthetic control
method, we evaluate the relocation of the US embassy from Tel Aviv to Jerusalem.
Specifically, we estimate the weekly number of conflicts in Israel
and Palestine in the counterfactual state of the world absent of the
embassy move. The estimates cover the period from the announcement
of the move on December 6, 2017, until November 3, 2018. Comparing
the estimates to the observed numbers, we find that the average number
of weekly conflicts in Israel and Palestine has increased by more
than 26 incidents since the move was announced. By placebo tests,
we show that the estimated effect of the embassy relocation is very unlikely
to be replicated if one were to arbitrarily relabel the treated unit
in the data given that the pre-treatment fit is reasonable.
To formally justify our results, we apply exact and robust conformal inference
tests and find statistical significance at the 1\% level.

\vspace{0.166667in}
We further compare the tree-based controls to state-of-the-art methods and conclude
that our method is data-driven and needs no linearity assumptions,
while it is not dominated even by the best of the comparison methods.
All comparison methods agree on the magnitude of the effect. 

\clearpage
\phantomsection
\addcontentsline{toc}{section}{References}
\bibliographystyle{ecta}

\newpage

\newpage 
\appendix

\renewcommand{\thesection}{\Alph{section}}
\numberwithin{equation}{section}

\clearpage
\section{Common trends in the Middle East}\label{Sec:Appendix_tbsc}
\begin{figure}[H] \begin{subfigure}{.5\textwidth}
		\begin{adjustbox}{max totalsize = {\textwidth}{0.9\textheight}, center}
			\includegraphics[width = \textwidth]{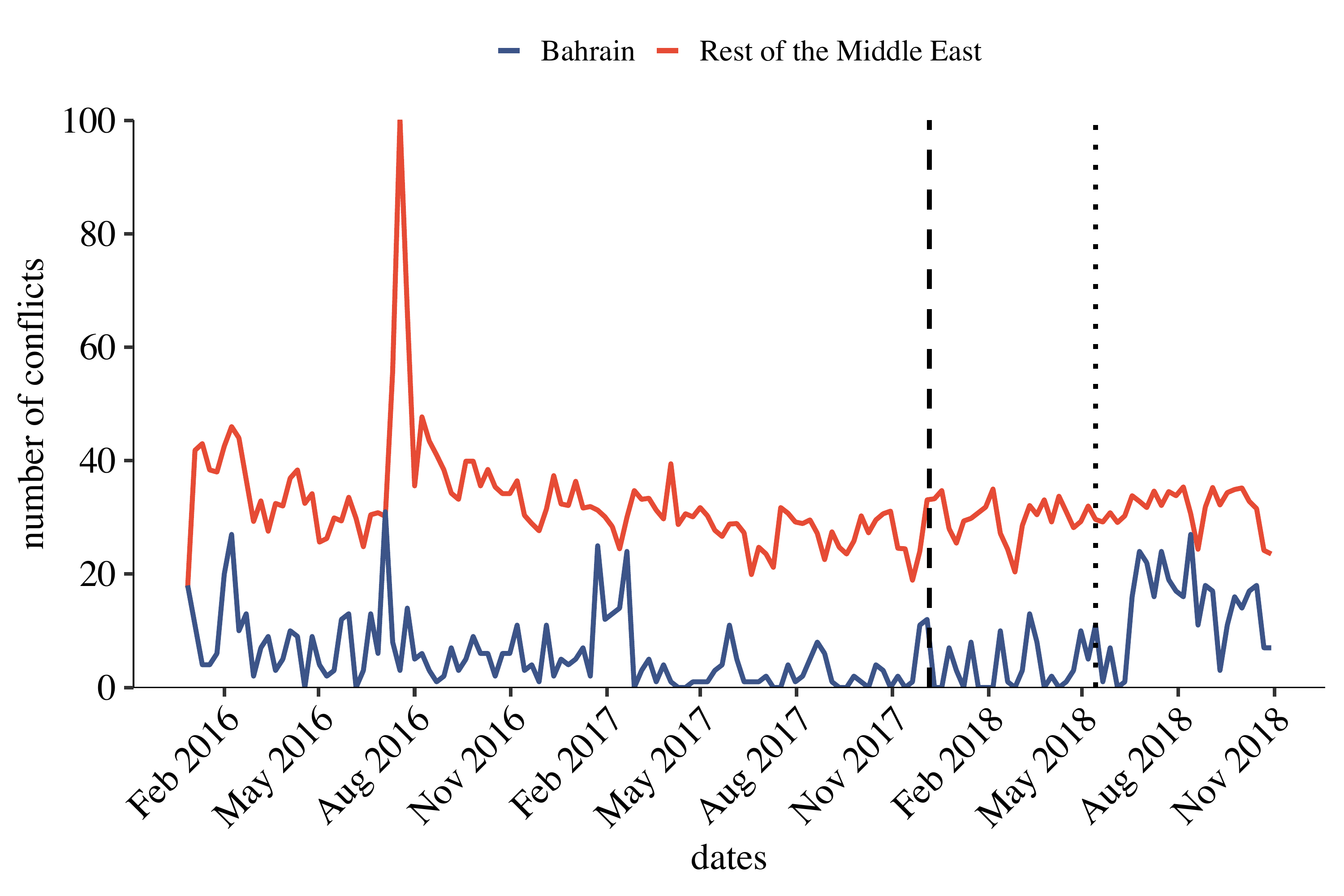}
		\end{adjustbox}
		\caption{Conflict trends in Bahrain}
		\label{Fig:descriptive_Bahrain}
	\end{subfigure}\hfill
\begin{subfigure}{.5\textwidth}
		\begin{adjustbox}{max totalsize = {\textwidth}{0.9\textheight}, center}
			\includegraphics[width = \textwidth]{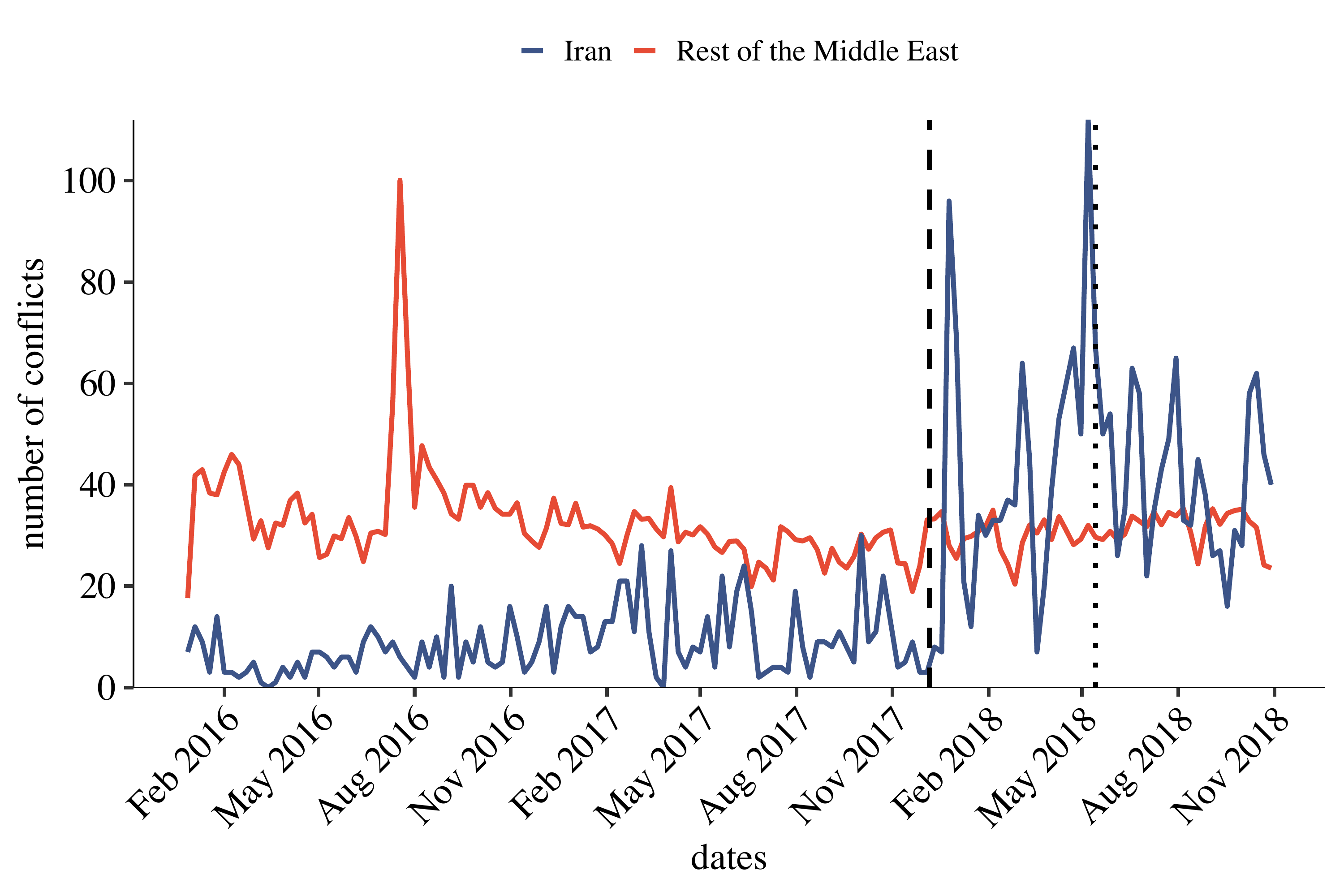}
		\end{adjustbox}	
		\caption{Conflict trends in Iran}
	\label{Fig:descriptive_Iran}
	\end{subfigure}
\begin{subfigure}{.5\textwidth}
		\begin{adjustbox}{max totalsize = {\textwidth}{0.9\textheight}, center}
			\includegraphics[width = \textwidth]{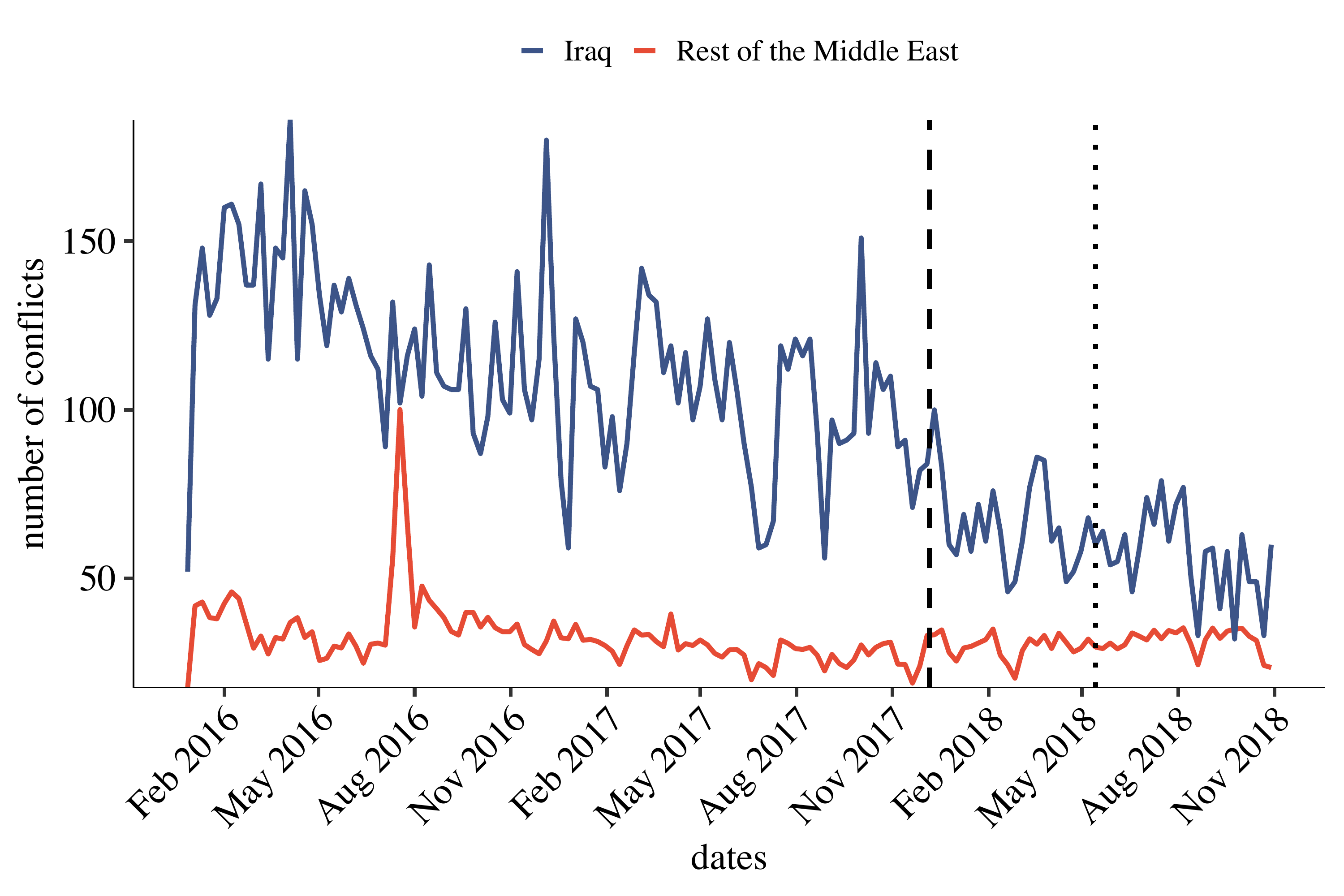}
		\end{adjustbox}	
		\caption{Conflict trends in Iraq}
		\label{Fig:descriptive_Iraq}
	\end{subfigure}\hfill
\begin{subfigure}{.5\textwidth}
		\begin{adjustbox}{max totalsize = {\textwidth}{0.9\textheight}, center}
			\includegraphics[width = \textwidth]{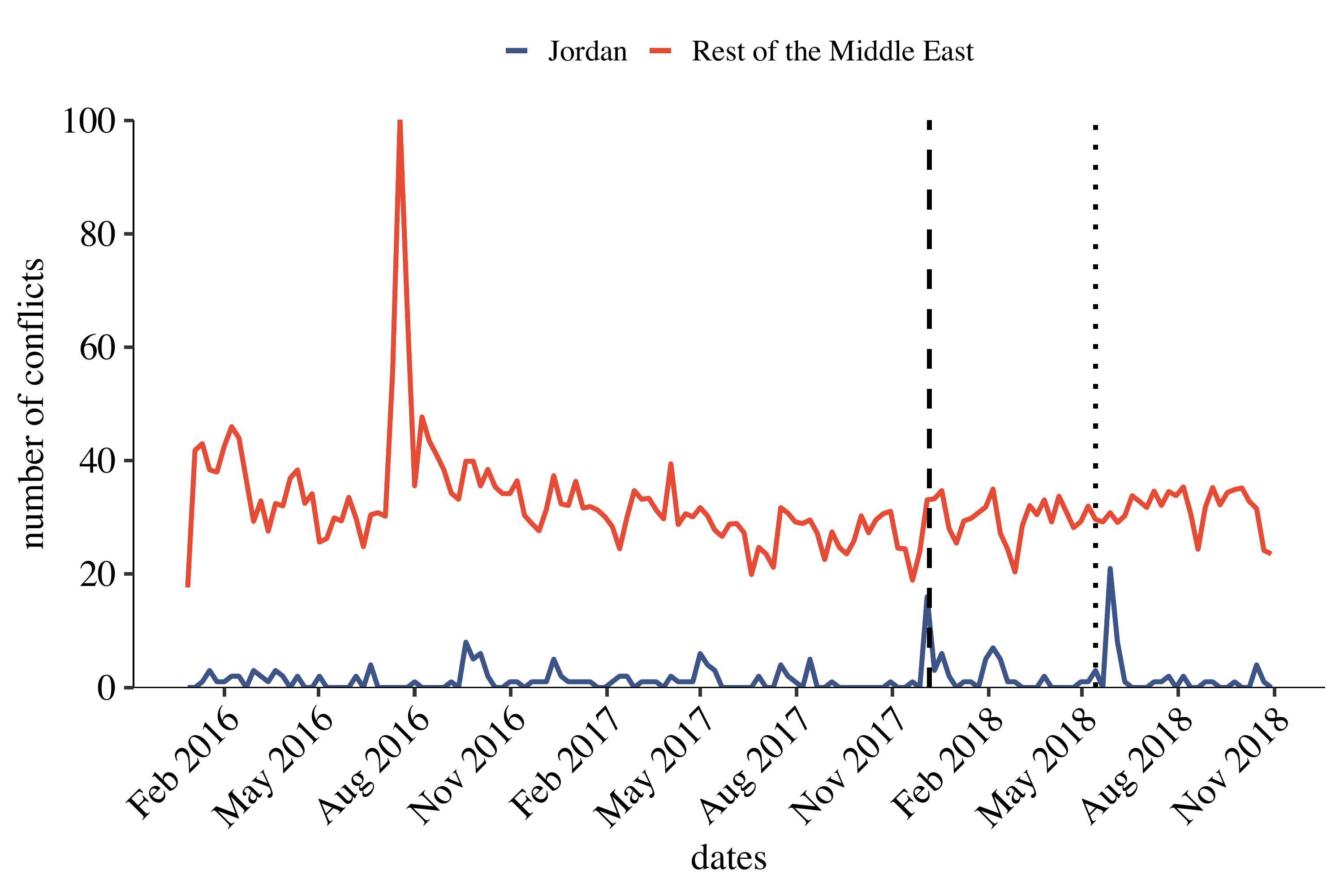}
		\end{adjustbox}
		\caption{Conflict trends in Jordan}
		\label{Fig:descriptive_Jordan}
	\end{subfigure}
\begin{subfigure}{.5\textwidth}
		\begin{adjustbox}{max totalsize = {\textwidth}{0.9\textheight}, center}
			\includegraphics[width = \textwidth]{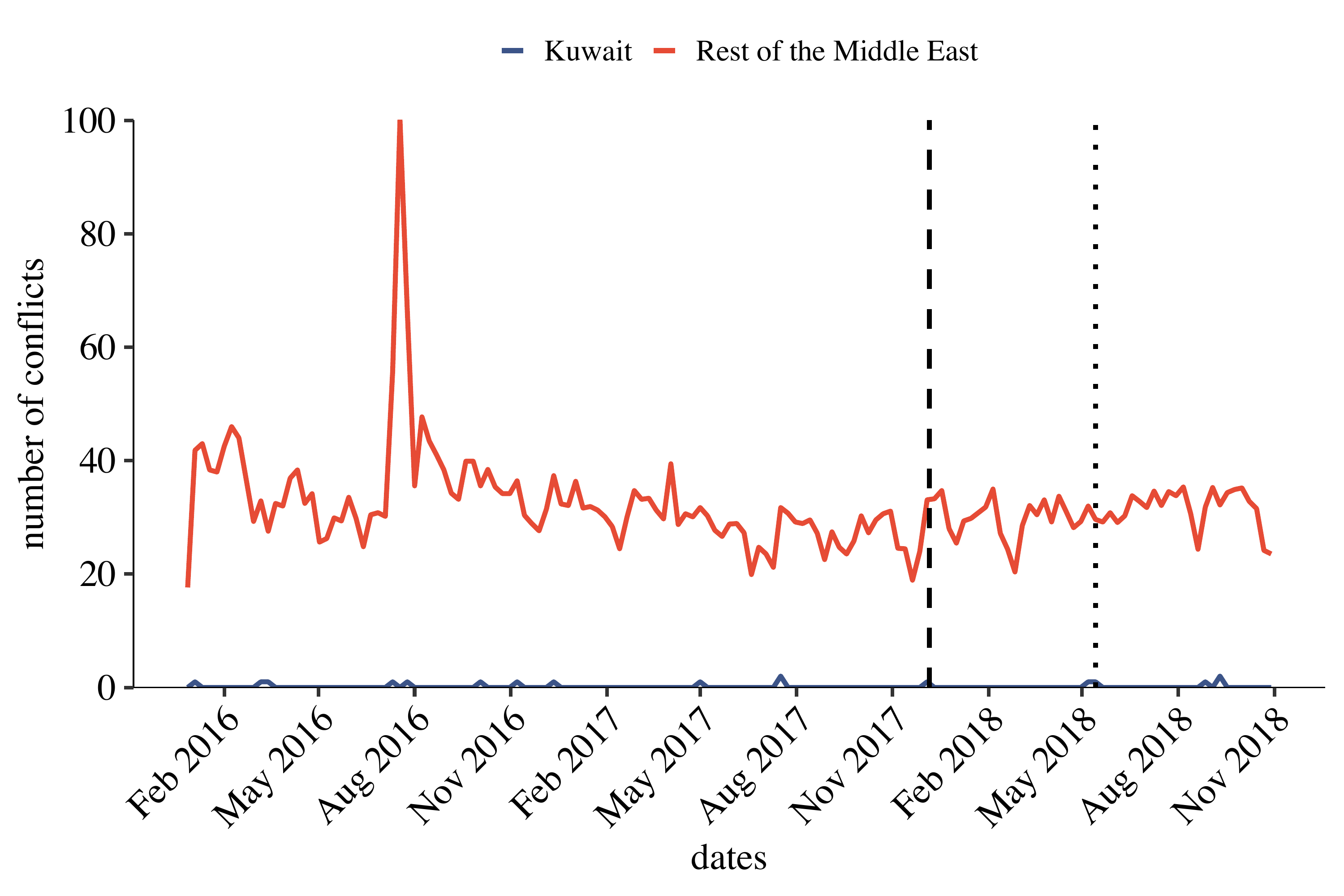}
		\end{adjustbox}	
		\caption{Conflict trends in Kuwait}
		\label{Fig:descriptive_Kuwait}
	\end{subfigure}\hfill
\begin{subfigure}{.5\textwidth}
		\begin{adjustbox}{max totalsize = {\textwidth}{0.9\textheight}, center}
			\includegraphics[width = \textwidth]{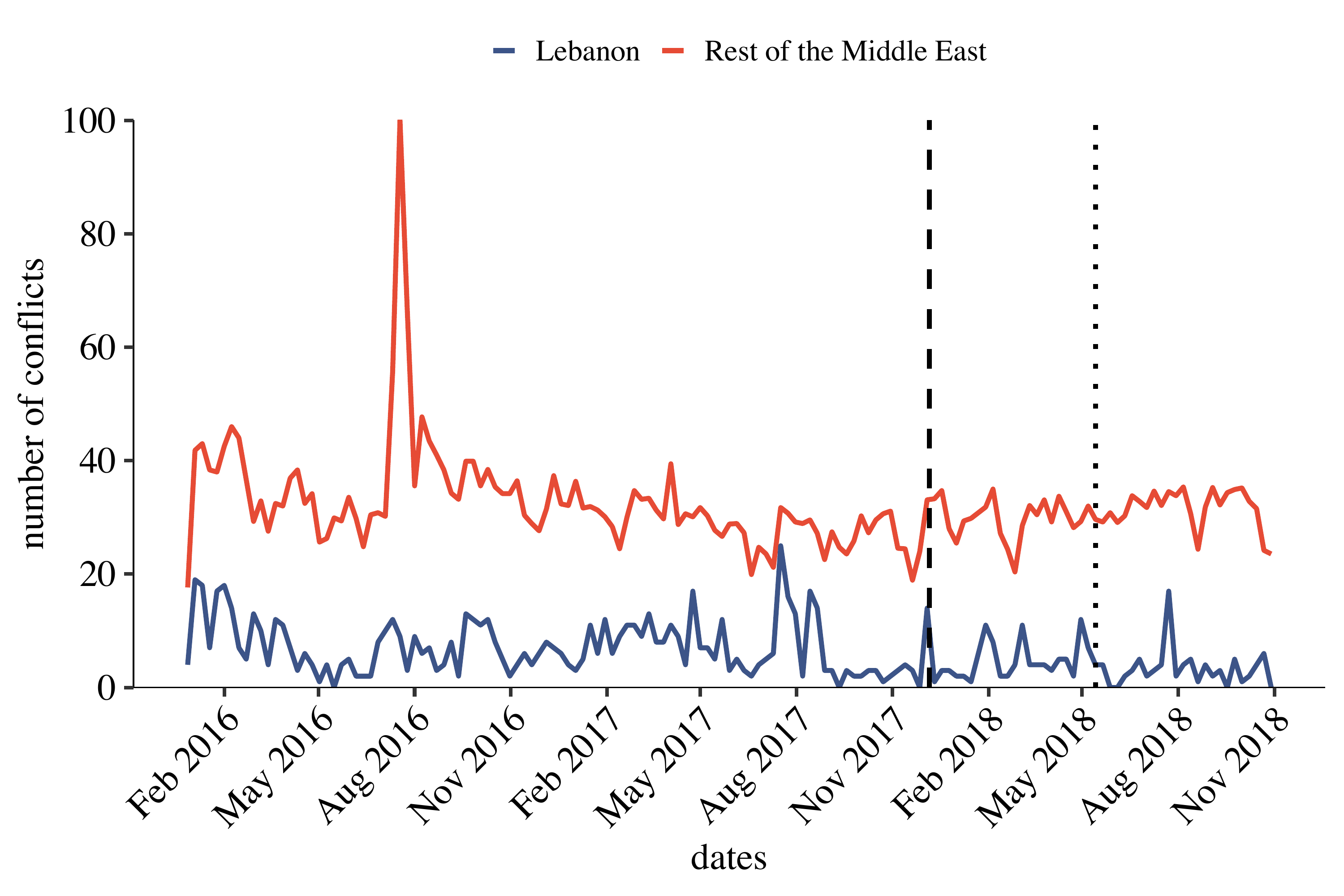}
		\end{adjustbox}
		\caption{Conflict trends in Lebanon}
		\label{Fig:descriptive_Lebanon}
	\end{subfigure}
\caption{Weekly Number of Conflicts in the Middle East (\RomNum{1})}
\label{fig: descriptive_conflicts_middle_east_1}
\vspace{0.166667in}
{\footnotesize
\textit{Notes:} Weekly number of conflicts
in each of the control countries in the Middle East together with
Iran (blue line) in addition to the average of the control countries
in the Middle East (red line). The vertical dashed and dotted lines
represent the date when the relocation of the US embassy was announced and
the date of the actual relocation, respectively.\par}
\end{figure}
\begin{figure}[H]
\begin{subfigure}{.5\textwidth}
		\begin{adjustbox}{max totalsize = {\textwidth}{0.9\textheight}, center}
			\includegraphics[width = \textwidth]{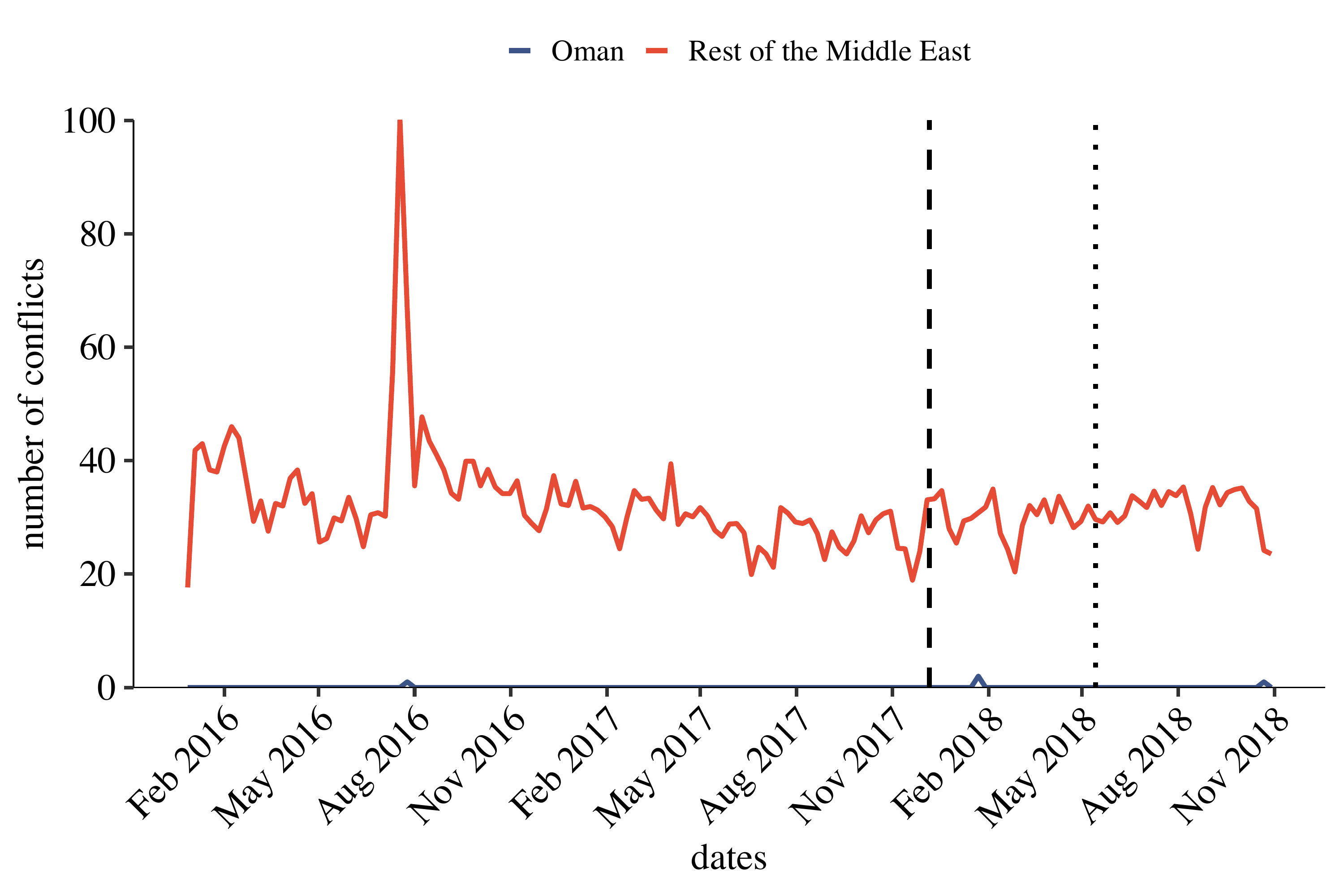}
		\end{adjustbox}
		\caption{Conflict trends in Oman}
		\label{Fig:descriptive_Oman}
	\end{subfigure}\hfill
\begin{subfigure}{.5\textwidth}
		\begin{adjustbox}{max totalsize = {\textwidth}{0.9\textheight}, center}
			\includegraphics[width = \textwidth]{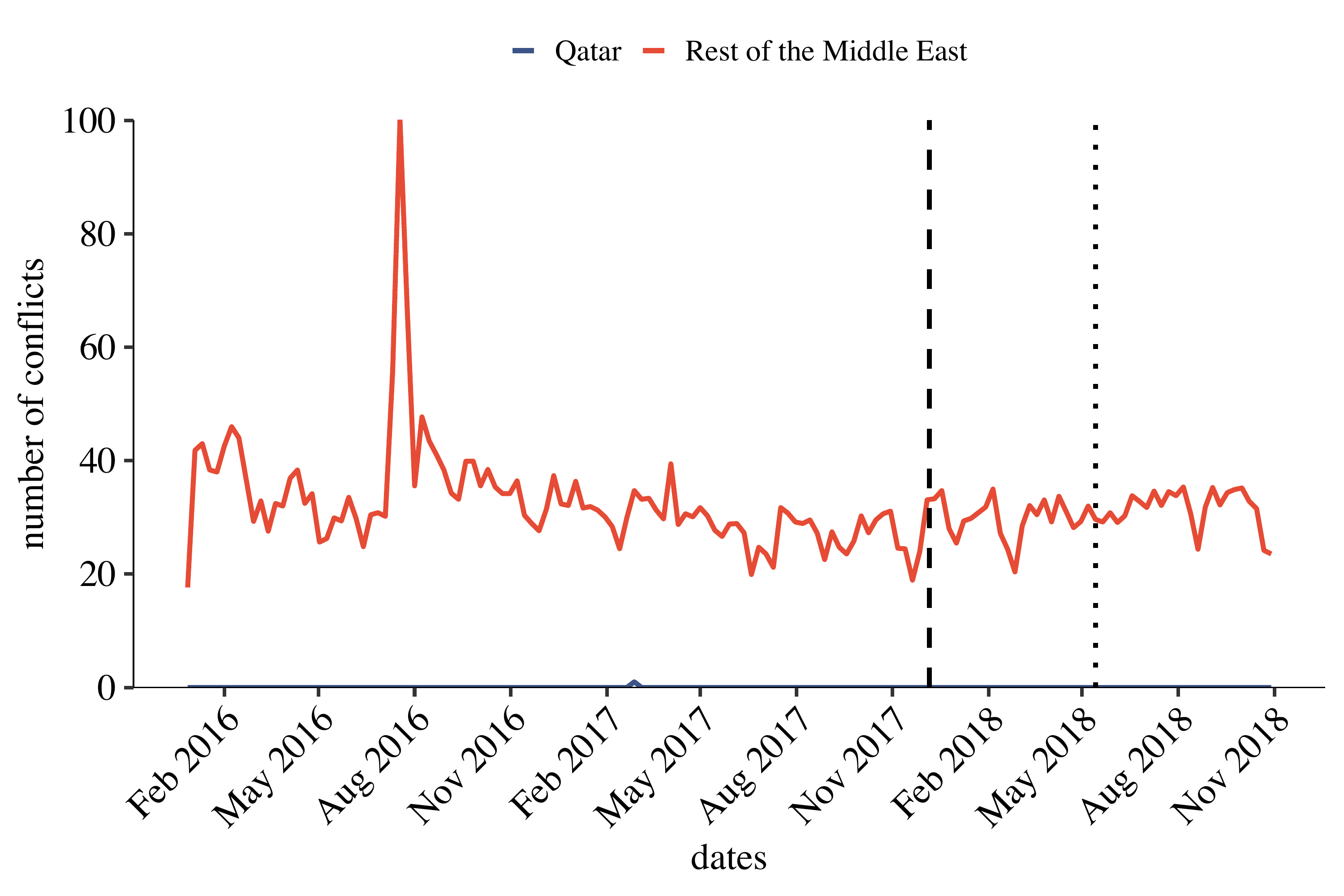}
		\end{adjustbox}	
		\caption{Conflict trends in Qatar}
	\label{Fig:descriptive_Qatar}
	\end{subfigure}
\begin{subfigure}{.5\textwidth}
		\begin{adjustbox}{max totalsize = {\textwidth}{0.9\textheight}, center}
			\includegraphics[width = \textwidth]{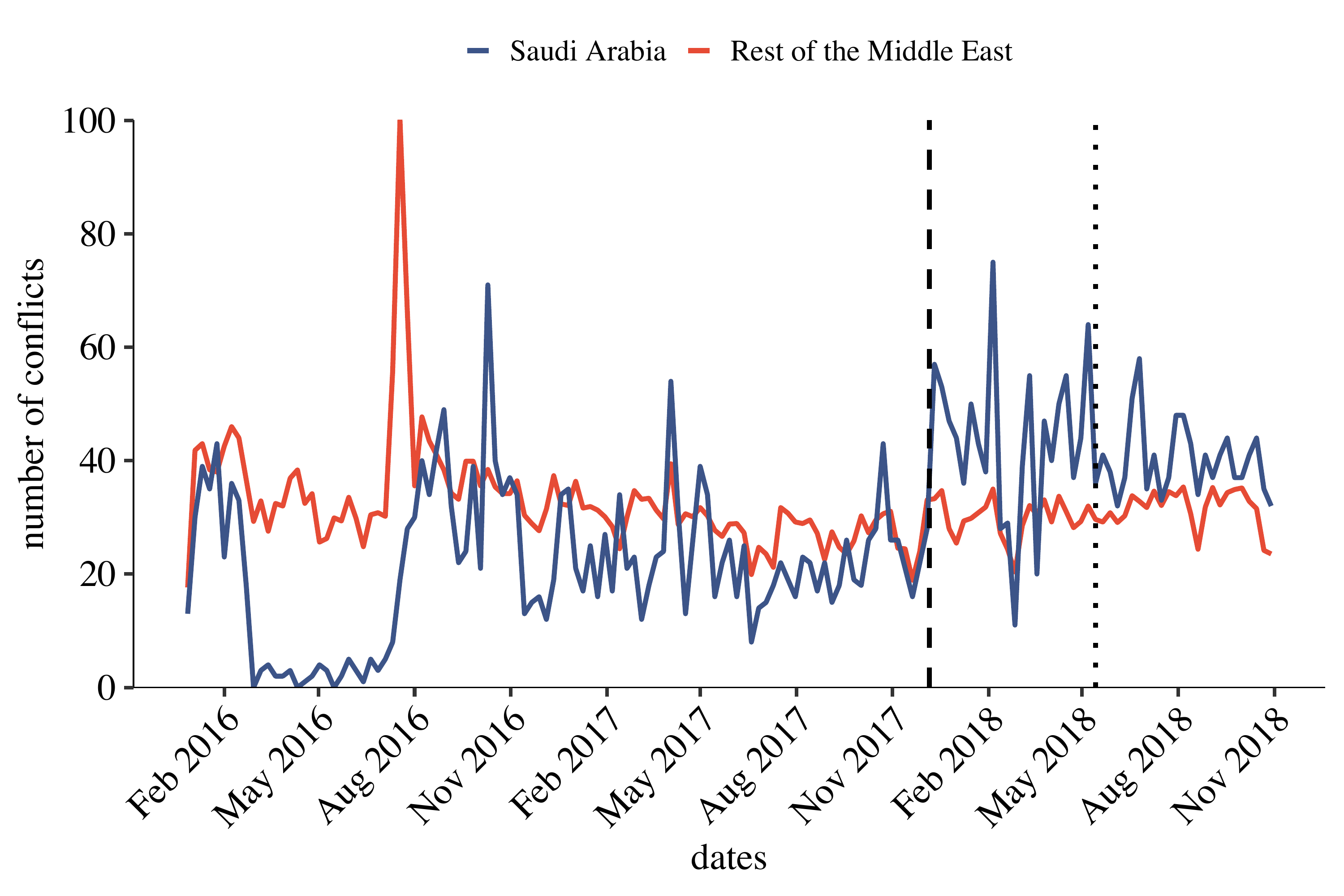}
		\end{adjustbox}	
		\caption{Conflict trends in Saudi Arabia}
		\label{Fig:descriptive_Saudi_Arabia}
	\end{subfigure}\hfill
\begin{subfigure}{.5\textwidth}
		\begin{adjustbox}{max totalsize = {\textwidth}{0.9\textheight}, center}
			\includegraphics[width = \textwidth]{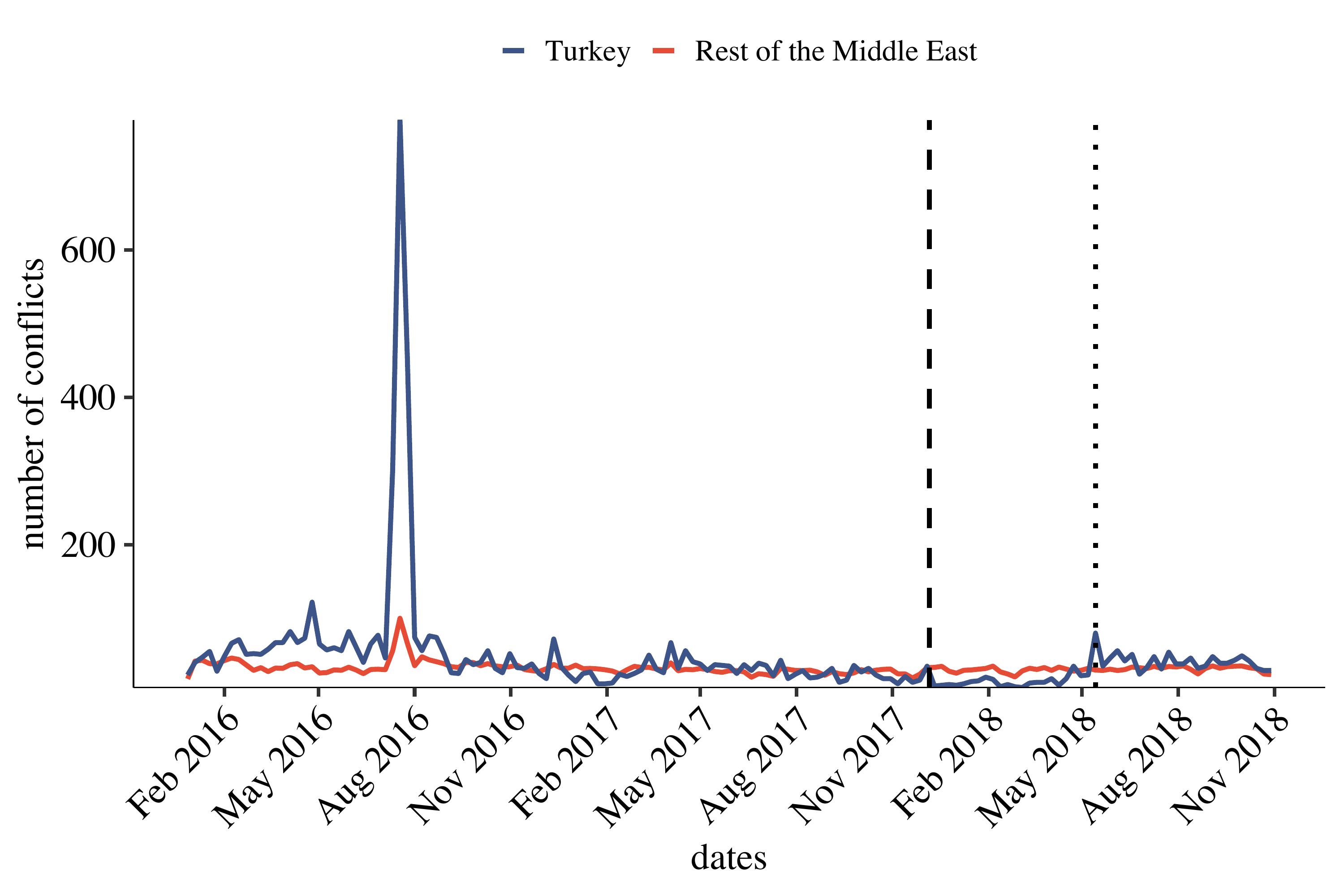}
		\end{adjustbox}
		\caption{Conflict trends in Turkey}
		\label{Fig:descriptive_Turkey}
	\end{subfigure}
\begin{subfigure}{.5\textwidth}
		\begin{adjustbox}{max totalsize = {\textwidth}{0.9\textheight}, center}
			\includegraphics[width = \textwidth]{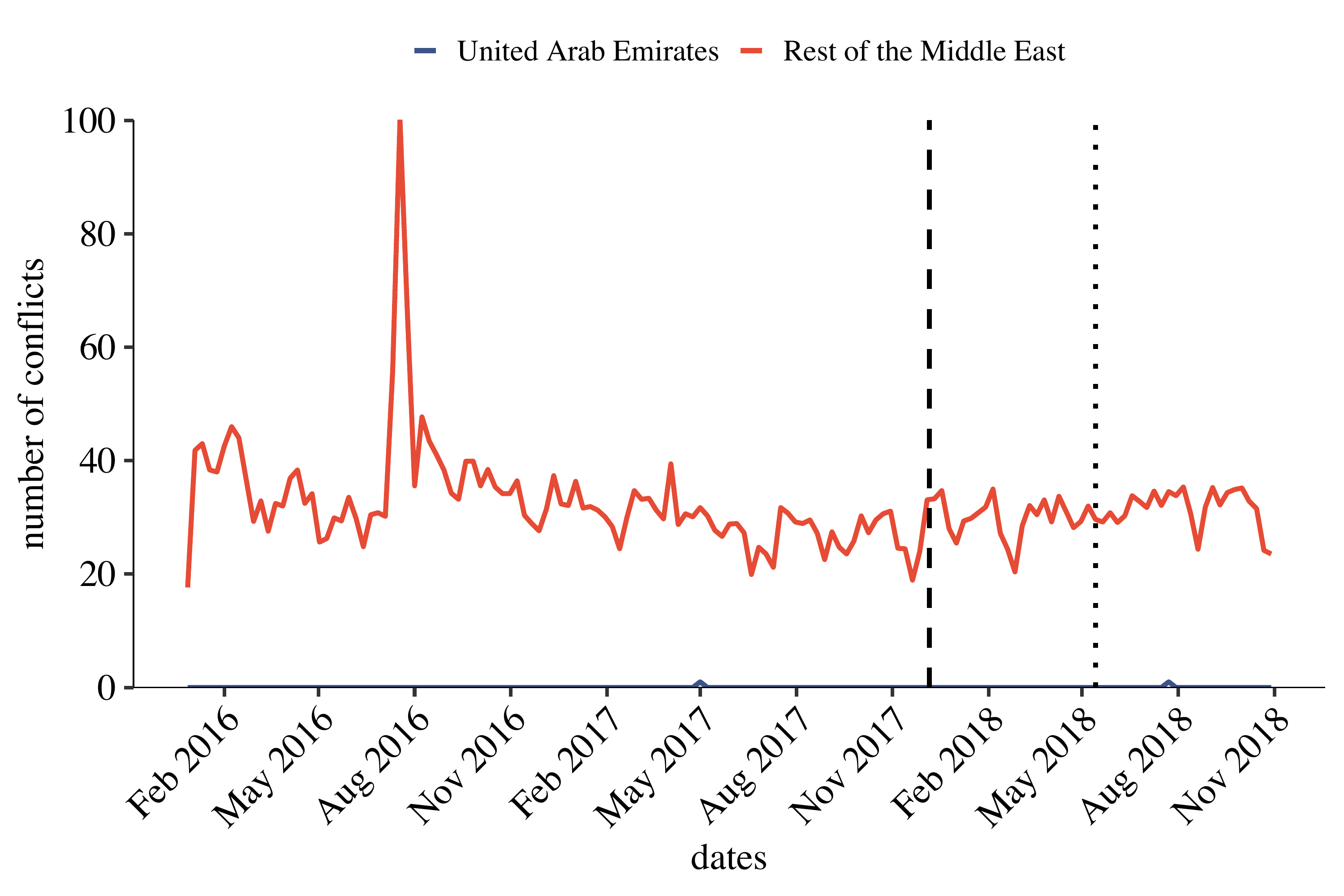}
		\end{adjustbox}	
		\caption{Conflict trends in United Arab Emirates}
		\label{Fig:descriptive_United_Arab_Emirates}
	\end{subfigure}\hfill
\begin{subfigure}{.5\textwidth}
		\begin{adjustbox}{max totalsize = {\textwidth}{0.9\textheight}, center}
			\includegraphics[width = \textwidth]{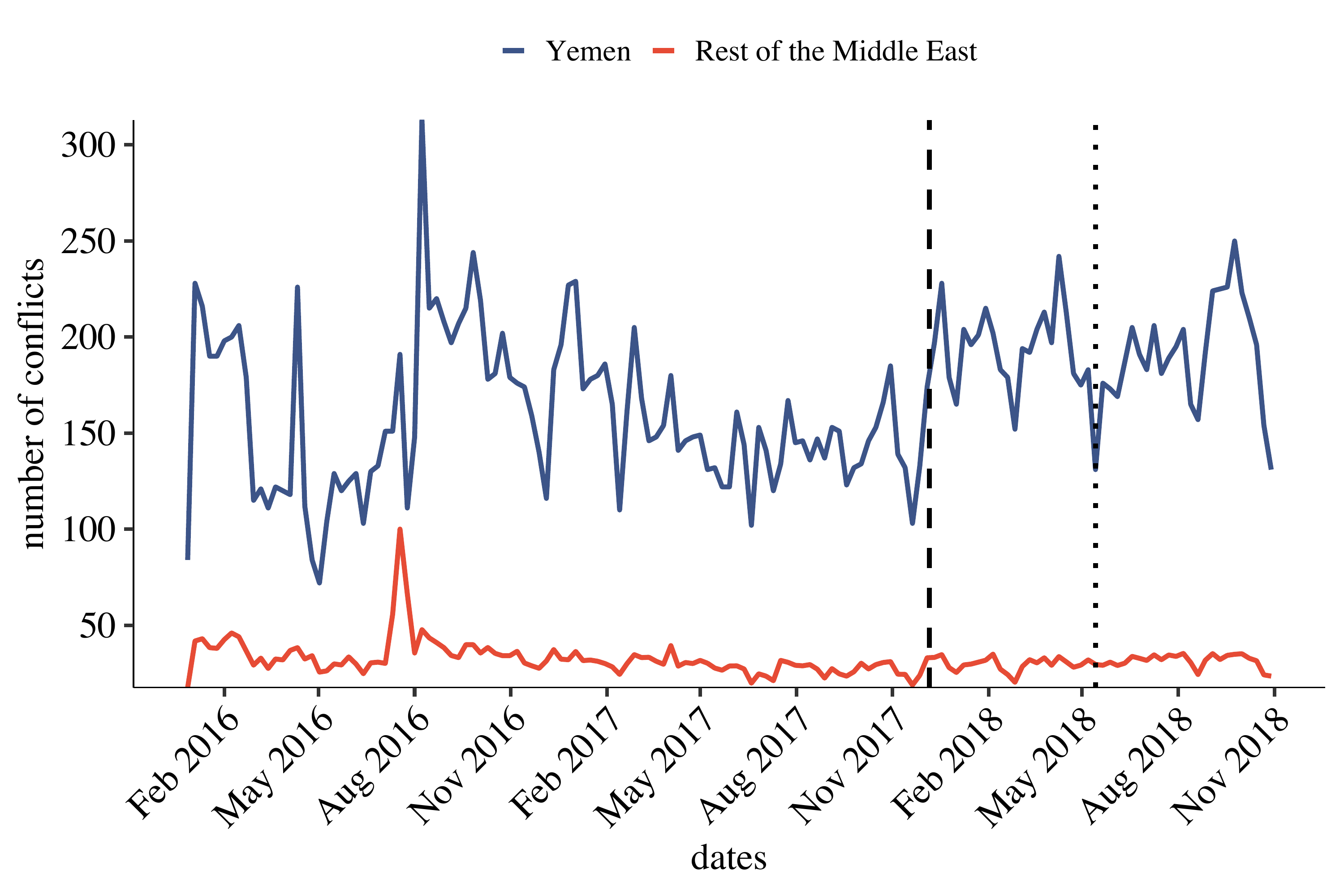}
		\end{adjustbox}
		\caption{Conflict trends in Yemen}
		\label{Fig:descriptive_Yemen}
	\end{subfigure}
\caption{Weekly Number of Conflicts in the Middle East (\RomNum{2})}
\label{fig: descriptive_conflicts_middle_east_2}
\vspace{0.166667in}
{\footnotesize
\textit{Notes:} Weekly number of conflicts
in each of the control countries in the Middle East together with
Iran (blue line) in addition to the average of the control countries
in the Middle East (red line). The vertical dashed and dotted lines
represent the date when the relocation of the US embassy was announced and
the date of the actual relocation, respectively.\par}
\end{figure} 
\clearpage
\section{Further considerations}\label{Sec:Appendix_practical}
\subsection{Implementation}
Choosing the best parametrization of the highly flexible tree-based
model is essential to avoid overfitting to the pre-intervention period.
To see this, imagine a single regression tree that is fully grown. Hence,
every leaf contains only one observation. Using this particular tree
in the pre-intervention period delivers a mean squared error of exactly
zero because it can fit every single observation perfectly, which
is not ideal. The same applies to random forests.

\vspace{0.166667in}
Therefore, we split the pre-intervention period further into an estimation sample and
a validation sample of relative sizes equal to 80\% and 20\%, respectively,
keeping the temporal ordering. We estimate the model on the estimation
sample and select the model complexity on the validation sample by
tuning hyperparameters. We tune the number of control units selected
for each tree, namely $m$, and set all other hyperparameters to their default.
Alternatively, out-of-bag predictions could be used to tune the hyperparameters as the out-of-bag error approximates well the generalization error \citep{Breiman2001}, leading to an efficient use of data. For dependent data, however, we choose the temporal sample split as it is more conservative.
Via this data splitting approach, we control
the bias and variance of the model. Similar ideas of sample splitting
have been suggested by \citet{Chernozhukov2018} and \citet{Chernozhukov2018a}.
We note, however, that we obtain essentially identical results using
default settings which is $m=\sqrt{N}$. We grow $B=500$ trees and
implement our tree-based method using the \textbf{sklearn} library
in Python. Similarly, one could implement the method using the \textbf{randomForest}
or \textbf{ranger} package in \textrm{R}, and the \textbf{TreeBagger}
class in MATLAB.

\vspace{0.166667in}
As a note on the relative performance of tree-based controls and other synthetic controls, recall that the objective is to estimate $\tau_{t}$ in the post-treatment period, which can be achieved as above by proper estimation of $f(x)$. Thus, the relative performance of these methods may be assessed by comparative studies of the underlying methods, e.g., least squares, elastic net, and random forests. Many studies assess the empirical performance of these methods, and for instance, both \cite{Medeiros2019} and \cite{Gu2020} claim that random forests achieve superior performance in most cases. In fact, \cite{howard2012two} claim that the method has been the most successful general-purpose algorithm in modern times, which underpins the need for nonparametric methods in program evaluation too. 

\subsection{Extensions}
\paragraph*{Multiple treated units}
Recent work on synthetic controls focuses on the case of multiple
treated units given its relevance in empirical applications (see, e.g., \citet{Hainmueller2012,Cavallo2013,Robbins2017}).
Incorporating multiple treated units into our framework would entail to
extending the univariate random forests model with a multivariate loss function and splitting rule. For
instance, \citet{Glenn2002} defines multivariate regression trees
analogously to a regression tree with the extension that the loss function
is the multivariate sum of squared error losses. The idea of partitioning
the space of the explanatory variables into disjoint regions and assigning
a constant to each region remains intact.

\vspace{0.166667in}
Another extension is provided
by \citet{Segal2011} who propose multivariate random forests. Again,
the core idea is the same, and the extension entails minimizing
a covariance-weighted loss of the multivariate sum of squared error
losses, where the covariance matrix is based on the multivariate response
function. The multivariate random forests have for instance been applied
by \citet{Pierdzioch2018} to forecasting multiple metal returns.
To estimate the treatment effects on multiple units, we suggest applying
the multivariate random forests directly instead of the random forests.
This would lead to a vector of counterfactual outcomes for the treated
units in each of the post-treatment periods.

\vspace{0.166667in}
\paragraph*{Importance of control units}
A key advantage of regression-based estimators and, in particular,
classical synthetic controls is the transparency of the resulting
counterfactual prediction due to the estimated weights. In the case
of synthetic controls, the counterfactual is a convex combination
of control units and a natural generalization of difference-in-differences.
In contrast, many nonparametric methods optimized for prediction and,
in particular, machine learning methods do not come with such transparency
and are often viewed as non-interpretable black boxes. We briefly
explain two approaches that would allow one to recover part of the
transparency. Particularly for forests, variable importance measures have been centered around split counts or total impurity decrease contributed by all splits for a given predictor variable. \cite{Lundberg2017} claim, however, that both methods are inconsistent, meaning that increasing the importance of a predictor may lead to a lower score using these two measures. In contrast, permutation-based variable importance or SHAP values are consistent ways to assessing variable importance. The former randomly permutes the values of a predictor variable and compute the changes in the objective function (see, e.g., the conditional variable importance measure by \cite{Strobl2008}), whereas the latter computes the marginal change in the predictions when the predictor variable is added to the model averaged over all predictor permutations \citep{Lundberg2017}. Both approaches would allow researchers
to assess which of the control units that drive the counterfactual prediction. Note that because our tree-based counterfactual prediction
is not a weighted average of control units but an average of treated
outcomes in the pre-treatment period, the two approaches would rather
assess which of the control units that are important drivers for computing
the similarities between subperiods and eventually group them.

\paragraph*{Treatment effects beyond the mean}
We comment on the ability of the model to recover treatment
effects beyond the mean. The random forests model estimates the conditional mean by an
averaged prediction of $B$ regression trees, which is essentially a
weighted mean over the observations of $Y_t^0$ for $t \leq T_0$. Likewise, one could define an approximation
to $\mathbb{E}\left[\mathds{1}\left\{ Y_1^0\leq y\right\} \vert X_1=x\right]$
by the weighted average over observations of $\mathds{1}\left\{ Y_1^0\leq y\right\}$. This approximation is suggested by \cite{Meinshausen2006}, leading
to quantile regression forests. Quantile regression forests are a consistent
estimator of the conditional distributions and the quantile functions.
To estimate other distributional properties of the treatment effect than the mean using tree-based controls,
we recommend using the quantile random forests
and estimate the treatment effects over a range of quantiles. 

\clearpage
\section{Proofs}\label{Sec:Appendix_proof}
Before turning to the proof, we rigorously define regression tree and corresponding forests. To this end, let $(\mathcal{P}_n)$ be a sequence of partitions of the input space $\mathcal{X}$ defined by starting from $\mathcal{P}_1 = \{\mathcal{X}\}$ and then, for each $n\geq 1$, obtaining $\mathcal{P}_{n+1}$ from $\mathcal{P}_n$ by replacing a node $A\in \mathcal{P}_n$ by $A_L= \{x\in A\, :\, x^{(i)} \leq z\}$ and $A_R= \{x\in A\, :\, x^{(i)} > z\}$. Here the split direction $i\in \{1,\dots, N\}$ and position $z \in \{x^{(i)} \, :\, x \in A\}$ are determined by a given set of rules, which is allowed to depend on both the data $\mathcal{D}_{T_0}$ as well as a randomization parameter $\Theta$ (the latter is important in order to be able to construct a forest consisting of diverse trees). Furthermore, $x^{(i)}$ refers to the $i$th entry of $x$. A partition $\Lambda$ is said to be \emph{recursive} if $\Lambda = \mathcal{P}_n$ for some $n \geq 1$ where $\mathcal{P}_1,\dots,\mathcal{P}_n$ are obtained as above. Recursive partitions are exactly those which can be depicted as (binary) trees. Hence, given any recursive partition $\Lambda$, the corresponding regression tree $T_\Lambda$ is defined as
\begin{equation*}
T_\Lambda (x) = \frac{1}{\vert \{t\in [T_0]\, :\, X_t \in A_\Lambda (x)\}\vert}\sum_{t\in [T_0]\colon X_t \in A_\Lambda (x)} Y^0_t,\qquad x\in \mathcal{X},
\end{equation*}
where $A_\Lambda (x)$ refers to the unique set in $\Lambda$ which contains $x$, and $[T_0] = \{1,\dots, T_0\}$. Given a family of recursive partitions, ${\boldsymbol \Lambda} = \{\Lambda_1,\dots, \Lambda_B\}$, the random forest $H_{\boldsymbol \Lambda}$ is an average across the corresponding $B\geq 1$ regression trees, i.e.,
\begin{equation}\label{forests}
H_{\boldsymbol \Lambda} (x) = \frac{1}{B}\sum_{b=1}^B T_{\Lambda_b} (x).
\end{equation}
We remark that, in contrast to the original random forests by \cite{Breiman2001}, all trees $T_{\Lambda_1},\dots, T_{\Lambda_B}$ are built on the same data set $\mathcal{D}_{T_0}$. In particular, we exclude the bootstrap step in the theoretical analysis, meaning that randomness must be injected through $\Theta$. In relation to this we note that, since observations cannot be assumed to be independent in our time series setting, one should apply a different bootstrap approach than the i.i.d. version (e.g., growing trees on block bootstrap samples). While we have imposed no requirements on the splitting scheme used to obtain regression trees in this appendix, we will restrict the attention to the $(\alpha,k,m)$-forests which were introduced in \ref{as1}--\ref{as4} of Section~\ref{consSection}.

\subsection{Consistency of $(\alpha,k,m)$-forests}
To prove Theorem~\ref{consistency} we introduce the following assumptions on the data-generating process.

\begin{assump}\label{dataAssump}
	\begin{enumerate}[label=(A\arabic*)]
		\item\label{a1} The sequence $(X_t,Y^0_t)_{t\in \mathbb{Z}}$ is stationary and has exponentially decaying strong mixing coefficients, that is,
		\begin{equation*}
		\alpha (t) = \sup_{A\in \mathcal{F}_0,\, B\in \mathcal{F}^{t}} \vert \mathbb{P} (A\cap B) - \mathbb{P}(A)\mathbb{P}(B)\vert \leq e^{-\gamma t}, \qquad t \geq 1,
		\end{equation*}
		for some $\gamma \in (0,\infty)$ using the notation $\mathcal{F}_t =  \sigma ((X_s,Y^0_s)\, :\, s\leq t)$ and $\mathcal{F}^t = \sigma ((X_s,Y^0_s)\, :\, s\geq t)$.
		
		\item\label{a2} The target variable $Y^0_t$ is bounded, that is, $\vert Y^0_t \vert \leq M$ almost surely for some $M\in (0,\infty)$.
		
		\item\label{a4} The regression function $f(x) = \mathbb{E}[Y^0_1 \mid X_1 = x]$ is Lipschitz continuous, i.e.,
		\begin{equation*}
		\vert f(x) - f(x^\prime)\vert \leq C\lVert x- x^\prime \rVert\qquad \text{for all $x,x^\prime \in \mathcal{X}$},
		\end{equation*}
		where $C\in (0,\infty)$ is a constant and $\lVert \: \cdot \: \rVert$ is any norm on $\mathbb{R}^N$.
		
		\item\label{a3} There exist monotone bijections $\iota_i\colon \mathcal{X}_i\to [0,1]^N$, $i=1,\dots, N$, such that the transformed vector of covariates $Z_t = (\iota_1 (X^{(1)}_t),\dots, \iota_N (X^{(N)}_t))$ admits a density $f_Z\colon [0,1]^N\to [0,\infty)$ with
		\begin{equation}\label{copulaRestrict}
		\zeta^{-1}\leq f_Z(z) \leq \zeta,\qquad z \in [0,1]^N,
		\end{equation}
		for some $\zeta \in (1,\infty)$.
	\end{enumerate}
\end{assump}
As already mentioned in Section~\ref{sec:Tree-based-Control-Methods}, part~\ref{a1} is classical when proving asymptotic results, particularly when one is able to obtain convergence rates as is the case in Lemma~\ref{treesConcentration} presented below. The second part, \ref{a2}, is not really a restriction from a practical point of view as $M$ can be chosen arbitrarily large.  It can indeed be relaxed considerably by instead requiring that the regression function $f$ is bounded and imposing a couple of other technical conditions. Part~\ref{a4} is used to obtain pointwise consistency, since then we have that $\mathbb{E}[Y^0_1\mid X_1\in N(x)]$ is close to $f(x)$ when $N(x)\subseteq \mathcal{X}$ is a ``small'' neighborhood of $x$. Finally, \ref{a3} is a technical condition that allows us to restrict attention to recursive partitions of $[0,1]^N$ and a vector of covariates $Z_t$ whose distribution is of the ``same order'' as the Lebesgue measure, i.e.,
\begin{equation*}
\zeta^{-1} \Leb (A) \leq \mathbb{P}(Z_t\in A) \leq \zeta \Leb (A).
\end{equation*} 
Roughly speaking, the condition means that entries in $X_t = (X^{(1)}_t,\dots, X^{(N)}_t)$ (i.e., the different covariates at a fixed point in time) are not too dependent. For instance, if $X^{(i)}_t$ admits a strictly positive density $f_i\colon \mathbb{R}\to (0,\infty)$ for $i=1,\dots, N$, then one can take $\mathcal{X}_i = [-\infty,\infty]$ and 
\begin{equation*}
\iota_i (y) = \int_{-\infty}^y f_i (z)\, \dd z,\qquad y \in \mathbb{R},
\end{equation*}
$\iota_i (-\infty) = 0$, and $\iota_i(\infty) =1$. With this choice, \eqref{copulaRestrict} simply means that the copula density of $X_t$ is bounded from below and above by suitable positive constants. A special case of such weak dependence is when $X^{(1)}_t,\dots, X^{(N)}_t$ are independent. While the above choice of $\iota_i$ may seem to be the only natural one, there are indeed other situations where alternative specifications are relevant; e.g., in the nonlinear autoregressive setting $X^{(i)}_t = Y^0_{t-i}$ it is shown in \citet[Lemma~1]{davis2020rf} that \ref{a3} is always satisfied for a different choice of $\iota_i$ under suitable assumptions on the data-generating process of $(Y^0_t)$. As a final remark we emphasize that, although \ref{a3} requires the existence of such bijections, there is no need to transform the data in practice before feeding it into the algorithm. 

Theorem~\ref{consistency} makes use of the same ideas as in \citet{davis2020rf} and \citet{wager2015adaptive}, but we have imposed a different set of assumptions and need to modify the proofs accordingly. This means that, while we write out most details here, we will also sometimes instead give specific references to those papers whenever it is appropriate. The key to consistency is Lemma~\ref{treesConcentration} below which shows that, with high probability, regression trees with ``non-negligible'' leaves concentrate around their theoretical counterpart in a uniform sense. To be precise, define for any recursive partition $\Lambda = \Lambda (\mathcal{D}_{T_0},\Theta)$ the corresponding \textit{partition-optimal} tree
\begin{equation*}
T_\Lambda^\ast (x) = \mathbb{E}_\Lambda [Y^0_1 \mid X_1 \in A_\Lambda (x)],\qquad x \in \mathcal{X}.
\end{equation*}
Here $\mathbb{E}_\Lambda$ denotes expectation with respect to the conditional probability measure $\mathbb{P}_\Lambda = \mathbb{P}(\: \cdot \: \mid \mathcal{D}_{T_0},\Theta)$. This means that $T^\ast_\Lambda (x)$ is simply the map $A\mapsto \mathbb{E}[Y\mid X\in A]$ evaluated at $A_\Lambda (x)$. Furthermore, let $\mathcal{V}_k$ denote the collection of all recursive partitions whose leaves contain at least $k$ observations, i.e.,
\begin{equation*}
\mathcal{V}_k = \{\Lambda\, :\, \text{$\Lambda$ is recursive and $\vert \{t \in [T_0]\, :\, X_t \in A\}\vert \geq k$ for all $A\in \Lambda$}\}.
\end{equation*}
We are now ready to formulate the uniform concentration result for regression trees.
\begin{lemma}\label{treesConcentration}
	Suppose that Assumption~\ref{dataAssump} is satisfied and that $k/(\log T_0)^4 \to \infty$ as $T_0\to \infty$. Then there exists a constant $\beta \in (0,\infty)$ such that, with probability at least $1-2T_0^{-1}$ for all sufficiently large $T_0$,
	\begin{equation}\label{tResult}
	\sup_{(x,\Lambda)\in \mathcal{X}\times \mathcal{V}_k} \vert T_\Lambda (x) - T_\Lambda^\ast (x)\vert \leq \beta \frac{\log T_0}{\sqrt{k}}.
	\end{equation}
\end{lemma}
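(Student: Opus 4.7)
The approach will be a classical reduction-plus-union-bound scheme: reduce the supremum to a combinatorially finite family of candidate leaves, control each leaf by a Bernstein-type inequality for strongly mixing sequences, and then union-bound.

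\emph{Step 1 (reduction to cells).} The first observation is that $T_\Lambda(x)-T_\Lambda^\ast(x)$ depends on $(x,\Lambda)$ only through the leaf $A=A_\Lambda(x)$, so the supremum in \eqref{tResult} equals
\begin{equation*}
	\sup_{A\in \mathcal{A}_k}\left|\frac{S(A)}{N(A)} - \frac{\mu(A)}{p(A)}\right|,
\end{equation*}
where $S(A)=\sum_{t\in [T_0]\colon X_t\in A}Y^0_t$, $N(A)=\#\{t\in [T_0]\colon X_t\in A\}$, $\mu(A)=\mathbb{E}[Y^0_1\mathds{1}_{\{X_1\in A\}}]$, $p(A)=\mathbb{P}(X_1\in A)$, and $\mathcal{A}_k$ is the set of axis-aligned rectangles with $N(A)\geq k$. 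Since any such rectangle is determined, up to its data content, by at most $2N$ endpoints drawn from the observed coordinates (together with $\pm\infty$), a standard combinatorial bound gives $|\mathcal{A}_k|\leq (T_0+2)^{2N}$, i.e., polynomial in $T_0$.

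\emph{Step 2 (per-cell concentration).} For a fixed $A$, I would apply the Bernstein inequality for bounded, strongly mixing sequences of \cite{merlevede2009bernstein} to both $(Y^0_t\mathds{1}_{\{X_t\in A\}})_t$ and $(\mathds{1}_{\{X_t\in A\}})_t$, which are bounded by \ref{a2} and inherit exponential strong-mixing from \ref{a1}. For any prescribed $q>0$, this yields with probability at least $1-T_0^{-q}$,
\begin{equation*}
	|S(A)-T_0\,\mu(A)| + |N(A)-T_0\,p(A)| \lesssim \sqrt{T_0\,p(A)\log T_0}+(\log T_0)^2,
\end{equation*}
with constants depending on $M$, $\gamma$, and $q$; here the uniform density bound in \ref{a3} is what guarantees that the asymptotic variances of both sums are of order $p(A)$ uniformly in $A$. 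On the event $|N(A)-T_0 p(A)|\leq T_0 p(A)/2$, elementary algebra on the ratio and the bound $|\mu(A)/p(A)|\leq M$ from \ref{a2} produces
\begin{equation*}
	\left|\frac{S(A)}{N(A)}-\frac{\mu(A)}{p(A)}\right| \lesssim \frac{\sqrt{T_0\,p(A)\log T_0}+(\log T_0)^2}{N(A)}.
\end{equation*}

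\emph{Step 3 (union bound and rate).} I would pick $q$ so large that, after a union bound over $\mathcal{A}_k$, the total failure probability is at most $2T_0^{-1}$. On the resulting good event, $N(A)\geq k$ combined with $|N(A)-T_0 p(A)|\leq T_0 p(A)/2$ forces $T_0 p(A)\geq 2N(A)/3\geq 2k/3$, so the previous display is of order $\sqrt{\log T_0/k}+(\log T_0)^2/k$. The hypothesis $k/(\log T_0)^4\to \infty$ (together with $\log T_0\geq 1$) makes both terms $O(\log T_0/\sqrt{k})$, delivering \eqref{tResult} with a suitable $\beta$.

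\emph{Main obstacle.} The delicate part will be to synchronise two random quantities on one good event: the Bernstein bound is naturally phrased in the deterministic $p(A)$, whereas the target bound must hold uniformly in the random $N(A)\geq k$, and both requirements must be enforced simultaneously for every $A\in \mathcal{A}_k$. The additive $(\log T_0)^2$ term in the mixing Bernstein inequality is exactly what drives the condition $k/(\log T_0)^4\to \infty$, and absorbing it into the final rate $\log T_0/\sqrt{k}$ is the technical heart of the argument; assumption \ref{a3} is what allows the Bernstein constants and the variance proxies to be chosen independently of $A$ when iterating over the polynomially many cells.
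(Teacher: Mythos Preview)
Your reduction in Step~1 contains the essential gap. The quantity $T_\Lambda(x)-T_\Lambda^\ast(x)$ indeed depends only on the leaf $A=A_\Lambda(x)$, but the ``combinatorial'' bound $\vert \mathcal{A}_k\vert \leq (T_0+2)^{2N}$ does not control the supremum you need. Two rectangles $A,A'$ containing the same data points satisfy $S(A)=S(A')$ and $N(A)=N(A')$, yet $\mu(A),p(A)$---and hence the deviation $\vert S(A)/N(A)-\mu(A)/p(A)\vert$---are in general different; so the supremum over $\mathcal{A}_k$ is genuinely over a continuum of rectangles, not over $(T_0+2)^{2N}$ equivalence classes. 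If instead you select one representative per class (e.g., the rectangle with endpoints at observed coordinates), the resulting finite family is \emph{data-dependent}, and the Bernstein inequality of \cite{merlevede2009bernstein}---valid for a \emph{fixed} set $A$---cannot be combined with a union bound over a random index set constructed from the same sample.

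The paper resolves precisely this issue by invoking the approximation scheme of \cite{wager2015adaptive}: after the monotone transformation in \ref{a3}, there exists a \emph{deterministic} family $\mathcal{R}$ of rectangles in $[0,1]^N$ with $\log\vert\mathcal{R}\vert=O(\log T_0)$ such that every rectangle of Lebesgue measure at least $w=k/(2\zeta T_0)$ admits $R_-\subseteq R\subseteq R_+$ with $R_\pm\in\mathcal{R}$ and $\Leb(R_\pm)$ within a factor $e^{\pm 1/\sqrt{k}}$ of $\Leb(R)$. The Bernstein-plus-union-bound is then carried out over the deterministic $\mathcal{R}$, and the error incurred when passing from $R$ to $R_-$ is controlled by a three-term decomposition (difference of conditional means, concentration on $R_-$, and difference of the two empirical averages). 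A separate argument on the good event shows that every rectangle with $\#R\geq k$ has $\Leb(R)\geq w$, so the approximation applies to all relevant leaves. A smaller point: your claim that the long-run variances are $O(p(A))$ uniformly in $A$ is slightly off in the mixing regime; using Rio's covariance inequality as in the paper, the covariance sum contributes an extra factor of order $\log(T_0/k)$, giving $\nu_A^2=O(p(A)\log T_0)$, and this feeds into why the final rate is $\log T_0/\sqrt{k}$ rather than $\sqrt{\log T_0/k}$.
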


\begin{proof}
Define $Z_t$ as in part~\ref{a3} of Assumption~\ref{dataAssump}. For any (hyper)rectangle $R\subseteq [0,1]^N$, set $\# R = \vert\{t\in [T_0]\, :\, Z_t\in R \}\vert$, $\mu (R) = \mathbb{P}(Z_1\in R)$, and $\xi (R) = \mathbb{E}[Y^0_1\mid Z_1\in R]$. Furthermore, put $\varepsilon = k^{-1/2}$ and $w = k/(2\zeta T_0)$, where $\zeta \in (1,\infty)$ is given such that \eqref{copulaRestrict} holds. Then, according to \citet{wager2015adaptive}, there exists a set of rectangles $\mathcal{R}$, whose cardinality $\vert \mathcal{R}\vert$ satisfies $\log \vert \mathcal{R}\vert = O(\log T_0)$, and which can approximate any set in
\begin{equation*}
\mathcal{R}_w = \{R\, :\, \text{$R\subseteq [0,1]^N$ is a rectangle with $\Leb (R)\geq w$}\}.
\end{equation*}
More precisely, given any $R\in \mathcal{R}_w$, there exist $R_-,R_+ \in \mathcal{R}$ such that
\begin{equation}\label{approx}
R_-\subseteq R\subseteq R_+\qquad \text{and}\qquad e^{-\varepsilon}\Leb (R_+)\leq \Leb (R)\leq e^\varepsilon\Leb (R_-).
\end{equation}
Here, for two sequences $(a_t)_{t\geq 1}$ and $(b_t)_{t\geq 1}\subseteq (0,\infty)$, the notation $a_t =O (b_t)$ means that $\limsup_{t\to \infty} \tfrac{\vert a_t\vert}{b_t}<\infty$. Following the arguments of \citet[pp.~13--14]{davis2020rf}, we have that
\begin{equation}\label{upperBounds}
\begin{aligned}
\MoveEqLeft\sup_{(x,\Lambda)\in \mathcal{X}\times \mathcal{V}_k} \vert T_\Lambda (x) - T_\Lambda^\ast (x)\vert \\
&\leq \sup_{R\in \mathcal{R}_w} \vert \xi (R) - \xi (R_-)\vert 
 + \sup_{R\in \mathcal{R}_w} \Bigl\vert\frac{1}{\# R_-}\sum_{t\in [T_0]\colon Z_t\in R_-} Y^0_t - \xi (R_-) \Bigr\vert \\
&\quad + \sup_{R\in \mathcal{R}_w} \Bigl\vert \frac{1}{\# R}\sum_{t\in [T_0]\colon Z_t\in R} Y^0_t - \frac{1}{\# R_-}\sum_{t\in [T_0]\colon Z_t\in R_-} Y^0_t \Bigr\vert
\end{aligned}
\end{equation}
on any event $\mathcal{A}$ satisfying
\begin{equation}\label{dominatingEvent}\mathcal{A} \supseteq \bigl\{\text{Any rectangle $R\subseteq [0,1]^N$ with $\# R \geq k$ has $\Leb (R) \geq w$}\bigr\}.
\end{equation}
In \eqref{upperBounds} it is implicitly understood that $R_-\in \mathcal{R}$ refers to the inner approximation of $R\in \mathcal{R}_w$ satisfying \eqref{approx} and that we use the convention $\tfrac{1}{0} \sum_{\emptyset} = 0$. It follows that it is sufficient to argue that, when $T_0$ is large, we can find an event $\mathcal{A}$, such that (i) $\mathbb{P}(A)\geq 1- 2T^{-1}_0$, (ii) each term on the right-hand side of \eqref{upperBounds} is bounded by $C\log T_0 / \sqrt{k}$ on $\mathcal{A}$ for a sufficiently large (absolute) constant $C$, and (iii) \eqref{dominatingEvent} holds. The specific event that we will consider is $\mathcal{A} = \mathcal{A}_1\cap \mathcal{A}_2$, where
\begin{align*}
\mathcal{A}_1 &= \biggl\{\sup_{R\in \mathcal{R}\colon \text{Leb} (R)\geq e^{-1}w}\frac{\vert \# R - T_0 \mu (R)\vert}{\sqrt{T_0 \mu (R)}} \leq c_1 \log T_0 \biggr\},\\
\mathcal{A}_2 &= \biggl\{\sup_{R\in \mathcal{R}\colon \text{Leb} (R)\geq e^{-1}w} \frac{\bigl\vert\frac{1}{T_0}\sum_{t\in [T_0]\colon Z_t\in R} Y^0_t - \mathbb{E}[Y^0_1\mathds{1}_R(Z_1)] \bigr\vert}{\mu (R)}\leq c_2 \frac{\log T_0}{\sqrt{k}}\biggr\}.
\end{align*}
Here $c_1,c_2\in (0,\infty)$ are suitably chosen constants. Concerning (i), we can rely on the exact same arguments as in the proof of \citet[Lemma~3]{davis2020rf} to deduce that $\mathbb{P}(\mathcal{A}_1) \geq 1-T_0^{-1}$. This holds for all sufficiently large $T_0$ and $c_1$ as long as we impose Assumption~\ref{dataAssump}, part \ref{a1}, and assume that $k/(\log T_0)^4 \to \infty$ when $T_0 \to \infty$. To show that $\mathbb{P}(\mathcal{A}_2) \geq 1- T_0^{-1}$, and thus $\mathbb{P}(\mathcal{A})\geq 1- 2T_0^{-1}$, consider an arbitrary rectangle $R\in\mathcal{R}$ with $\text{Leb} (R)\geq e^{-1}w$. Since the sequence $(Y^0_t\mathds{1}_R(Z_t))_{t\in\mathbb{Z}}$ is bounded (by part \ref{a2} of Assumption~\ref{dataAssump}) and its strong mixing coefficients are dominated by those of $(X_t,Y_t^0)_{t\in \mathbb{Z}}$, we can apply a Bernstein-type inequality for weakly dependent sequences (see \citet[Theorem~2]{merlevede2009bernstein}) to deduce that
\begin{equation}\label{f1}
\log \mathbb{P}\Bigl(\Bigl\vert \frac{1}{T_0} \sum_{t\in [T_0]\colon Z_t\in R}Y^0_t - \mathbb{E}[Y^0_1\mathds{1}_R(Z_1)]\Bigr\vert > x\Bigr) \leq \frac{-\delta x^2T_0}{\nu_R + T_0^{-1} + x (\log T_0)^2}
\end{equation}
for any $x\in (0,\infty)$ and small enough (generic) $\delta \in (0,\infty)$, where
\begin{equation}\label{varExpression}
\nu_R^2  = \text{Var}(Y^0_1\mathds{1}_R(Z_1)) + 2 \sum_{t=1}^\infty \vert \text{Cov}(Y^0_{t+1}\mathds{1}_R(Z_{t+1}),Y^0_1\mathds{1}_R(Z_1))\vert.
\end{equation}
Since $\inf\{y \in [0,\infty)\, :\, \mathbb{P}(\vert Y^0_1\vert \mathds{1}_R(Z_1) >y)\leq u\} \leq M \mathds{1}_{\{u\leq \mu (R)\}}$, Rio's covariance inequality (\citet[Theorem~1.1]{rio1993covariance}) implies
\begin{equation}\label{fact1}
\vert \text{Cov}(Y^0_{t+1}\mathds{1}_R(Z_{t+1}),Y^0_1\mathds{1}_R(Z_1))\vert \leq 4 M^2 \min \{e^{-\gamma t}, \mu (R)\}.
\end{equation}
Here we have used \ref{a1} and~\ref{a2} of Assumption~\ref{dataAssump}. By part~\ref{a3} of Assumption~\ref{dataAssump} we have $\mu (R)\geq k/(2e\zeta^2 T_0)$ (by the choice of $R$), and hence $\mu (R)\geq e^{-\gamma t/2}$ when $t \geq \tilde{T}\coloneqq \lceil 2\gamma^{-1}\log (2e\zeta^2 T_0/k) \rceil$. Consequently,
\begin{equation}\label{fact2}
\sum_{t=1}^\infty \min \{e^{-\gamma t},\mu (R)\} \leq \mu (R)\tilde{T} + \sum_{t=\tilde{T}+1}^\infty e^{-\gamma t} \leq \mu (R) \Bigl(\tilde{T} + \frac{1}{e^{\gamma/2}-1} \Bigr).
\end{equation}
By combining \eqref{varExpression}--\eqref{fact2}, we find that $\nu_R^2 = O(\mu (R)\log T_0)$ which, in view of \eqref{f1}, shows that 
\begin{equation*}
\log \mathbb{P}\Bigl(\Bigl\vert \frac{1}{T_0} \sum_{t\in [T_0]\colon Z_t\in R}Y^0_t - \mathbb{E}[Y^0_1\mathds{1}_R(Z_1)]\Bigr\vert > x\Bigr) \leq \frac{-\delta x^2T_0}{\max \{\mu (R) \log T_0, x (\log T_0)^2\}}
\end{equation*}
for a small $\delta$. In particular, with $\delta$ being sufficiently small,
\begin{equation*}
\mathbb{P}\Bigl(\Bigl\vert \frac{1}{T_0} \sum_{t\in [T_0]\colon Z_t\in R}Y^0_t - \mathbb{E}[Y^0_1\mathds{1}_R(Z_1)]\Bigr\vert > \delta^{-1} x\Bigr) \leq \frac{1}{\vert \mathcal{R}\vert T_0}
\end{equation*}
for
\begin{equation}\label{lowerX}
x = \max \biggl\{\frac{(\log T_0)^2\log (\vert \mathcal{R}\vert T_0)}{T_0}, \sqrt{\frac{\mu (R)}{T_0}\log T_0 \log (\vert \mathcal{R}\vert T_0)}\biggr\}.
\end{equation}
By the choice of $R$, the second term of the maximum in \eqref{lowerX} is largest when
\begin{equation}\label{kRestriction}
k \geq 2e\zeta^2 (\log T_0)^3 \log (\vert \mathcal{R}\vert T_0)
\end{equation}
Since $\log \vert \mathcal{R}\vert = O(\log T_0)$ and $k/(\log T_0)^4 \to\infty$ as $T_0 \to \infty$, \eqref{kRestriction} is satisfied for all sufficiently large $T_0$. Consequently, as we also have
\begin{equation*}
\frac{1}{\mu (R)}\sqrt{\frac{\mu (R)}{T_0} \log T_0\log (\vert \mathcal{R}\vert T_0)} = O\Bigl(\frac{\log T_0}{\sqrt{k}}\Bigr),
\end{equation*}
we conclude that
\begin{equation*}
\mathbb{P}\biggl(\frac{\bigl\vert \frac{1}{T_0} \sum_{t\in [T_0]\colon Z_t\in R}Y^0_t - \mathbb{E}[Y^0_1\mathds{1}_R(Z_1)]\bigr\vert}{\mu (R)} > c_2 \frac{\log T_0}{\sqrt{k}}\biggr) \leq \frac{1}{\vert \mathcal{R}\vert T_0}
\end{equation*}
for a suitable constant $c_2\in (0,\infty)$ and all sufficiently large $T_0$. By a union bound over all $R\in \mathcal{R}$ with $\text{Leb} (R)\geq e^{-1}w$ this shows that $\mathbb{P}(\mathcal{A}_2)\geq 1- T_0^{-1}$.

Now we turn the attention to (iii) and argue that $\mathcal{A}$ satisfies \eqref{dominatingEvent}. For a general $R\in \mathcal{R}_w$, the inequality of $\mathcal{A}_1$ applies to its outer approximation $R_+$, and this can be used to establish
\begin{equation}\label{allRectangles}
\sup_{R\in \mathcal{R}_w}\frac{\# R - e^{\zeta^2 \varepsilon}T_0 \mu (R)}{\sqrt{T_0 \mu (R)}} \leq c_1 e^{\zeta^2 \varepsilon/2}\log T_0.
\end{equation}
(See also \citet[Eq.~(5.23)]{davis2020rf}.) If we assume that $\Leb (R)< w$, \eqref{allRectangles} can always be applied to a larger rectangle $\tilde{R}\supseteq R$ with $\text{Leb}(\tilde{R}) = w$ to deduce that
\begin{equation}\label{numberOfObs}
\# R \leq \Bigl(\frac{e^{\zeta^2 \varepsilon}}{2} + \frac{c_1 e^{\zeta^2\varepsilon/2}\log T_0}{\sqrt{2k}} \Bigr)k
\end{equation}
Since $\log T_0 /\sqrt{k}\to 0$ by assumption, \eqref{numberOfObs} shows that $\# R<k$ as long as $T_0$ is exceeds a certain threshold (which does not depend on $R$). In other words, for $T_0$ sufficiently large, $\mathcal{A}_1$ (and thus $\mathcal{A}$) meets \eqref{dominatingEvent}.

Finally, we will establish (ii) which amounts to arguing that each of the three terms on the right-hand of \eqref{upperBounds} is smaller than $C \log T_0 / \sqrt{k}$. The first term is handled by part~\ref{a2} of Assumption~\ref{dataAssump}, \eqref{copulaRestrict}, and \eqref{approx}, i.e.,
\begin{equation}\label{term1}
\vert \xi (R) - \xi (R_-)\vert \leq 2M\frac{\mu (R\setminus R_-)}{\mu (R)}\leq 2M\zeta^2 (1-e^{-\varepsilon})\leq \frac{2M\zeta^2}{\sqrt{k}}.
\end{equation}
For the third term on the right-hand side of \eqref{upperBounds} we have that
\begin{align}\label{thirdTerm}
\MoveEqLeft \sup_{R\in \mathcal{R}_w}\Bigl\vert \frac{1}{\# R}\sum_{t\in [T_0]\colon Z_t\in R} Y^0_t - \frac{1}{\# R_-}\sum_{t\in [T_0]\colon Z_t\in R_-} Y^0_t \Bigr\vert \notag\\
&\leq \sup_{R\in \mathcal{R}_w}\biggl\{\Bigl(\frac{1}{\# R_-}-\frac{1}{\# R} \Bigr)\sum_{t\in [T_0]\colon Z_t \in R_-} \vert Y^0_t\vert + \frac{1}{\# R} \sum_{t\in [T_0]\colon Z_t \in R\setminus R_-}\vert Y^0_t\vert\biggr\} \notag \\
&\leq 2M \sup_{R\in \mathcal{R}_w} \frac{\# R - \# R_-}{\# R}.
\end{align}
To bound $(\# R - \# R_-)/\# R$ for $R\in \mathcal{R}_w$, apply the inequality of $\mathcal{A}_1$ to $R_-$ (which is allowed since $\Leb (R_-)\geq e^{-1}w$) and \eqref{allRectangles} to $R$, i.e., 
\begin{align}\label{firstPair}
\begin{array}{lcl}
\# R &\leq&  e^{\zeta^2 \varepsilon}T_0\mu (R) + c_1 e^{\zeta^2\varepsilon/2}\log T_0 \sqrt{T_0 \mu (R)},\\
\# R_- &\geq &  T_0 \mu (R_-)- c_1 \log T_0 \sqrt{T_0 \mu (R_-)}.
\end{array}
\end{align}
Since $T_0 \mu (R_-)\geq k/(4\zeta T_0)$, the last inequality of \eqref{firstPair} shows that $\sqrt{T_0 \mu (R_-)}\leq 2 \zeta \# R/\sqrt{k} + c_1 \log T_0$. Thus, by combining this inequality and the last one of \eqref{firstPair}, we have
\begin{equation}\label{secondIneq}
\# R \geq \frac{T_0 \mu (R_-)- c_1^2 (\log T_0)^2}{1+2c_1\zeta \log T_0/\sqrt{k}} \geq \frac{T_0 \mu (R_-)- c_1^2 (\log T_0)^2}{2}
\end{equation}
as long as $T_0$ is sufficiently large. Now, having the estimates \eqref{firstPair} and~\eqref{secondIneq} at hand, we can rely on the exact same arguments as in the proof of \citet[Lemma~4]{davis2020rf} to establish the bound
\begin{equation*}
\frac{\# R - \# R_-}{\# R} \leq \frac{3\zeta^2 +6c_1\log T_0}{\sqrt{k}}.
\end{equation*}
Together with \eqref{thirdTerm} this shows that the third term of \eqref{upperBounds} is bounded by $6M(\zeta^2 + 2c_1\log T_0)/\sqrt{k}$. For the second (and last) term on the right-hand side of \eqref{upperBounds} we note initially that, due to the inequality of $\mathcal{A}_1$,
\begin{equation*}
\# R \geq T_0 \mu (R)\Bigl(1 - \frac{c_1\log T_0}{\sqrt{T_0 \mu (R)}} \Bigr) \geq \frac{T_0 \mu (R)}{2}
\end{equation*}
for any $R\in\mathcal{R}$ with $\Leb (R)\geq e^{-1}w$ as long as $T_0$ is sufficiently large. Thus, on the set $\mathcal{A}$ (where both inequalities of $\mathcal{A}_1$ and $\mathcal{A}_2$ apply) we find that
\begin{align*}
\MoveEqLeft\sup_{R\in \mathcal{R}_w} \Bigl\vert\frac{1}{\# R_-}\sum_{t\in [T_0]\colon Z_t\in R_-} Y^0_t - \xi (R_-) \Bigr\vert\\
&\leq M\sup_{R\in \mathcal{R}\colon \Leb (R)\geq e^{-1}w}\frac{\vert \# R - T_0 \mu (R)\vert}{\# R}\\
&\quad + \sup_{R\in \mathcal{R}\colon \Leb (R)\geq e^{-1}w}  \frac{T_0}{\# R}\Bigl\vert \frac{1}{T_0}\sum_{t\in [T_0]\colon Z_t \in R}Y_t^0-\mathbb{E}[Y^0_1\mathds{1}_R(Z_1)] \Bigr\vert\\
&\leq 2M\sup_{R\in \mathcal{R}\colon \Leb (R)\geq e^{-1}w}\frac{\vert \# R - T_0 \mu (R)\vert}{T_0\mu (R)}\\
&\quad + 2\sup_{R\in \mathcal{R}\colon \Leb (R)\geq e^{-1}w}  \frac{\bigl\vert \frac{1}{T_0}\sum_{t\in [T_0]\colon Z_t \in R}Y_t^0-\mathbb{E}[Y^0_1\mathds{1}_R(Z_1)] \bigr\vert}{\mu (R)}\\
&\leq \bigl(2M\zeta c_1\sqrt{2e} + 2c_2 \bigr)\frac{\log T_0}{\sqrt{k}}.
\end{align*}
We conclude that all three terms on the right-hand side of \eqref{upperBounds} are bounded by a term of the form $C\log T_0/\sqrt{k}$ and, hence, the proof is complete.
\end{proof}

Lemma~\ref{treesConcentration} ensures that the empirical averages in leaves become close to their theoretical versions; in particular, for a tree associated to a partition $\Lambda$, its prediction at $x$ is close to $T^\ast_\Lambda (x) = \mathbb{E}[Y\mid X\in A_{\Lambda}(x)]$. While this holds for arbitrary trees, whose leaves contain at least $k$ observations (when Assumption~\ref{dataAssump} is satisfied and $k/(\log T_0)\to \infty$ as $T_0 \to \infty$), we need it to be close to $f(x)$, and this is the reason that we restrict the attention to $(\alpha,k,m)$-forests. Indeed, trees of an $(\alpha, k,m)$-forest do not only have a minimum number $k$ of observations in the leaves, but meet \ref{as1}--\ref{as4} of Section~\ref{consSection}.

\begin{proof}[Proof of Theorem~\ref{consistency}]
	Write $\hat{f}(x) = \frac{1}{B}\sum_{b=1}^BT_{\Lambda_b}(x)$ for suitable $B\geq 1$ and recursive partitions $\Lambda_1,\dots, \Lambda_B$ constructed using the rules \ref{as1}--\ref{as4} of Section~\ref{consSection}. With
	\begin{align*}
	\begin{array}{lcl}
	\delta_1 & = & \sup_{(x,\Lambda)\in \mathcal{X}\times \mathcal{V}_k} \vert T_\Lambda(x) - T^\ast_\Lambda (x)\vert, \\
	\delta_2(x) & = & \max_{b=1,\dots, B} \vert T^\ast_{\Lambda_b}(x)-f(x)\vert,
	\end{array}
	\end{align*}
	we clearly have that
	\begin{equation*}
	\vert \hat{f}(x)-f(x)\vert \leq \delta_1 + \delta_2(x),
	\end{equation*}
	and both $\delta_1$ and $\delta_2(x)$ are bounded by $2M$. Furthermore, Lemma~\ref{treesConcentration} implies that $\delta_1\to 0$ in probability as $T_0 \to \infty$. Now, let $\Lambda = \Lambda_b$ for an arbitrary $b\in \{1,\dots, B\}$, and note that
	\begin{equation*}
	\vert T^\ast_{\Lambda}(x) - f(x) \vert \leq \frac{\mathbb{E}_\Lambda[\vert f(X_1) - f(x)\vert \mathds{1}_{A_\Lambda (x)}(X_1)]}{\mathbb{P}_\Lambda (X_1\in A_{\Lambda}(x))} \leq C \text{diam} (A_\Lambda (x))
	\end{equation*}
	by part~\ref{a4} of Assumption~\ref{dataAssump}. Here $\text{diam}(A) = \sup_{x,x^\prime\in A}\lVert x- x^\prime \rVert$ refers to the diameter of a set $A$. In \citet[Lemma~6]{davis2020rf} it was argued that $\text{diam} (A_\Lambda (x))\to 0$ almost surely, so we conclude $\delta_2(x) \to 0$. Since almost sure convergence implies convergence in probability, \eqref{classicalCons} follows immediately, and the proof is complete.
\end{proof}

\subsection{Consistency of tree-based synthetic control methods}
To prove Theorem \ref{tauConsistency}, we need slightly stronger assumptions than those presented in Assumption~\ref{dataAssump}.
\begin{assump}\label{dataAssump2}
	\begin{enumerate}[label=(B\arabic*)]
	\item\label{aa1} The sequence $(Y^1_t)_{t\in \mathbb{Z}}$ is ergodic, and $\mathbb{E}[\vert Y^1_t\vert ]<\infty$.
	\item\label{aa2} The sequence $(X_t,Y_t^0)_{t\in \mathbb{Z}}$ has exponentiall decaying $\beta$-mixing coefficients, that is, 
	\begin{equation*}
	\beta (t) = \mathbb{E}\bigl[\sup_{B\in \mathcal{F}^t}\vert \mathbb{P}(B\mid \mathcal{F}_0)-\mathbb{P}(B)\vert \bigr]\leq e^{-\gamma t},\qquad t \geq 1,
	\end{equation*}
	for some $\gamma \in (0,\infty)$.
	\item\label{aa3} Parts~\ref{a2}--\ref{a3} of Assumption~\ref{dataAssump} are satisfied.
	\end{enumerate}
\end{assump}

\begin{proof}[Proof of Theorem~\ref{tauConsistency}]
	First observe that 
	\begin{align}\label{tauDecomposition}
	\begin{aligned}
	\hat{\tau}-\tau &= \frac{1}{T-T_0}\sum_{t=T_0+1}^T \bigl(Y^1_t - \mathbb{E}[Y^1_t] \bigr) + \frac{1}{T-T_0}\sum_{t = T_0 + 1}^T \bigl(\mathbb{E}[Y^0_t]-f(X_t)\bigr)\\
	&\quad + \frac{1}{T-T_0}\sum_{t=T_0+1}^T \bigl(f(X_t)-\hat{f}(X_t)\bigr),
	\end{aligned}
	\end{align}
	and the first two terms on the right-hand side of \eqref{tauDecomposition} converge to $0$ as $T-T_0\to \infty$ by Birkhoff's ergodic theorem (cf.\ parts~\ref{aa1} and~\ref{aa2} of Assumption~\ref{dataAssump2}). We will show that the third term converges to $0$ in mean. To do so, we use \citet[Lemma~2.6]{yu1994rates} which implies that
	\begin{equation*}\bigl\vert \mathbb{E}[\vert f (X_t) - \hat{f} (X_t)\vert] - \mathbb{E}[\vert f (X) - \hat{f} (X)\vert]\bigr\vert\leq 2 M e^{-\gamma(t-T_0)}
	\end{equation*}
	for an arbitrary $t\geq T_0 +1$, where $X$ has the same distribution as $X_t$, but is independent of $\mathcal{D}_{T_0}$. Here we have made use of part~\ref{aa2} of Assumption~\ref{dataAssump2} and the boundedness assumption on $Y^0_t$ imposed in Assumption~\ref{dataAssump}. In particular, we obtain the estimate
	\begin{align}\label{L1conv}
	\begin{aligned}
	\mathbb{E}\biggl[ \biggl\vert \frac{1}{T-T_0}\sum_{t=T_0 + 1}^T \bigl(f(X_t) - \hat{f}(X_t) \bigr) \biggr\vert \biggr] & \leq \frac{1}{T-T_0}\sum_{t=T_0+ 1}^T \mathbb{E}[\vert f(X_t) - \hat{f}(X_t) \vert]\\
	&\leq \frac{2M}{(e^{\gamma}-1)(T-T_0)} + \mathbb{E}[\vert \hat{f}(X) - f(X)\vert].
	\end{aligned}
	\end{align}
	Now we invoke Theorem~\ref{consistency} to obtain
	\begin{equation}\label{expBound}
	\mathbb{E}[\vert \hat{f}(X) - f(X)\vert] \leq \mathbb{E}[\delta_1] + \mathbb{E}[\delta_2(X)],
	\end{equation}
	where $\delta_1$ and $\delta_2 (x)$ are given as in the statement of the result. Since $\delta_1$ is bounded and converges to $0$ in probability (as $T_0 \to \infty$), $\mathbb{E}[\delta_1]\to 0$. By Tonelli's theorem we also have that $\delta_2(X)\to 0$ almost surely, and hence $\mathbb{E}[\delta_2 (X)] \to 0$ by boundedness of $\delta_2 (X)$ and Lebesgue's theorem on dominated convergence. By combining \eqref{L1conv} and \eqref{expBound} we conclude that $\frac{1}{T-T_0}\sum_{t=T_0 + 1}^T (f(X_t) - \hat{f}(X_t))$ converges to $0$ in mean, and this concludes the proof.
\end{proof}

\end{document}